\documentclass[a4paper,UKenglish,cleveref, autoref, thm-restate]{lipics-v2021}
\usepackage[T1]{fontenc}
\usepackage{inputenc}
\usepackage{amssymb,amsmath}
\usepackage{mathtools}
\usepackage{graphicx}
\usepackage{color}
\usepackage{todonotes}
\usepackage{xspace}
\usepackage{enumerate}

\hideLIPIcs
\nolinenumbers
\newcommand{\new}[1]{#1}

\renewcommand{\int}{\mathrm{int}}
\renewcommand{\leq}{\leqslant}
\renewcommand{\geq}{\geqslant}
\newcommand{\Line}{\mathrm{line}}
\newcommand{\Reals}{\mathbb{R}}
\newcommand{\conv}{\mathrm{conv}}
\newcommand{\etal}{\emph{et al.}}

\newcommand{\li}[1]{\overline{#1}}

\newcommand{\NP}{\texttt{NP}}
\newcommand{\classP}{\texttt{P}}
\newcommand{\APX}[1]{\texttt{APX}}
\newcommand{\eps}{\varepsilon}
\renewcommand{\angle}{\sphericalangle}

\newcommand{\diam}{\mathrm{diam}}
\newcommand{\dist}{\mathrm{dist}}
\newcommand{\Area}{\mathrm{area}}
\newcommand{\peri}{\mathrm{per}}      

\newcommand{\sol}{C}                        
\newcommand{\opt}{\mbox{{\sc opt}}\xspace}  
\newcommand{\myopt}{\mathrm{opt}}           
\newcommand{\Popt}{\sol_{\myopt}}              
\newcommand{\Gs}{G(\sigma)}                
\newcommand{\Ps}{\sol(\sigma)}                

\newcommand{\ch}{\mbox{\sc ch}}            

\newcommand{\vbot}{v_{\mathrm{bot}}}
\newcommand{\vbotc}{\overline{v}_{\mathrm{bot}}}
\newcommand{\pbot}{p_{\mathrm{bot}}}
\newcommand{\pbotc}{\overline{p}_{\mathrm{bot}}}
\newcommand{\mylength}{\mathrm{length}}
\newcommand{\seq}{\Xi}

\newcounter{ctr}
\loop
 \stepcounter{ctr}
 \expandafter\edef\csname c\Alph{ctr}\endcsname{\noexpand\mathcal{\Alph{ctr}}}
 \expandafter\edef\csname b\alph{ctr}\endcsname{\noexpand\mathbf{\alph{ctr}}}
 \expandafter\edef\csname b\Alph{ctr}\endcsname{\noexpand\mathbf{\Alph{ctr}}}
\ifnum\thectr<26
\repeat

\renewcommand{\bd}{\partial}

\newenvironment{myquote}%
  {\list{}{\leftmargin=4mm\rightmargin=4mm}\item[]}%
  {\endlist}
\newenvironment{claiminproof}{\begin{myquote}\noindent\emph{Claim.}}{\end{myquote}}
\newenvironment{proofinproof}{\begin{myquote}\noindent\emph{Proof.}}{\hfill $\lhd$ \end{myquote}}


\title{Computing Smallest Convex Intersecting Polygons}
\funding{Mark de Berg is supported by the  Dutch Research Council (NWO) through    Gravitation-grant NETWORKS-024.002.003.}
\author{Antonios Antoniadis}{Department for Applied Mathematics, University of Twente, the Netherlands}{a.antoniadis@utwente.nl}{}{}
\author{Mark de Berg}{Department of Mathematics and Computer Science, TU Eindhoven, the Netherlands}{M.T.d.Berg@tue.nl}{https://orcid.org/0000-0001-5770-3784}{}
\author{S\'andor Kisfaludi-Bak}{Department of Computer Science, Aalto University, Espoo, Finland}{sandor.kisfaludi-bak@aalto.fi}{https://orcid.org/
0000-0002-6856-2902}{}
\author{Antonis Skarlatos}{Department of Computer Science, University of Salzburg, Austria}{antonis.skarlatos@plus.ac.at}{https://orcid.org/
0000-0002-7623-9419}{Part of the work was done during an internship at the Max Planck Institute for Informatics in Saarbr\"ucken, Germany.}
\authorrunning{A. Antoniadis, M. de Berg, S. Kisfaludi-Bak and A. Skarlatos}

\Copyright{Antonios Antoniadis, Mark de Berg, S\'andor Kisfaludi-Bak, Antonis Skarlatos}

\index{Antoniadis, Antonios}
\index{de Berg, Mark}
\index{Kisfaludi-Bak, S\'andor}
\index{Skarlatos, Antonis}

\ccsdesc[100]{Theory of computation~Design and analysis of algorithms}

\keywords{convex hull, imprecise points, computational geometry}

\EventEditors{John Q. Open and Joan R. Access}
\EventNoEds{2}
\EventLongTitle{42nd Conference on Very Important Topics (CVIT 2016)}
\EventShortTitle{CVIT 2016}
\EventAcronym{CVIT}
\EventYear{2016}
\EventDate{December 24--27, 2016}
\EventLocation{Little Whinging, United Kingdom}
\EventLogo{}
\SeriesVolume{42}
\ArticleNo{23}

\begin{document}
\maketitle

\begin{abstract}
A polygon $\sol$ is an \emph{intersecting polygon} for a set $\cO$ of objects in $\Reals^2$
if $\sol$ intersects each object in~$\cO$, \new{where the polygon includes its interior}.
We study the problem of computing the minimum-perimeter intersecting polygon and the minimum-area convex intersecting polygon for a given set $\cO$ of objects. 
We present an FPTAS for both problems for the case where $\cO$ is a set of possibly intersecting convex polygons in the plane of total complexity~$n$.
      
Furthermore, we present an exact polynomial-time algorithm for the minimum-perimeter intersecting polygon for the case where $\cO$ is a set of $n$ possibly intersecting segments in the plane.  So far, polynomial-time exact algorithms were only known for the minimum perimeter intersecting polygon of lines or of disjoint segments.
\end{abstract}

\section{Introduction}
Convex hulls are among the most fundamental objects studied in computational geometry.
In fact, the problem of designing efficient algorithms to compute the convex hull
of a planar point set~$\cO$---the smallest convex set containing~$\cO$---is one of the problems that started 
the field~\cite{jarvis1973identification,preparata1977convex}.
Since the early days, the problem has been studied extensively, resulting in
practical and provably efficient algorithms, in the plane as well as in higher dimensions; 
see the survey by Seidel~\cite[Chapter 26]{toth2017handbook} for an overview.

A natural generalization is to consider convex hulls for a collection~$\cO$ 
of geometric objects (instead of points) in $\Reals^2$. Note that the convex hull of a set
of polygonal objects is the same as the convex hull of the vertices of the objects.
Hence, such convex hulls can be computed using algorithms for computing the convex hull of a point set.
A different generalization, which leads to more challenging algorithmic questions, is
to consider the smallest convex set that~\emph{intersects} all objects in $\cO$. Thus,
instead of requiring the convex set to fully contain each object from~$\cO$, we
only require that it has a non-empty intersection with each object.

Notice that in case of points, the ``smallest'' set is well-defined: if convex sets $C_1$ and $C_2$ 
both contain a point set $\cO$, then $C_1\cap C_2$ also contains~$\cO$. Hence, the convex hull
of a point set $\cO$ can be defined as the intersection of all convex sets containing~$\cO$.
When $\cO$ consists of objects, however, this is no longer true, and the term ``smallest'' is ambiguous. 
In the present paper we consider two variants: given a set $\cO$ of possibly intersecting convex polygons in $\Reals^2$
of total complexity~$n$, find a convex set of minimum perimeter that intersects all objects in~$\cO$, 
or a convex set of minimum area that intersects all objects in~$\cO$.

Observe that a minimum-perimeter connected intersecting set~$C$ for~$\cO$ must be a convex polygon.
To see this, observe that for any object $o\in\cO$ we can select a point~$p_o\in o\cap C$, 
and take the convex hull of these points; the result is a feasible convex polygon whose perimeter 
is no longer than that of $C$. Thus the convexity of the solution could be omitted from
the problem statement. This contrasts with the minimum-area problem, where there is always 
an intersecting polygon of zero area, namely, a tree. The convexity requirement is therefore
essential in the problem statement. Note that it is still true that the minimum-area convex
intersecting set is a polygon: \new{given a convex solution, we can again take the convex hull of the points $p_o$ and get a feasible solution whose area is not greater than the area of the initial convex solution}.
We also remark that the two problems typically have different optima. 
If $\cO$ consists of the three edges of an equilateral triangle, then the minimum-area solution 
is a line segment (that is, a degenerate polygon of zero area), whereas the minimum-perimeter solution 
is the triangle whose vertices are the midpoints of the edges.

The problem of computing minimum-area or minimum-perimeter convex intersecting polygons,
as well as several related problems, have already been studied. 
Dumitrescu and Jiang~\cite{DumitrescuJ12} considered the minimum-perimeter intersecting polygon problem. 
They gave a constant-factor approximation algorithm as well as a PTAS for the case when the objects
in~$\cO$ are segments or convex polygons. They achieved a running time of $n^{O(1)}/\eps+2^{O(1/\eps^{2/3})}n$. 
They also prove that computing a minimum-perimeter intersecting polygon for a set
$\cO$ of non-convex polygons (or polygonal chains) is \NP-hard. For convex input objects,
however, the hardness proof fails. Hence, Dumitrescu and Jiang  ask the following question. 
\begin{quote}
    \textbf{Question 1.} Is the problem of computing a minimum-perimeter intersecting polygon of a set of segments \NP-hard?
\end{quote}
In case of \emph{disjoint} segments,  
a minimum-perimeter intersecting polygon can be found in polynomial time~\cite{javad2010convex,jia2017minimum}, but for intersecting segments the question is still open.

The problem of computing smallest intersecting polygons for a set~$\cO$ of objects has also
been studied in works on \emph{imprecise points}. Now the input is a set of points,
but the the exact locations of the points are unknown. Instead, for each point one is given 
a region where the point can lie. One can then ask questions such as: what is the largest possible
convex hull of the imprecise points? And what is the smallest possible convex hull?
If we consider the objects in our input set~$\cO$ as the regions for the imprecise points, then the latter
question is exactly the same as our problem of finding smallest intersecting convex sets.
In this setup both the minimum-perimeter and minimum-area problem have been considered,
for sets $\cO$ consisting of convex regions of total complexity~$n$. 
There are exact polynomial-time algorithms for minimum (and maximum) perimeter and area,
for the special case where $\cO$ consists of horizontal line segments or axis-parallel squares~\cite{LofflerK10}. 
Surprisingly, some of these problems are \NP-hard, such as the \emph{maximum}-area/perimeter problems for segments. 
This gave rise to the study of approximation algorithms and approximation schemes~\cite{KreveldL08}. 

In some cases, the minimum-perimeter problem can be phrased as a travelling salesman problem with neighborhoods (TSPN). 
Here the goal is to find the shortest closed curve intersecting all objects from the given set~$\cO$. 
In general, an optimal TSPN tour need not be convex, but one can show that 
in \new{the} case of lines or rays, an optimal tour is always convex: if a convex polygon
intersects a line (or a ray) then its boundary intersects the line (resp.~the ray).
Therefore, computing a minimum-perimeter intersecting polygon of lines (or rays) is the same problem as TSPN with line neighborhoods (resp.~ray neighborhoods). 
TSPN of lines in $\Reals^2$ admits a polynomial-time algorithm~\cite{tour_seq_polyg}. 
In higher dimensions, TSPN has a PTAS for hyperplane neighborhoods~\cite{AntoniadisFHS20}, 
but notice that this is not the natural generalization of the minimum-intersecting polygon
problem. Tan~\cite{tan_tour_rays} proposed an exact algorithm for TSPN of rays in~$\Reals^2$, 
but there seems to be an error in the argument; see Appendix~\ref{app:counterexample}. 
At the time of writing this article, we believe that a polynomial-time algorithm for TSPN of rays is not known, 
but there is a \new{constant-factor approximation} algorithm
due to Dumitrescu~\cite{dumitrescu2012approx}, as well as a PTAS~\cite{DumitrescuJ12}.

\subparagraph*{Our results.}
In order to resolve Question 1, we first need to establish a good structural
understanding and a dynamic programming algorithm. It turns out that the algorithm
can also be used for approximation. We give dynamic-programming-based approximation 
schemes for the minimum-perimeter and 
minimum-area convex intersecting polygon problems. 
Our first algorithm is a fully polynomial time approximation scheme (FPTAS) 
for the minimum-perimeter problem of arbitrary convex objects of total complexity~$n$.
\begin{theorem}\label{thm:2d-fptas-perimeter}
Let $\cO$ be a set of convex polygons of total complexity $n$ in $\Reals^2$ and let $\opt$ be the
minimum perimeter of an intersecting convex polygon for $\cO$. For any given $\eps>0$,
we can compute an intersecting polygon for $\cO$ whose perimeter is at most $(1+\eps)\cdot\opt$,
in $O(n^{2.373}/\eps + n/\eps^8)$ time.
\end{theorem}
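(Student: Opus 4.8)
The plan is to turn the problem into a discrete optimization over a search space of size polynomial in $n$ and $1/\eps$, and to solve that by dynamic programming. As a preprocessing step I would run a known constant-factor approximation (for instance the one of Dumitrescu and Jiang~\cite{DumitrescuJ12}) and rescale so that $\opt=\Theta(1)$; then every optimal polygon $\Popt$ has diameter at most $\opt/2$ and, since it shares an object with the constant-factor solution, is contained in a disk $D$ of radius $O(1)$. Inside $D$ I place an axis-parallel grid $\Lambda$ and reduce to the task of finding the minimum-perimeter convex polygon whose vertices lie on $\Lambda$ and which meets every (slightly enlarged) object of $\cO$.

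Two structural reductions justify working on $\Lambda$. First, $\Popt$ may be replaced by a circumscribed polygon $P'$ with only $O(1/\sqrt{\eps})$ vertices lying within Hausdorff distance $\eps\cdot\opt$ of $\Popt$; since $\Popt\subseteq P'$ the polygon $P'$ still meets every object, and since $P'$ is a convex set nested between $\Popt$ and its $(\eps\opt)$-neighbourhood, monotonicity of perimeter together with the Steiner formula gives $\peri(P')\leq \opt+2\pi\eps\opt=(1+O(\eps))\opt$. Second, snapping the $O(1/\sqrt\eps)$ vertices of $P'$ to $\Lambda$ perturbs the perimeter by $O(\text{spacing}/\sqrt\eps)$, which is $O(\eps\opt)$ once the grid spacing is $\Theta(\eps^{3/2}\opt)$, so $\Lambda$ has $\eps^{-O(1)}$ points. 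Snapping can make a polygon lose an object it only grazed, so the dynamic program is run against the objects enlarged by a disk of radius twice the grid spacing, and at the end the computed grid polygon $Q$ is replaced by its Minkowski enlargement $Q^{\oplus O(\eps\opt)}$ (rounded to a polygon); this restores feasibility for the original $\cO$ and adds only another $O(\eps\opt)$ to the perimeter. Collecting the additive errors and rescaling $\eps$ gives the $(1+\eps)$ bound.

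The core is the dynamic program computing the minimum-perimeter convex polygon with vertices on $\Lambda$ meeting all enlarged objects. I would guess the leftmost and rightmost vertices $\vleft,\vright$ of this polygon — there are $\eps^{-O(1)}$ grid pairs to try — which marks its boundary as an $x$-monotone upper chain from $\vleft$ to $\vright$ followed by an $x$-monotone lower chain back, and splits $\cO$ into an upper family $\cOtop$ and a lower family $\cObot$ (a split the dynamic program resolves together with a canonical ordering of the objects). The DP builds the closed chain one vertex at a time, maintaining the current grid vertex, the direction of the last edge (to enforce convex turning and to detect the switch to the lower phase at $\vright$), and a single pointer recording how far the chain has progressed through the objects it must meet in that canonical order; a shortest admissible $\vleft$-to-$\vright$-to-$\vleft$ walk in the resulting state graph gives the optimal polygon for the guessed $\vleft,\vright$, and the best over all guesses is the answer. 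The running time is linear in $n$ precisely because object coverage is tracked by one monotone pointer rather than an arbitrary subset, and the $n^{2.373}$ factor comes from evaluating the chain-extension/composition steps in bulk with fast matrix multiplication.

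The step I expect to be the real obstacle is exactly the structural claim underlying that pointer: that a monotone convex chain is forced to ``take care of'' the objects assigned to it in one canonical order even when the input polygons overlap in $x$, so that progress can be summarised by $O(\log n)$ bits; making this precise requires an exchange argument showing that if a chain meets a family of objects then it can meet them in, say, the order of the $x$-coordinate of a suitably chosen contact point, and that postponing an object never helps. Secondary obstacles, all routine but needing care, are the quantitative approximation lemma of the first reduction (a circumscribed $O(1/\sqrt\eps)$-gon within Hausdorff distance $\eps\opt$ of a convex body of diameter $O(\opt)$), ruling out that the DP ever needs to remember more than a bounded window of objects at once, correctly handling the assignment of objects straddling the base line $\li{\vleft\vright}$ to $\cOtop$ versus $\cObot$, and the degenerate configurations — segments and single points among the input polygons, objects lying in the interior of the solution, and near-vertical optima where $\vleft$ and $\vright$ almost coincide.
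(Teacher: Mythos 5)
Your discretization step is broadly in the spirit of the paper, though the paper does it more robustly: rather than snapping vertices of a sparsified polygon (which can break feasibility and forces you to enlarge the objects and re-inflate the output), it takes the convex hull of all grid cells meeting $\Popt$, which \emph{contains} $\Popt$ and is therefore automatically feasible, and bounds its perimeter by $\peri(\Popt\oplus Z)=\peri(\Popt)+\peri(Z)$ for a small square $Z$. A grid of side length $(\eps/8)\opt$ inside the square $\sigma$ produced by the Dumitrescu--Jiang subroutine suffices; no $O(1/\sqrt{\eps})$-gon approximation is needed. (Also, your justification for why $\Popt$ lies near the constant-factor solution --- ``it shares an object'' --- is not a proof, since shared objects can be arbitrarily long; this containment is exactly Lemma~3 of Dumitrescu and Jiang, which requires an argument.)

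The genuine gap is the one you yourself flag: the dynamic program. Your state carries ``a single pointer recording how far the chain has progressed through the objects it must meet in that canonical order,'' and you concede that the existence of such a canonical order is the real obstacle. It is, and your sketch of how to obtain it (an exchange argument ordering objects by the $x$-coordinate of a contact point along $x$-monotone upper/lower chains) does not work as stated: which objects a partial chain is \emph{responsible for} is not a prefix of any fixed ordering of $\cO$, because for an object that straddles the current frontier the assignment depends on the direction of the \emph{last edge} of the chain, not merely on how far the chain has advanced. The paper's resolution is different and requires no ordering of objects and no pointer in the state. It indexes subproblems by the last edge $vw$ of the partial chain (with $\vbot$ the guessed lexicographically smallest vertex) and defines, purely geometrically, the set $\cO(v,w)$ of objects that \emph{must} be met by the partial solution: those meeting the wedge between $\rho_0$ and the ray $\rho(w)$ but not $\rho(w)$, those meeting the segment $\vbot w$, and those meeting the ray beyond $w$ whose tangent line at the first intersection point crosses the supporting half-line of $vw$. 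Lemma~\ref{lem:fptas-correctness} then shows that \emph{any} feasible convex polygon with bottom vertex $\vbot$ and edge $vw$ splits $\cO$ exactly along this set, so the DP transition only has to check that $\cO(v,w)\setminus\cO(u,v)$ is met by the triangle $\Delta(\vbot,v,w)$. Without this (or an equivalent) separation lemma your algorithm is not specified, so the proof is incomplete. A minor further error: the $n^{2.373}$ term in the running time comes from the LP solver used in the location step, not from fast matrix multiplication inside the DP.
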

This is a vast improvement over the PTAS given by Dumitrescu and Jiang~\cite{DumitrescuJ12}, 
as the dependence on $1/\eps$ is only polynomial in our algorithm. Our approximation algorithms work 
in a word-RAM model, where input polygons are defined by the coordinates of their vertices, 
and where each coordinate is a word of $O(\log n)$ bits.

We also get a similar approximation scheme for the minimum area problem, albeit with a slower running time. 
Here we rely more strongly on the fact that the objects of $\cO$ are convex \emph{polygons}, 
and an extension to (for example) disks is an interesting open question.  The minimum-perimeter FPTAS needs to be adapted to the minimum-area setting.
\begin{theorem}\label{thm:2d-fptas-area}
Let $\cO$ be a set of convex polygons of total complexity $n$ in $\Reals^2$ and let $\opt$ be the
minimum area of an intersecting convex polygon for $\cO$. For any given $\eps>0$,
we can compute a convex intersecting polygon for $\cO$ whose area is at most $(1+\eps)\cdot\opt$,
in $O(n^{17}\log(1/\eps) + n^{11}/\eps^{24})$ time.
\end{theorem}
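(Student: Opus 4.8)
The plan is to retarget the dynamic-programming machinery behind Theorem~\ref{thm:2d-fptas-perimeter} from the first-order (length) objective to the second-order (area) objective, in three steps. First, dispose of degenerate optima and set up scaling. Note that $\opt=0$ precisely when some line meets every polygon of $\cO$; a line transversal can be found, or ruled out, in polynomial time by sweeping the direction of the line and tracking, for each polygon, the interval of offsets at which the line hits it. If such a line exists we output a point or short segment on it and are done exactly, since then $(1+\eps)\cdot\opt=0$. Otherwise $\opt>0$, and because each input coordinate is an $O(\log n)$-bit word we have $\opt\geq 1/\poly(n)$ and some optimal polygon lies in an axis-aligned box of size $\poly(n)$; after guessing the order of magnitude of $\opt$ (the source of the $\log(1/\eps)$ factor) we may restrict attention to vertices on a grid of spacing $\Theta(\eps\cdot\opt/\poly(n))$, which perturbs the area of any candidate solution by at most a $(1+O(\eps))$ factor.

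The second step is a structural lemma: there is a convex intersecting polygon $\sol'$ for $\cO$ with $\Area(\sol')\leq(1+\eps)\opt$ whose vertices lie in a $\poly(n,1/\eps)$-size candidate set --- vertices of the polygons of $\cO$, intersection points of the lines supporting their edges, and grid points --- and whose boundary admits a bounded-width description relative to a left-to-right sweep: walking along $\partial\sol'$ from its leftmost vertex $\vleft$ to its rightmost vertex $\vright$, every edge is anchored by only a constant number of polygons of $\cO$ (each being tangent to the edge or lying immediately outside it), and at every sweep position the set of objects not yet met has a constant-complexity summary. This is where area genuinely diverges from perimeter: the perimeter proof shortens detours edge by edge and is monotone under replacing each object's contact set by its convex hull, whereas area is a global functional of all edges at once and the feasible set of convex intersecting polygons is not convex, so the local-replacement arguments underlying Theorem~\ref{thm:2d-fptas-perimeter} must be carried out through the shoelace/triangulation identity, at the cost of a considerably finer discretization of edge directions and positions.

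The third step is the dynamic program itself. Guess $\vleft$ and $\vright$, and sweep a vertical line between them while simultaneously growing the upper chain $\Pstop$ and the lower chain $\Psbot$ of the solution. A state $\sigma$ stores the last vertex placed on each chain, the constantly many objects currently anchoring those two edges, and the constant-complexity summary of still-unmet objects supplied by the structural lemma; by convexity nothing else about the past affects future choices, so $\sigma$ ranges over $O(1)$-tuples over the candidate vertex set and $\cO$, giving $\poly(n,1/\eps)$ states. The value carried along the DP is the running area between the two partial chains and the segment $\li{\vleft\vright}$, which, by the shoelace identity, decomposes additively over the edges as they are appended. A transition is $\legal$ when it extends one chain to a candidate vertex while preserving convexity and all anchoring and feasibility invariants, the $\base$ case starts both chains at $\vleft$, and the output is the least accumulated area over states that close both chains at $\vright$; rounding this accumulated area to integer multiples of $\Theta(\eps\cdot\opt/\poly(n))$ keeps the error propagation under control, and combined with Step~1 and the size of the state space this gives the claimed running time.

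The main obstacle is the structural lemma of the second step: showing that a near-optimal convex intersecting polygon exists whose vertices are drawn from a small explicit set \emph{and} whose interaction with $\cO$ is summarized by a constant-width sweep state, despite area being a non-convex and globally coupled objective. Handling second-order changes of area when a single vertex is perturbed --- rather than the first-order changes of length that suffice for Theorem~\ref{thm:2d-fptas-perimeter} --- is precisely what forces both the large exponent in $n$ and the $\eps^{-24}$ dependence in the running time.
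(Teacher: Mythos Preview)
Your proposal has a genuine gap in Step~2/Step~3: the ``constant-complexity summary of still-unmet objects'' that your DP state relies on is never defined, and there is no reason to believe one exists. For a partial convex chain, the set of input polygons not yet intersected can be an arbitrary subset of $\cO$, and a left-to-right sweep that only remembers $O(1)$ objects per chain cannot recover it. The paper avoids this by a completely different device: a subproblem is indexed not by a sweep position but by the \emph{last edge} $vw$ of the chain, and Lemma~\ref{lem:fptas-correctness} shows that the set $\cO(v,w)$ of objects that must already be intersected is determined solely by that edge (via the wedge at $\vbot$ and the tangent condition along $\rho^*(w)$). No ``bounded-width'' structural lemma of the kind you postulate is needed or proved.

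Your account of where the two running-time terms come from is also off, and this matters because it reveals that the approach you sketch would not produce those bounds. In the paper the two summands arise from a dichotomy you do not mention: either some $(1+\eps)$-approximate optimum has at most $8$ vertices, in which case one guesses the $\le 8$ supporting input edges and decides an existential first-order formula with $\le 8$ real variables by the Basu--Pollack--Roy algorithm, binary-searching on the target area (this is the source of the $O(n^{17}\log(1/\eps))$ term and of the $\log(1/\eps)$ factor --- not ``guessing the order of magnitude of $\opt$''); or the optimum has $\ge 9$ vertices, in which case Lemma~\ref{lem:minvolumevertexlocation} forces $\ge 4$ arrangement vertices on its boundary, their circumscribed ellipse gives an affine normalisation, and one obtains $O(n^{10})$ candidate grids each of size $O(1/\eps^3)\times O(1/\eps^3)$, on which the area DP (with $\Area(\Delta(\vbot,v,w))$ replacing $|vw|$) runs. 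Your single-grid idea with spacing $\Theta(\eps\cdot\opt/\poly(n))$ is not wrong in spirit, but it neither yields these exponents nor explains the $n^{17}\log(1/\eps)$ term; and the thin-polygon issue that forces the paper's ellipse trick (an optimum can have diameter $\poly(n)$ and width $1/\poly(n)$) is exactly what makes the grid-size accounting delicate and is not addressed in your write-up.
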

We remark that both Theorem~\ref{thm:2d-fptas-perimeter} and Theorem~\ref{thm:2d-fptas-area} work if the input has \emph{polytopes} instead of polygons, that is, when each object is the intersection of some half-planes.

While the dynamic programming algorithm developed above is crucial to get an exact algorithm, we are still several steps from being able to resolve Question~1.
The main challenge here is that the vertices of the optimum intersecting polygon can be located 
at arbitrary boundary points in $\cO$, and there is no known way to discretize the problem. 
We introduce a subroutine that uses an algorithm of Dror~\etal~\cite{tour_seq_polyg} 
to compute parts of the minimum-perimeter intersecting polygon that contain no vertices of input objects. 
We are able to achieve a polynomial-time algorithm (on a real-RAM machine) for the minimum perimeter 
intersecting polygon problem only when the objects are line segments.
\begin{theorem}\label{thm:2d-exact}
Let $\cO$ be a set of $n$ line segments in the plane. Then we can compute a 
minimum-perimeter intersecting polygon for $\cO$ in $O(n^{9}\log n)$ time. 
\end{theorem}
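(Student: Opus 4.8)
\subparagraph*{Proof plan.}

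Let $\Popt$ be a minimum-perimeter intersecting polygon for $\cO$. By the observation in the introduction it is convex, and by replacing it with the convex hull of one contact point per segment we may assume every vertex of $\Popt$ lies on some segment of $\cO$. Call a vertex \emph{pinned} if it coincides with an endpoint of an input segment and \emph{floating} otherwise; a floating vertex $v$ lies in the relative interior of some segment $s$, whose supporting line we denote $\ell(s)$. The plan has three parts: (i) show that, near a floating vertex, $\bd\Popt$ behaves like a shortest path reflecting off a sequence of lines, so that each maximal arc of $\bd\Popt$ between two consecutive pinned vertices is (essentially) an instance of the touring-polygons problem of Dror~\etal~\cite{tour_seq_polyg}; (ii) build up the combinatorial skeleton of $\Popt$ --- the cyclic order of pinned vertices and, for each gap between consecutive pinned vertices, the ordered list of segments met by the arc in that gap --- by a modification of the dynamic program behind Theorem~\ref{thm:2d-fptas-perimeter}; and (iii) fill in each gap \emph{exactly} with a touring-polygons subroutine, and return the shortest feasible polygon found.

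For step (i): if $v$ is a floating vertex on $s$ and $\ell(s)$ were not a supporting line of $\Popt$, then $\ell(s)$ would enter the interior of $\Popt$, so $\Popt\cap s$ would be a subsegment of positive length, and the two edges of $\Popt$ at $v$ could be replaced by a short straight shortcut without destroying any intersection --- contradicting minimality. Hence $\ell(s)$ supports $\Popt$ (so all of $\Popt$ lies on one side of $\ell(s)$), and the two edges at $v$ make equal angles with $\ell(s)$, as otherwise the arc through $v$ could be shortened by sliding $v$ along $s$ while adjusting the other floating vertices of that arc. Since for a convex polygon a tangent line meets the boundary in one point or along one edge, each segment supports at most one floating vertex, so the per-gap lists of segments are pairwise disjoint. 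Consequently, if $p,p'$ are consecutive pinned vertices and $s_1,\dots,s_k$ are the segments supporting the floating vertices between them, then the arc of $\bd\Popt$ from $p$ to $p'$ is the shortest \emph{convex} path from $p$ to $p'$ that visits $s_1,\dots,s_k$ in this order with first and last edge in prescribed directions; moreover, since each touch point is in a relative interior, this shortest path is the (essentially unique) reflecting path, which is exactly what the algorithm of Dror~\etal\ computes, once it is extended so that its shortest-path-map dynamic program also tracks the direction of the current edge and forbids non-convex turns. I would develop this extension as the announced subroutine and show it still runs in polynomial time.

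For steps (ii)--(iii): $\Popt$ has at most $2n$ vertices, hence $O(n)$ pinned vertices, each an endpoint of an input segment, so enumerating subsets of endpoints directly is too expensive. Instead I would run the dynamic program underlying Theorem~\ref{thm:2d-fptas-perimeter}, which already sweeps the boundary of the convex solution by increasing edge direction and maintains in its state which input objects have been intersected so far; the only change is that between two consecutive pinned vertices we splice in the exact shortest convex arc produced by the subroutine for the corresponding ordered list of segments, in place of the discretised chain used in the FPTAS. A segment not serving as a supporting segment of any arc must either pass through a pinned vertex or lie in the interior of $\Popt$, and the coverage bookkeeping (extended from the FPTAS version) records this. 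For the skeleton of $\Popt$ the subroutine reconstructs the arcs of $\Popt$ exactly, so the candidate polygon is $\Popt$ itself, feasible and of perimeter $\opt$; and every candidate the procedure outputs is a feasible intersecting polygon, so the shortest one found is optimal. The all-floating case (such as the medial triangle of three segments, which has no pinned vertex) is handled by additionally guessing a starting edge to cut the cyclic tour open, and degenerate optima (a point or a single segment) are treated separately. A careful accounting of the $O(n)$ choices per sweep step, the number of dynamic-programming states, and the cost of each subroutine call yields the $O(n^{9}\log n)$ bound.

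The main obstacle, I expect, is not the dynamic program but making step (i) and its interface with the algorithm of Dror~\etal\ precise: one must rule out or carefully accommodate degenerate configurations --- parallel or collinear segments, an edge of $\Popt$ lying along an input segment, an arc forced to touch a segment it would rather cross, a shrunken arc that might expose a segment meant to lie in the interior --- verify that the tangency-and-reflection characterisation and the uniqueness of the reflecting path survive them, and check that the ``shortest convex touring path'' instances that arise are of a form the mildly extended algorithm of Dror~\etal\ solves in polynomial time while guaranteeing convexity and the prescribed end-directions. Reconciling the ordering constraint of the touring problem with the convexity and side constraints is the delicate point; the remainder is careful but routine bookkeeping.
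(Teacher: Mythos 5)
Your high-level architecture matches the paper's: a touring-polygons subroutine for the endpoint-free arcs, spliced into a dynamic program over segment endpoints. But the two places you defer are exactly where the paper's key ideas live, and as written they are genuine gaps rather than routine bookkeeping.

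First, your interface with the algorithm of Dror~\etal. You propose to run it on the \emph{segments} $s_1,\dots,s_k$ of a gap, in an order you do not explain how to determine, and to \emph{modify} the algorithm so that it tracks edge directions and forbids non-convex turns. Determining the order is not free: without a canonical order there are factorially many candidate orderings per gap, and your structural step (i) only says the segments are visited ``in this order'' along the unknown optimal arc. The paper avoids both problems at once by replacing each segment $o_i$ with a closed half-plane $H_i$ bounded by $\Line(o_i)$ and proving (Lemma~\ref{lem:ordering}) that \emph{every} convex chain from $u$ to $v$ visits these half-planes in the fixed order of their normal directions. This order is computable up front, so the TPP algorithm can be invoked as a black box with no convexity modification; convexity and optimality of the black-box output are then recovered via the uniqueness of the TPP shortest path and a convex-combination/perturbation argument (Lemmas~\ref{lem:convexcombo} and~\ref{lem:endpointless}), with an a posteriori feasibility check discarding bad candidates. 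Your proposed modification of the shortest-path map to enforce convexity and prescribed end-directions is not known to work and is not needed; flagging it as ``the delicate point'' does not discharge it.

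Second, your dynamic program. Saying that ``the coverage bookkeeping (extended from the FPTAS version) records'' which segments each gap must handle skips the central difficulty the paper identifies: in the FPTAS the subproblem is indexed by the last \emph{edge} $vw$, whose direction determines the set $\cO(v,w)$ to be covered; here the last edge of the chain arriving at an endpoint $w$ is an unknown inner edge of the gap, which itself depends on which segments the partial solution must cover. The paper breaks this circularity by indexing subproblems with angular \emph{prefix sets} $\cO(w,j)$ (only $O(n)$ per endpoint) and by additionally guessing a range $[\psi_j,\psi_{j+1}]$ for the tangent orientation at the base vertex $\pbot$ to split the segments crossing $\rho_0$ between the two sides of the solution. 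Your proposal contains no mechanism playing either role, so the DP as described is not well defined. The claimed $O(n^9\log n)$ bound is therefore asserted rather than derived; in the paper it falls out of $O(n^2)$ choices of $(\pbot,j)$ times $O(n^2)$ table entries times $O(n^2)$ transitions times the $O(n^3\log n)$ cost of Subroutine~II.
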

If $\classP\neq \NP$, then this gives a direct negative answer to Question 1. 
The theorem also extends to the case of rays (this is the scenario studied by Tan~\cite{tan_tour_rays}; see Appendix~\ref{app:counterexample}.
\subparagraph*{Our techniques.}
Our approximation algorithms both compute an approximate solution whose vertices are 
from some fine grid. To determine a suitable grid resolution, we need to be able to compute 
lower bounds on $\opt$, which is non-trivial. It is also non-trivial to know
where to place the grid, such that it is guaranteed to contain an approximate solution. 
The problem is that our lower bound gives us the location of a solution that is a constant-factor
approximation, but this can be far from the location of a $(1+\eps)$-approximation.
Hence, for the minimum-area problem we generate a collection of grids, one of which is 
guaranteed to contain a $(1+\eps)$-approximate solution. Finally, we face some further difficulties 
since a square grid may be insufficient: the optimum intersecting polygon may be extremely 
(exponentially) thin and long, and of area close to zero. In such cases there is no square grid of 
polynomial size that would contain a good solution. These problems are resolved in Section~\ref{sec:location}.

Section~\ref{sec:FPTAS} presents our dynamic programming algorithm for minimum perimeter. In the dynamic programming the main technical difficulty lies in the fact that it is not clear what subset of objects should be visited in each subproblem.
A portion of the optimum's boundary could in principle be tasked with intersecting an arbitrary subset of $\cO$, while some of the objects in $\cO$ need not be intersected by the optimum boundary and will simply be covered by the interior of the optimum intersecting polygon: a na\"ive approach therefore would not yield a polynomial-time algorithm. Our carefully designed subproblems have a clear corresponding set of objects to ``visit'', using orderings of certain tangents of input objects for this purpose. The minimum area problem uses a similar dynamic program, see Section~\ref{sec:area-fptas}
for its details.

Finally, in order to present our exact algorithm in Section~\ref{sec:exact}, we need to modify our dynamic program to deal with subproblems where the vertices of a convex chain do not come from a discretized set. In such cases, we have to find the order in which the objects of $\cO$ are visited by the chain. We are able to prove a specific ordering only in the case when the objects are line segments. The order then allows us to invoke the algorithm of Dror~\etal~\cite{tour_seq_polyg} in a black-box manner.


\section{Locating an optimal solution}
\label{sec:location}
The algorithms to be presented in subsequent sections need to approximately know 
the size and location of a smallest intersecting polygon. 
We use an algorithm from~\cite{DumitrescuJ12} to locate the minimum-perimeter intersecting polygon. With respect to the minimum-area intersecting polygon we prove that either there is a solution with a constant number of vertices (that can be computed with a different algorithm), or it is sufficient to consider polygons whose vertices are from a grid which comes from a polynomial collection of different grids.

\subparagraph*{Locating the minimum-perimeter optimum.}
For the minimum-perimeter intersecting polygon of a set $\cO$ of convex objects, Dumitrescu and Jiang~\cite{DumitrescuJ12} present an algorithm~$A1$ that, for
a given~$\eps_1>0$, outputs a rectangle $R$ intersecting all input objects $\cO$ and with perimeter at most $\frac{4}{\pi}(1+\eps_1)\opt$. At a high level, $A1$ guesses an orientation of the rectangle among $\lceil\frac{\pi}{4\eps_1}\rceil$ many discrete orientations and then uses a linear program to identify the smallest perimeter rectangle of that orientation that intersects $\cO$. In~\cite{DumitrescuJ12} it is described how Algorithm~$A1$ is used to locate an optimal solution if the input objects are convex polygons. In particular, for any $\eps>0$ running $A1$ with $\eps_1=\frac{\eps}{2+\eps}$ gives a rectangle $R$. 
Let $\sigma$ be the square that is concentric and parallel to $R$ and has a side length of $3\cdot\peri(R)$. Then the following holds.
\begin{lemma}[Lemma~$3$ in~\cite{DumitrescuJ12}]
Suppose that $\peri(R)\ge (1+\eps)\opt$. Then there is an optimum polygon $\Popt$ that is covered by $\sigma$.
\label{lem:dum_locate}
\end{lemma}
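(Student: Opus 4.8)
The plan is to reduce the statement to showing that \emph{some} optimal intersecting polygon meets the rectangle $R$ itself, and then to deduce containment in $\sigma$ by an elementary diameter estimate; the construction of an optimal polygon meeting $R$ (by sliding an arbitrary optimum toward $R$) will be the crux.

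First the routine half. Suppose $C^{*}$ is an optimal intersecting polygon with $C^{*}\cap R\neq\emptyset$. Since $C^{*}$ is convex, its boundary splits into two arcs between any diametral pair, so $\diam(C^{*})\leq\tfrac12\peri(C^{*})=\tfrac12\opt$, and the hypothesis $\peri(R)\geq(1+\eps)\opt$ gives $\diam(C^{*})\leq\tfrac12\peri(R)$; likewise $\diam(R)\leq\tfrac12\peri(R)$ for a rectangle. Let $c$ be the common centre of $R$ and $\sigma$ and pick $x\in C^{*}\cap R$, so $|xc|\leq\tfrac12\diam(R)$. Then for every $y\in C^{*}$,
\[
  |yc|\;\leq\;|yx|+|xc|\;\leq\;\diam(C^{*})+\tfrac12\diam(R)\;\leq\;\tfrac12\peri(R)+\tfrac14\peri(R)\;=\;\tfrac34\peri(R).
\]
As $\sigma$ is a square of side $3\peri(R)$ centred at $c$, it contains the disk of radius $\tfrac32\peri(R)$ about $c$, hence $C^{*}\subseteq\sigma$, with a factor~$2$ to spare — this slack is what the generous constant $3$ in the side length of $\sigma$ buys us.

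So it suffices to exhibit an optimal intersecting polygon meeting $R$. Take an arbitrary optimal polygon $C$; if $C\cap R\neq\emptyset$ we are done, so assume they are disjoint and let $a\in C$, $b\in R$ realise $\dist(C,R)$. For each object $o\in\cO$ both $C$ and $R$ meet the convex set $o$, so we may fix $p_{o}\in C\cap o$ and $q_{o}\in R\cap o$; the segment $\overline{p_{o}q_{o}}$ then lies in $o$, and replacing every $o$ by $\overline{p_{o}q_{o}}$ changes neither $\opt$ (the new objects are subsets of the old, and $C$ is still feasible) nor the feasibility of $C$ or of $R$. I would now slide $C$ toward $R$ — translating in the direction from $a$ to $b$ — until it first touches $R$, and argue the translated copy still meets every segment $\overline{p_{o}q_{o}}$. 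The structural reason one hopes for is that each such segment ``connects'' $C$ to $R$ and is hence roughly parallel to the sliding direction, with angular deviation controlled by $(\diam(C)+\diam(R))/\dist(C,R)$; here one must use that Algorithm~$A1$ forces $\peri(R)=O(\opt)$, i.e.\ that $C$ and $R$ have comparable bounded size, while in the regime where this deviation bound is too weak one already has $\dist(C,R)=O(\peri(R))$, and the diameter estimate above (once more absorbing the constant into the side length of $\sigma$) places $C$ itself inside $\sigma$.

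The main obstacle is precisely this last step. A plain translation toward the nearest point of $R$ need \emph{not} preserve feasibility in general, so the argument has to exploit $\peri(R)=O(\opt)$ in an essential way, and, when a full translation is impossible, must still produce an optimal polygon close to $R$ — for instance by translating or retracting the optimum so that the resulting convex hull keeps perimeter exactly $\opt$ while its vertices are pushed onto the object–segments near their $R$-endpoints $q_{o}$ (using that such convex hulls stay feasible and that their perimeter is a convex, hence controllable, function along such a motion). Once an optimal polygon meeting — or lying within $\tfrac34\peri(R)$ of — $R$ is in hand, the diameter estimate of the second paragraph finishes the proof.
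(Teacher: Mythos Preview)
This lemma is quoted verbatim from~\cite{DumitrescuJ12}; the present paper does not give its own proof, so there is nothing here to compare against. I assess your attempt on its own merits.

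Your opening reduction is clean and correct: if some optimum $C^{*}$ meets $R$, then $\diam(C^{*})\le\tfrac12\opt\le\tfrac12\peri(R)$ and $\diam(R)\le\tfrac12\peri(R)$ place every point of $C^{*}$ within $\tfrac34\peri(R)$ of the centre of~$\sigma$, hence inside~$\sigma$.

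The remainder, however, is not a proof. You yourself flag the gap (``the main obstacle is precisely this last step''), and it is real. Sliding $\Popt$ toward $R$ along the direction $b-a$ does not preserve feasibility in the way you need: after replacing each $o$ by the segment $\overline{p_oq_o}$, that segment makes an angle at most $\alpha\approx(\diam \Popt+\diam R)/\dist(\Popt,R)$ with the slide direction, as you say, but the accumulated lateral drift by the time the translate reaches $R$ is then of order $\dist(\Popt,R)\cdot\alpha\approx\diam \Popt+\diam R$, which can exceed $\diam \Popt$. So even in the ``favourable'' regime of large $\dist(\Popt,R)$ a single segment can be lost. The fallback dichotomy you sketch---either the translation works, or $\dist(\Popt,R)=O(\peri(R))$ already---is therefore not established, and the final paragraph (``translating or retracting the optimum so that the resulting convex hull keeps perimeter exactly~$\opt$\ldots'') is a wish list, not an argument.

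If you want to rescue the translation idea, note that the set $T=\bigcap_{o\in\cO}(o-\Popt)$ of feasibility-preserving translation vectors is convex and contains~$0$; the task becomes showing $T\cap(R-\Popt)\neq\emptyset$, and a Helly-type argument on the family $\{o-\Popt\}_{o}\cup\{R-\Popt\}$ is the natural tool---but be warned that the triples involving $R-\Popt$ still require a genuine argument, since pairwise intersection of three planar convex sets does not by itself give a common point.
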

Algorithm $A1$ needs to solve $O(1/\eps)$ many linear programs with $O(n)$ variables and $O(n)$ constraints each. 
Thus $R$ and $\sigma$ can be found in $O(T_{\mathrm{LP}}(n)/\eps)$ time, where $O(T_{\mathrm{LP}}(n)$ is the running time of an LP solver with $O(n)$ variables and  $O(n)$ constraints.  The state-of-the-art LP solver 
by Jiang et al.~\cite{Jiang0WZ21} achieves a running time better than $O(n^{2.373})$.
Lemma~\ref{lem:dum_locate} directly implies that if $\peri(R)\ge (1+\eps)\peri(\Popt)$, then
\begin{align}
    \label{eq:lb_diam}
    \diam(\Popt) \le \diam(\sigma) = 3\sqrt{2}\peri(R) \le 3\sqrt{2}\frac{4}{\pi}(1+\eps)\peri(\Popt)= O(\diam(\Popt)).
\end{align}
\subparagraph*{The shape and location of the minimum-area optimum.}
For the rest of this section, let $X$ denote the set of vertices in the planar arrangement given by $\cO$.
\begin{lemma}\label{lem:minvolumevertexlocation}
Let $\Popt$ be a minimum-area intersecting polygon for the input $\cO$ that has the 
minimum number of vertices, and among such polygons has the maximum number of points from $X$ on its boundary.
Then for any vertex $v$ of $\Popt$ that is not in $X$, the relative interior of at least one side of $\Popt$ adjacent to $v$ contains a point of $X$.
\end{lemma}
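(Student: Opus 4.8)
The plan is to argue by contradiction. Suppose $\Popt$ has a vertex $v\notin X$ whose two incident edges $e_1=\overline{vu_1}$ and $e_2=\overline{vu_2}$ each contain no point of $X$ in their relative interiors; let $u_0,u_3$ be the other neighbours of $u_1,u_2$. Holding every vertex of $\Popt$ fixed except $v$, the area $\mathcal A(v')$ of $\conv\bigl((\mathrm{vert}(\Popt)\setminus\{v\})\cup\{v'\}\bigr)$ is an \emph{affine} function of $v'$. The first point is that if, for some direction $d$, replacing $v$ by $v+td$ keeps the polygon convex and intersecting for all small $t>0$ while $\mathcal A$ strictly decreases, then for small $t$ we obtain a feasible polygon of area strictly below $\opt$, a contradiction. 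Hence no such $d$ exists; in particular $\Popt':=\conv(\mathrm{vert}(\Popt)\setminus\{v\})$ is infeasible (otherwise it is a feasible polygon of area at most $\opt$ with fewer vertices), so there are \emph{critical} objects $o$ with $o\cap\Popt'=\varnothing$, and each satisfies $\varnothing\neq o\cap\Popt\subseteq\triangle(vu_1u_2)\setminus\overline{u_1u_2}$, hence $u_1,u_2\notin o$. Also $v$ is not collinear with $u_1,u_2$ (it is a genuine vertex, by vertex-minimality), so $\Area(\Popt')<\opt$.

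Since no area-decreasing perturbation survives even for an instant, only \emph{area-preserving} ones remain. Assuming $\Popt$ is a genuine polygon (the degenerate case $u_1=u_2$, where $\Popt$ is a doubled segment, is handled directly), $\mathcal A$ is non-constant and $\{\mathcal A=\opt\}$ is a line $L$ through $v$, along which $v$ can move while keeping the area equal to $\opt$. I would slide $v$ along $L$, in either direction, as far as convexity and feasibility allow (possibly the slide is blocked immediately), and inspect how it gets blocked: (a) a collinearity occurs ($v$ becomes collinear with $u_1,u_0$, or with $u_2,u_3$, or reaches $\overline{u_1u_2}$), yielding a feasible polygon of area $\opt$ with fewer vertices, contradicting vertex-minimality; (b) $v$, or a point of an incident edge, passes through a point of $X$, yielding a feasible polygon of area $\opt$ with the same number of vertices but more points of $X$ on its boundary, a contradiction; or (c) some critical object $o$ is about to become disjoint from the polygon.

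Ruling out (c) is the heart of the argument. The key sublemma is a classification of how a critical object $o$ can be on the verge of being lost under a small move of $v$. Where $o\cap\Popt$ meets $\int(\Popt)$ it is two-dimensional — here we use that $o$ is a genuine polygon — hence robust; and $o\cap\Popt$ meets $\bd\Popt$ only in $\mathrm{relint}(e_1)\cup\mathrm{relint}(e_2)\cup\{v\}$, where a transverse crossing of $\bd o$ is again two-dimensional. Thus $o$ can block the slide only if either (i) $o\cap\Popt=f\cap e_i$ is a nondegenerate segment of $e_i$ lying along an edge $f$ of $o$ collinear with $e_i$, with $o$ on the far side of $\Line(e_i)$; or (ii) $o\cap\Popt=\{v\}$ with $v$ in the relative interior of an edge of $o$. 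In case (i), each endpoint of $f\cap e_i$ is either $v$ or a vertex of $o$ (hence in $X$); since $u_i\notin o$ the segment avoids $u_i$, and $f\cap e_i=e_i$ is impossible; so, as $v\notin X$, at least one endpoint is a point of $X$ lying in $\mathrm{relint}(e_i)$, contradicting our assumption. In case (ii), if two critical objects pin $v$ along distinct such lines then two object-edges cross at $v$, so $v\in X$, a contradiction; otherwise all of them share one line $\ell^\ast$ through $v$ (necessarily $\ell^\ast\neq L$, since moving along $L$ would then keep $v$ on all of them), and I would continue by sliding $v$ along $\ell^\ast$. Every blocking event of this slide is of type (a), of type (b), of case (i), or of case (ii) along a line other than $\ell^\ast$ (the last again forcing $v'\in X$, that is, type (b)) — all contradictory; so the slide runs until the area, which is at least $\Area(\Popt')<\opt$, drops to $\Area(\Popt')$, which happens precisely when the tip triangle $\triangle(v'u_1u_2)$ degenerates, that is, at a collinearity — a contradiction. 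Since every branch ends in a contradiction, the lemma follows.

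I expect the bookkeeping in the third paragraph to be the delicate part: pinning down exactly which near-loss configurations of a critical object can occur, and checking in each case that a point of $X$ is forced into the relative interior of an incident edge (or that $v\in X$). A secondary subtlety, used throughout, is the reduction to area-preserving moves — one must verify that an area-decreasing feasible direction, even if feasible only momentarily, already contradicts optimality.
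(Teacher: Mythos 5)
Your proof is correct and follows essentially the same approach as the paper's: perturb $v$ along a line (for the paper, the input-object edge through $v$; for you, the common pinning line $\ell^*$, which coincides with it) in a non-area-increasing direction, and argue that the move can only be stopped by a collinearity or by a point of $X$ landing on $v$ or an incident edge, contradicting the extremality of $\Popt$. The main difference is one of rigor rather than substance: your classification of critical objects supplies the justification, omitted in the paper's two-sentence argument, that feasibility is preserved during the slide, while the degenerate doubled-segment case is waved off in both versions (harmlessly, since the lemma is only invoked when $\Popt$ has positive area).
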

\begin{proof}
Suppose for a contradiction that $v\not \in X$ and that the relative interior of 
the sides in $\Popt$ adjacent to $v$ are disjoint from $X$. Observe that $v$ must be on the boundary of an input object, so it is in the relative interior of an edge $e$ of an input object, see Fig.~\ref{fig:locate_min_area}(i). Then there exists a vector parallel to $e$ along which we can move $v$ while fixing its neighboring vertices in $\Popt$, without \new{increasing} the area of $\Popt$. This movement can be continued until we hit a point in $X$, or the angle of the polygon becomes $\pi$ at $v_1$ or $v_2$. 
As a result, we end up with a feasible polygon $S$ whose area is no greater than that of $\Popt$, and it has one less vertex or at least one more point of $X$ on its boundary. This contradicts the properties of~$\Popt$.
\end{proof}
\begin{figure}[t]
\centering
\includegraphics{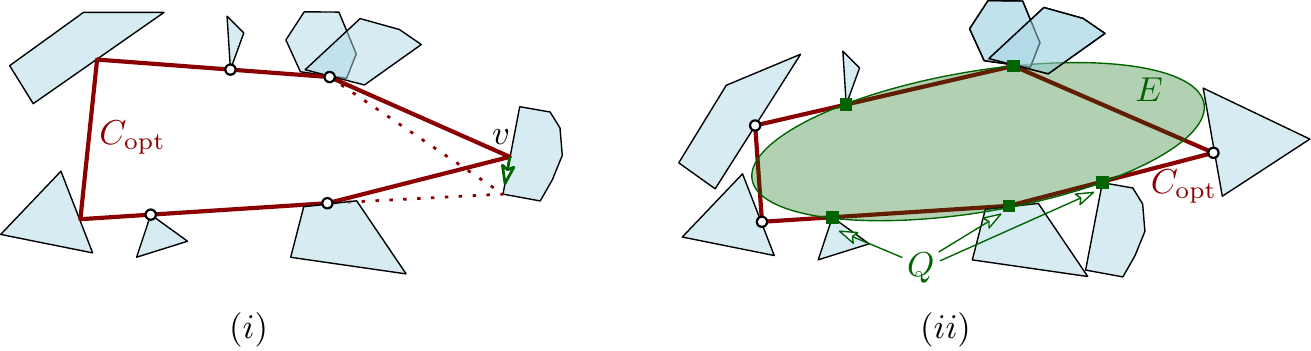}
\caption{(i) If $v$ and adjacent sides of $\Popt$ are disjoint from $X$, then we can slide $v$ without \new{increasing} the area of $\Popt$. (ii) The points of $Q$ (green) and their circumscribed ellipse $E$.}\label{fig:locate_min_area}
\end{figure}
The following lemma is the key to the success of the algorithm.
\begin{lemma}\label{lem:constantsize_orflat}
For any given set of input polygons $\cO$ and $0<\eps<1$ there is an intersecting polygon $\sol$ of area $(1+\eps)\opt$ which either has at most $8$ vertices, or its vertices are in a rectangular grid $G$ of size $O(1/\eps^3)\times O(1/\eps^3)$ where $G$ belongs to a collection $\cG$ of grids that can be generated in polynomial time.
\end{lemma}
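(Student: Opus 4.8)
The plan is to take a $(1+\eps/2)$-approximate optimal solution, invoke the structural result of Lemma~\ref{lem:minvolumevertexlocation} to understand where its vertices lie, and then argue that a further perturbation onto a suitable grid loses at most another $(1+\eps/3)$ factor in area. The dichotomy (at most $8$ vertices, or a grid) will arise from a case distinction based on the \emph{width} of the optimal polygon relative to its diameter.

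\textbf{Step 1: the fat case.} First I would consider a near-optimal polygon $\Popt$ as in Lemma~\ref{lem:minvolumevertexlocation}: minimum area, then minimum number of vertices, then maximum number of $X$-points on the boundary. By that lemma, every vertex $v\notin X$ has an $X$-point in the relative interior of an incident side. Walking around $\bd\Popt$, this means the vertices of $\Popt$ are ``anchored'' to points of $X$: between two consecutive $X$-points on $\bd\Popt$ there is at most one vertex. Hence if $\Popt$ has $k$ vertices, then $\bd\Popt$ passes through at least $\lceil k/2\rceil$ points of $X$; in particular all vertices are determined by $O(k)$ points of $X$ together with the edge directions of the input objects carrying the non-$X$ vertices. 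The idea is that if $k$ is large we instead aim for the grid case; so suppose $k$ is large. Then $\Popt$ is ``well-distributed'': using that each side has an $X$-point in its relative interior (or an endpoint in $X$), one shows the diameter of $\Popt$ is comparable to the diameter of the relevant subset $Q\subseteq X$ of anchor points (the points of $Q$ lie on $\bd\Popt\subseteq\conv(Q)\cup(\text{short excursions})$, so $\diam(\Popt)=O(\diam(Q))$, and conversely $\diam(Q)\le\diam(\Popt)$ since $Q\subseteq\Popt$). This pins the \emph{location} and \emph{scale} of $\Popt$ to within a constant factor of a subset of the $\binom{|X|}{2}$ point pairs of $X$, giving a polynomial family of candidate bounding boxes.

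\textbf{Step 2: thin vs.\ fat.} Within such a bounding box of diameter $D=O(\diam(\Popt))$, distinguish two cases according to $\width(\Popt)$, the minimum width of $\Popt$ over all directions.

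\emph{Case A (fat): $\width(\Popt)\ge c\,\eps^2 D$} for a suitable constant $c$. Then $\Popt$ contains a disk of radius $\Omega(\eps^2 D)$, so its area is $\Omega(\eps^2 D)\cdot\Omega(\eps^2 D)=\Omega(\eps^4 D^2)$... more precisely $\Area(\Popt)=\Omega(\eps^2 D\cdot D)=\Omega(\eps^2 D^2)$ since a convex body of diameter $\ge D/2$ and width $w$ has area $\ge \tfrac12 (D/2) w$. Place an axis-aligned square grid of spacing $\delta = c'\eps^3 D$ inside a square of side $O(D)$ around the bounding box; this is an $O(1/\eps^3)\times O(1/\eps^3)$ grid. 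Snap each vertex of $\Popt$ to the nearest grid point; since $\Popt$ is fat, snapping a single vertex moves the boundary by $O(\delta)$ and changes the area by $O(\delta\cdot D)=O(\eps^3 D^2)$, and there are $O(1/\eps^3)$ vertices worth snapping (more vertices than that cannot be anchored within the grid—if $\Popt$ had $\omega(1/\eps^3)$ vertices one can first simplify it to $O(1/\eps^3)$ vertices losing only an $\eps$-fraction of perimeter, hence of the containing box, hence area). The total area change is $O(1/\eps^3)\cdot O(\eps^3 D^2)=O(D^2)$, which is too crude; instead one snaps \emph{outward only} (take the convex hull of the snapped points slightly inflated) so that feasibility is preserved, and bounds the added area by $O(\delta)\cdot\peri(\Popt)=O(\eps^3 D)\cdot O(D)=O(\eps^3 D^2)=O(\eps\cdot\Area(\Popt))$, using $\Area(\Popt)=\Omega(\eps^2D^2)$ and $\peri(\Popt)=O(D)$. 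That gives the grid case.

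\emph{Case B (thin): $\width(\Popt)< c\,\eps^2 D$.} Then $\Popt$ is a long thin sliver. The key claim is that a sliver can be replaced, at an $\eps$-cost in area, by a polygon with $O(1)$ vertices. Concretely, let $\ell$ be the diameter segment of $\Popt$ and let $Q$ be a maximal-area inscribed ``constant-complexity'' figure: take the first and last anchor points and a few extreme points; the ellipse $E$ from Fig.~\ref{fig:locate_min_area}(ii) circumscribing the extreme anchor set is relevant here. More directly: since $\Popt$ is contained in a strip of width $< c\eps^2 D$ around $\ell$, and it must still intersect all objects, the orthogonal projection of each object onto $\ell$ must meet the projection of $\Popt$; so the ``essential'' $1$-dimensional feasibility is captured by an interval, and one checks that the triangle (or quadrilateral) spanned by the two extreme anchor points and the extreme width point is already feasible after a tiny inflation of width $O(\eps^2 D)$, contributing area $O(\eps^2 D\cdot D)=O(\eps^2D^2)$; and $O(\eps^2 D^2)=O(\eps)\cdot\Area(\Popt)$ fails here since $\Area(\Popt)$ may be nearly zero—so instead the replacement polygon is the convex hull of $\Popt$ itself restricted to $O(1)$ well-chosen vertices whose removal only \emph{shrinks} area, after which we add back a controlled sliver. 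After this reduction we have a polygon with at most $8$ vertices whose area is $\le(1+\eps)\opt$, which is the constant-size case. Finally combine the losses: $(1+\eps/2)(1+\eps/3)\le 1+\eps$.

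\textbf{Main obstacle.} The delicate point is Case B (the thin sliver) and, relatedly, making the grid in Case A genuinely of size $O(1/\eps^3)\times O(1/\eps^3)$: when $\Area(\Popt)$ is exponentially small, there is no polynomial square grid containing a good solution, so the only escape is to show that such an optimum can always be replaced by a constant-complexity one. Proving that $8$ (rather than some larger constant) vertices suffice, and that the replacement stays feasible with area at most $(1+\eps)\opt$, is where the real work lies; the fat case is comparatively routine vertex-snapping. I would also need the polynomial-time generability of $\cG$: each grid is specified by a pair (or triple) of points of $X$ (or an $X$-point and an object edge direction) giving its center, orientation, and scale, so $|\cG|=\poly(n)$ and the family is explicitly enumerable.
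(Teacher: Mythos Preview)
Your dichotomy based on $\width(\Popt)$ versus $\diam(\Popt)$ has a genuine gap, and you have essentially identified it yourself in Case~B: when $\Popt$ is a sliver, you say ``$O(\eps^2 D^2)=O(\eps)\cdot\Area(\Popt)$ fails here since $\Area(\Popt)$ may be nearly zero'', and then propose to ``restrict $\Popt$ to $O(1)$ well-chosen vertices whose removal only shrinks area''. But removing vertices can destroy feasibility: the smaller polygon need not intersect all objects anymore. You never give an argument that a constant-vertex \emph{feasible} polygon of area $\le(1+\eps)\opt$ exists in this regime. Relatedly, your claim in Step~1 that $\diam(\Popt)=O(\diam(Q))$ is false: between two consecutive points of $Q$ on $\bd\Popt$ there can be a single vertex arbitrarily far away, so the bounding box determined by $Q$ need not contain~$\Popt$.

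The paper's proof avoids both problems with one idea you are missing: an \emph{affine normalization} via the L\"owner--John ellipse $E$ of $Q=X\cap\bd\Popt$. After mapping $E$ to the unit disk, the scaled-down ellipse $E'\subset\Popt$ gives the uniform lower bound $\Area(\Popt)\ge\pi/4$, so additive area errors can always be compared to $\opt$. The dichotomy is then ``spike vs.\ no spike'': if some vertex lies outside the disk of radius $32/\eps$ (a spike), one shows $\Area(\Popt)\ge 16/\eps$ and that at most two sections of $\bd\Popt$ between $Q$-points can leave the radius-$2$ disk; from these two sections together with the square $[-2,2]^2$ one builds an \emph{enclosing} (hence automatically feasible) polygon with at most $8$ vertices and area $\le\Area(\Popt)+16\le(1+\eps)\opt$. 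If there is no spike, all vertices lie in $[-32/\eps,32/\eps]^2$ and the standard Minkowski-sum snapping works, with the grid coming from the ellipse (one ellipse per $\le 5$-tuple of $X$, giving the polynomial family~$\cG$). The two crucial differences from your plan are (i) the constant-size polygon \emph{contains} $\Popt$ rather than being a subpolygon, and (ii) the case split is taken \emph{after} the affine normalization, which is what makes both the area lower bound and the grid construction uniform.
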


\begin{proof}
Let $\Popt$ be a minimum-area convex intersecting polygon for the input $\cO$ that has the minimum number of vertices, and among such polygons has the maximum number of points from $X$ on its boundary. Notice that if $\Popt$ has $0$ area, then it is a doubled segment, and thus has only $2$ vertices.

Suppose now that $\Popt$ has at least $9$ vertices, and it has positive area. By Lemma~\ref{lem:minvolumevertexlocation}, $\bd \Popt$ has at least four points from $X$ that are on four distinct edges of $\Popt$. Let $Q=\Popt \cap X$. Since the area of $\Popt$ is not $0$ and  $Q$ contains points on four distinct edges, we have that $Q$ forms a convex polygon of positive area.

Let $E$ be the minimum-area ellipse (of any orientation) covering $Q$. Such an ellipse must contain at most $5$ points from $Q$. Therefore $E$ is uniquely defined by at most $5$ vertices from~$Q$, see Fig.~\ref{fig:locate_min_area}(ii). Consequently, we can generate a set $\cB$ of candidate ellipses in time $O(|X|^5)$ time that is guaranteed to contain $E$. As a consequence of John's theorem~\cite{john1948extremum,ball1992ellipsoids} we can scale $E$ by a factor $1/2$ from the center of $E$ to get an ellipse $E'$ that is contained in $E$.

Next, we apply an affinity that makes $E$ into a circle. Note that the transformation changes the area of all polygons by the same multiplicative constant, therefore the transformation preserves optima and multiplicative approximations of area. We will keep using our previous notations on this transformed instance for the rest of this proof. Without loss of generality assume that $E$ is the unit radius circle centered at the origin. Consequently, $E'$ is the unit diameter circle centered at the origin.

The points of $Q$ subdivide $\bd \Popt$ into $|Q|$ distinct sections. A section is called a \emph{spike} if it contains some vertex of $\Popt$ outside the disk of radius $32/\eps$. If $v$ is a vertex, then let $S(v)$ denote the convex hull of the section of $v$.

Suppose that $\Popt$ has a vertex $v$ that is outside the disk of radius $32/\eps$. We will show that in this case there is a $(1+\eps)$-approximate solution that has at most $8$ vertices. Let $q_v,q'_v\in Q$ be the endpoints of the section of $v$. Notice that $\Popt$ covers the triangle $vq_vq'_v$. Since $E'$ is contained in $\Popt$, we know that the half-cone given by the rays $vq_v$ and $vq'_v$ contains $E'$. Thus the shape given by the touching segments from $v$ to $E'$ and the disk of $E'$ is covered by $\Popt$, see Fig.~\ref{fig:spike}(i). This shape has area area at least $16/\eps$. Consequently, if $\Popt$ has a spike, then 
\begin{equation}\label{eq:spiked_area}
    \Area(\Popt)\geq 16/\eps.
\end{equation}

We now claim that $\Popt$ has at most two sections which have a vertex at distance strictly greater than $2$ from the origin. To prove the claim, suppose the contrary, that $v,w,x$ are vertices in disjoint sections, and each of $v,w,x$ are at distance strictly greater than $2$ from the origin. It follows that each of the three portions of $\bd \Popt$ defined by $v,w,x$ must contain some point of $Q$, and in particular, some point that falls in the unit disk. Since the triangle $T$ given by $v,w,x$ is covered by $\Popt$, we have that each side of $T$ must intersect the unit disk (as otherwise at least one of the three boundary portions would be disjoint from the unit disk). Assume without loss of generality that $T$ has its largest angle at $v$, i.e., it has angle at least $\pi/3$ at $v$. Then it follows that $v$ is at distance at most $2$ from the origin, which is a contradiction. This concludes the proof of our claim.

\begin{figure}[t]
\centering
\includegraphics[width=\textwidth]{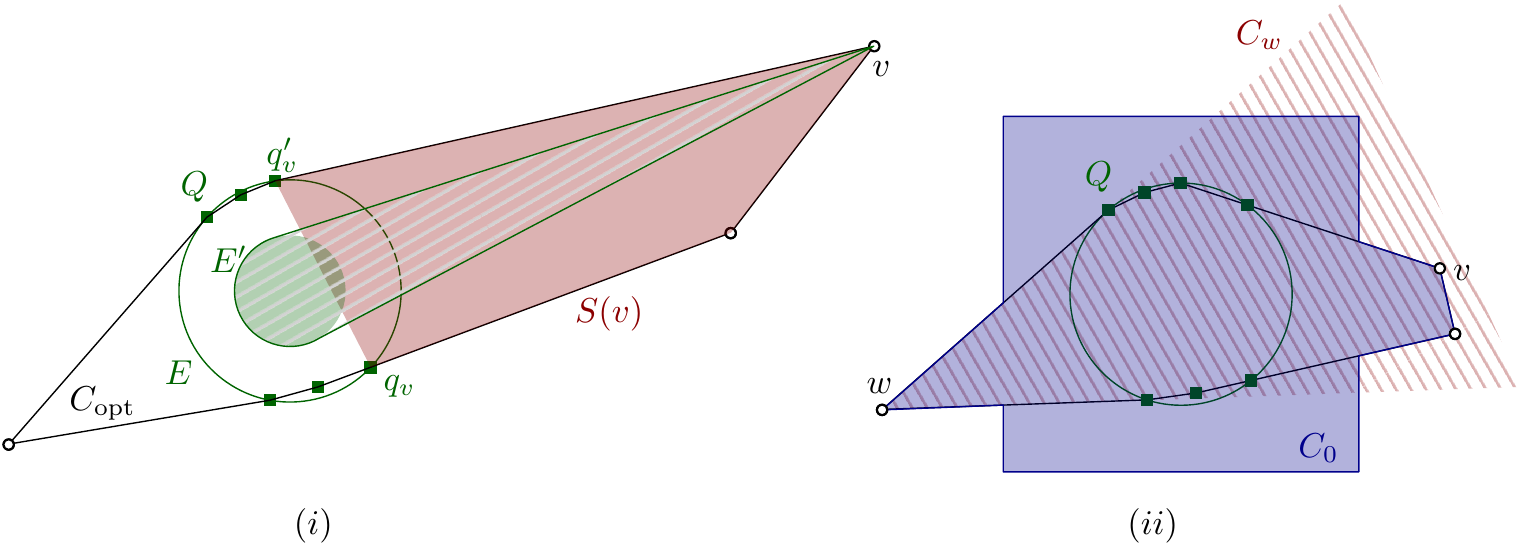}
\caption{(i) A spike at vertex $v$. The section convex hull $S(v)$ is shaded with red. (ii) Building blocks of a constant-size feasible polygon with at most two spikes and the radius $2$ square. The polygon $C_0$ is shaded blue, and the red falling pattern indicates the cone $C_w$.}\label{fig:spike}
\end{figure}

Our claim implies that there are vertices $v,w\in \Popt$ such that $\Popt$ can be covered by the union of the sections $S(v)$, $S(w)$, and the square $[-2,2]\times [-2,2]$. Let $\sol_0=S(v)\cup S(w) \cup [-2,2]\times [-2,2]$. Note that $\sol_0$ is a polygon covering $\Popt$. Consider the (potentially unbounded polygon) $\sol_v$ whose sides are the sides of the spike, where we extend the sides adjacent to $q_v$ and $q'_v$ until they meet (or into rays), see Fig.~\ref{fig:spike}(ii) for an illustration. Notice that $\sol_v$ covers $\Popt$. The polygon $\sol=\sol_0 \cap \sol_v \cap \sol_w$ is convex and covers $\Popt$, and $\Area(\sol)-\Area(\Popt)\leq \Area([-2,2]\times [-2,2])=16$. Thus if $\Popt$ has a spike, then by the bound~\eqref{eq:spiked_area} the polygon $\sol$ has area at most $\Area(\Popt)+16 \leq (1+\eps)\Area(\Popt)$. Observe that $\sol$ has at most $8$ vertices: it has at most $2+2$ vertices on the spikes. Every further vertex is either an original vertex of the square, or it arises after some side of $\sol_v$ or $\sol_w$ incident to $Q$ cuts off at least one vertex, keeping the number of vertices unchanged.

It remains to show that if $|V(\Popt)|\geq 9$ and all vertices of $\Popt$ are within distance $32/\eps$
from the origin, then we can generate a polynomial-size grid set $\cG$ with the desired properties. Recall that we can generate a set $\cB$ of candidate circumscribed ellipses of $Q$ in time $O(|X|^5)=O(n^{10})$ time, that is, for all at most 5-tuples of $X$ we compute the corresponding circumscribed ellipse and the affine transform to make the ellipse into a circle. For each ellipse the computations and the transformation takes $O(1)$ time. We set the circle's radius as unit and fix a coordinate system centered at the circle center. Let $G$ be the square grid that subdivides each side of the square $\sigma=[-32/\eps,32/\eps]\times [-32/\eps,32/\eps]$ into $k=\lfloor 2^{16}/\eps^3\rfloor$ equal segments. The resulting grid $G$ has size $k\times k = O(1/\eps^3)\times O(1/\eps^3)$. Let $\cG$ be the set of $O(n^{10})$ grids generated this way.
For the sake of simplifying the proof and the illustrations, we apply the affine transform to the entire instance, but we need not do that transformation to generate $G$: each ellipse defines a rectangular grid $G$ on the original plane whose points we can access in constant time.

It remains to show that there is a polygon $\sol$ with area at most $(1+\eps)\opt$ such that $V(\sol)\subset G$ for some $G\in \cG$. Let $G$ be the square grid corresponding to the true circumscribed ellipse of $Q$. Consider the grid cells containing $V(\Popt)$; let $\sol$ be the convex hull of these cells. Notice that $|V(\sol)|\leq 4|V(\Popt)|$. Moreover, if $\delta$ is the cell diameter of $G$, then $\sol$ is covered by the Minkowski sum $\Popt \oplus B_\delta$, where $B_\delta$ is the disk of radius $\delta$. Thus we have that
\begin{align*}
    \Area(\sol)&\leq \Area(\Popt \oplus B_\delta)\\
    &= \Area(\Popt) + \peri(\Popt)\cdot\delta+ \Area(B_\delta)\\
    &\leq \Area(\Popt)+\delta\cdot 4\cdot 64/\eps + \delta^2\pi,
\end{align*}
where the last inequality uses that the perimeter of a convex polygon is less than the perimeter of any covering simple polygon: in this case the square $\sigma$ covers $\sol$.
Since $\Area(\Popt)\geq \Area(E')=\pi/4$ and $\delta=\sqrt{2}\cdot\frac{64/\eps}{\lfloor 2^{16}/\eps^3\rfloor}<\eps^2/2^9$, we can continue the inequality chain as follows:
\[\Area(\sol)\leq \Area(\Popt)+2^8\delta/\eps + \delta^2\pi <  \Area(\Popt) + \eps/2+ \eps/2^{16}<(1+\eps)\Area(\Popt),\]
concluding the proof.
\end{proof}
\section{An FPTAS for the minimum-perimeter problem of convex objects in the plane}
\label{sec:FPTAS}
Let $\cO$ be a set of $n$ convex objects in the plane for which we want to compute
a minimum-perimeter convex intersecting polygon. We assume 
that $\cO$ cannot be be stabbed by a single point---this is easy to test without increasing the total running time. 
Since a minimum-perimeter intersecting polygon is necessarily convex, we will from now on drop the adjective ``convex'' 
from our terminology. We do this even when referring to convex intersecting polygons that
are not necessarily of minimum perimeter.

In the previous section we have seen that 
for any $\eps>0$,
we can find a feasible 
rectangle $R$ and a square $\sigma$
with the following property: 
Either $\peri(R) \le (1+\eps)\opt$, or $\Popt\subseteq \sigma$ with $\diam(\Popt) = \Omega(1)\diam(\sigma)$.
Next we describe an algorithm that, given a parameter $\eps>0$ and a corresponding square~$\sigma$, 
outputs 
an intersecting polygon $C^*\subseteq \sigma$ for $\cO$ such that if $\peri(R)\ge(1+\eps)\peri(\Popt)$ then $\peri(C^*)\leq (1+\eps)\opt$, where $\opt=\peri(\Popt)$ (cf. Lemma~\ref{lem:dum_locate} and Equation~\ref{eq:lb_diam}).
Finally, we output either $R$ or $C^*$, whichever has smaller perimeter.

Our algorithm starts by partitioning $\sigma$ into a regular grid~$\Gs$ 
of~$O(1/\eps^2)$ cells of edge length at most $(\eps/8)\cdot\opt$. 
We say that a convex polygon is a \emph{grid polygon} if its vertices
are grid points from~$\Gs$.
The following observation is standard, but for completeness we include a proof.
\begin{observation}
\label{obs:grid}
Suppose $\sigma$contains an optimal solution~$\Popt$.
Let $\Ps$ be a minimum-perimeter grid polygon that is an intersecting polygon for $\cO$.
Then $\peri(\Ps) \leq (1+\eps)\cdot \opt$.
\end{observation}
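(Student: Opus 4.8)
The plan is to produce a single grid polygon $P'$ that is an intersecting polygon for $\cO$ and satisfies $\peri(P')\le (1+\eps)\opt$; since $\Ps$ is by definition a minimum-perimeter intersecting grid polygon, this forces $\peri(\Ps)\le \peri(P')\le (1+\eps)\opt$. The candidate $P'$ is obtained from $\Popt$ by snapping to the grid: since $\Popt\subseteq\sigma$, it is covered by the cells of $\Gs$, and I would let $P'$ be the convex hull of the union of all cells of $\Gs$ that meet $\Popt$.

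There are three things to verify. First, $\Popt\subseteq P'$: any point of $\Popt$ lies in some cell, that cell meets $\Popt$ and hence belongs to the collection whose hull defines $P'$, so the point lies in $P'$. Consequently $P'$ meets every object of $\cO$, i.e.\ it is an intersecting polygon. Second, $P'$ is a grid polygon: it is convex by construction, and the vertices of the convex hull of a finite union of grid cells are grid points of $\Gs$. Third, and this is the only quantitative step, I would bound $\peri(P')$ via the inclusion $P'\subseteq\Popt\oplus S$ with $S=[-\ell,\ell]^2$, where $\ell\le(\eps/8)\opt$ is the cell side length: a vertex $p$ of $P'$ is a corner of some cell $Q$ meeting $\Popt$ in a point $q$, and since $p,q$ both lie in the axis-parallel $\ell\times\ell$ square $Q$ we have $p-q\in S$, hence $p\in\Popt\oplus S$; convexity of $\Popt\oplus S$ then gives $P'\subseteq\Popt\oplus S$. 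Using that perimeter is monotone under inclusion of convex sets and additive under Minkowski sums of convex bodies in the plane ($\peri(A\oplus B)=\peri(A)+\peri(B)$), this yields
\[
  \peri(P') \;\le\; \peri(\Popt\oplus S) \;=\; \opt + \peri(S) \;=\; \opt + 8\ell \;\le\; (1+\eps)\opt .
\]

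The main obstacle is feasibility of the rounded polygon: the naive move of snapping each vertex of $\Popt$ individually to a nearby grid point can destroy the intersection with an object that $\Popt$ touches only tangentially, and repairing this locally is awkward. Taking the convex hull of all cells meeting $\Popt$ avoids the problem outright, because $P'$ then contains $\Popt$, so feasibility is automatic; the price is a controlled enlargement, and here it matters that we enlarge by a grid-aligned square $S$ rather than by a disk, since $\peri(S)=8\ell$ fits the $\eps\opt$ budget exactly, whereas a disk of radius $\diam(Q)=\sqrt2\,\ell$ would cost $2\pi\sqrt2\,\ell>8\ell$. Degenerate cases need no special care: if $\Popt$ is a doubled segment then $P'$ is still a convex polygon containing it and the same estimate applies, and $\opt>0$ (as $\cO$ is not stabbable by a single point), so $\Gs$ is well defined.
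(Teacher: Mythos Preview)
Your proof is correct and essentially identical to the paper's own argument: the paper also takes $P'=\ch(U)$ where $U$ is the union of grid cells meeting $\Popt$, observes that $U$ (hence $\ch(U)$) lies in $\Popt\oplus Z$ for the centred square $Z$ of side $(\eps/4)\opt$ (which is your $S$), and uses additivity of perimeter under Minkowski sums to conclude. Your extra remarks on why the snapping preserves feasibility and why the square rather than a disk is the right dilator are nice clarifications but not substantive departures.
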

\begin{proof}
Let $U$ be the union of all grid cells that intersect~$\Popt$, and let $Z$ be a square 
of edge length $(\eps/4)\cdot\opt$, centered at the origin. Then $U$ is contained
in the Minkowski sum $\Popt \oplus Z$. Since $\Popt \oplus Z$ is convex, this
implies that the convex hull $\ch(U)$ of $U$, which is a grid polygon that
is an intersecting polygon for~$\cO$, is contained in $\Popt \oplus Z$. 
Because the perimeter of a Minkowski sum of
two convex objects equals the sum of the perimeters of the objects,
we have 
\[
\peri(\Ps) \leq \peri(\ch(U))\leq \peri(\Popt) + \peri(Z) = (1+\eps)\cdot \opt. \qedhere
\]
\end{proof}
Next we describe an algorithm to compute a minimum-perimeter grid polygon~$\Ps$
that is an intersecting polygon for $\cO$.

First, we ``guess'' the lexicographically smallest vertex $\vbot$ of~$\Ps$, see Figure~\ref{fig:dp-fptas-new}(i).
We can guess $\vbot$ in $O(1/\eps^2)$ different ways.
For each possible guess we will find the best solution (if it exists),
and then we report the best solution found over all guesses. 

Now consider a fixed guess for the lexicographically smallest vertex $\vbot$ of~$\Ps$. 
With a slight abuse of notation we will use $\Ps$ to denote a minimum-perimeter
grid polygon that is an intersecting polygon of $\cO$ and that has
$\vbot$ as lexicographically smallest vertex. (If the polygon $\Ps$
does not exist, the algorithm described below will detect this.)
We will compute $\Ps$ by dynamic programming. 

The vertices of~$\Ps$ are grid points in the region $h^+\setminus\rho_0$, 
where $h^+$ is the closed half-plane above the horizontal line through $\vbot$ 
and $\rho_0$ is the horizontal ray emanating from~$\vbot$ and pointing to the left.
Let $V$ be the set of such grid points, excluding $\vbot$. 
We first order the points from $V$ in angular order around~$\vbot$.
More precisely, for a point $v\in V$, let $\phi(v)$ denote 
the angle over which we have to rotate $\rho_0$ in clockwise direction
until we hit~$v$. For two points $v,w\in V$, we write
$v\prec w$ if $\phi(v)<\phi(w)$.
Let $V^+ := V\cup \{\vbot,\vbotc\}$, where $\vbotc$ is a copy of $\vbot$,
and define $\vbot\prec v$ and $v\prec \vbotc$ for all $v\in V$. The copy
$\vbotc$ serves to distinguish the start and the end vertex of the clockwise circular
sequence of vertices of~$\Ps$. Note that if $\vbot,v_1,\ldots,v_k,\vbotc$
denotes this circular sequence then $\vbot \prec v_1 \prec \cdots \prec v_k \prec \vbotc$ (since $C(\sigma)$ will never have two vertices that make the same angle with $\vbot$).

We now describe our dynamic-programming algorithm. 
Consider a polyline from $\vbot$ to some point $v\in V$. We say that this polyline
is a \emph{convex chain} if, together with the line segment $v \vbot$, it forms
a convex polygon. We denote the convex polygon induced by such a chain~$\Gamma$ by $\sol(\Gamma)$.
The problem we now wish to solve is as follows: 
\begin{quote}
Compute a minimum-length convex chain $\Gamma^*$ from $\vbot$ to $\vbotc$
such that $\sol(\Gamma^*)$ is an intersecting polygon for $\cO$.
\end{quote}
Our dynamic-programming algorithm uses the partial order~$\prec$ defined above.
We now want to define a subproblem for each point $v\in V^+$,
which is to find the ``best'' chain $\Gamma_v$ ending at~$v$.
For this to work, we need to know which objects from $\cO$ should be covered by the partial
solution~$\sol(\Gamma_v)$. This is difficult, however, because objects that intersect the ray
from $\vbot$ and going through~$v$ could either be intersected by~$\sol(\Gamma_v)$
or by the part of the solution that comes after~$v$. To overcome this problem
we let the subproblems be defined by the last edge on the chain, instead of
by the last vertex. This way we can decide which objects should be covered by a partial
solution, as explained next.
\begin{figure}
\begin{center}
\includegraphics[width=\textwidth]{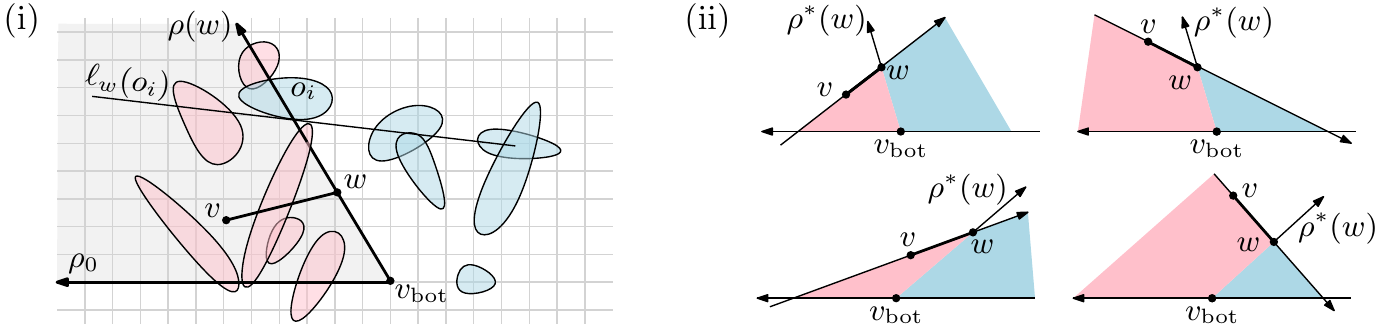}
\end{center}
\caption{(i) The wedge defined by $\rho_0$ and $\rho(w)$ is shown in light grey.
             Objects in $\cO(v,w)$ are red, objects in $\cO\setminus\cO(v,w)$ are blue.
        (ii) The partial solution $\sol(\Gamma)$ must be contained in the red region,
              while $C\setminus \sol(\Gamma)$ must lie in the blue region.
              There are four situations, depending on whether the angle between
              $\rho_0$ and $\rho(w)$ is acute or not, and whether the line containing~$vw$
              intersects $\rho_0$ or not.
}
\label{fig:dp-fptas-new}
\end{figure}

Consider a convex chain from $\vbot$ to a point $w\in V^+$ whose last edge is~$vw$.
Let $\rho(w)$ be the ray emanating from $\vbot$ in the direction of~$w$,
and let $\rho^*(w)$ be the part of the ray starting at~$w$.
For $w=\vbotc$ we define $\rho(w)$ to be the horizontal ray
emanating from $\vbot$ and going to the right, and we define $\rho^*(w)=\rho(w)$.
For an object~$o_i\in\cO$ that intersects~$\rho^*(w)$, let $\ell_w(o_i)$ be a
line that is tangent to~$o_i$ at the first intersection point of $\rho^*(w)$ with~$o_i$.
We now define the set $\cO(v,w)$ to be the subset of objects $o_i\in\cO$
such that one of the following conditions is satisfied; see also Figure~\ref{fig:dp-fptas-new}(i).
\begin{enumerate}[(i)]
\item $o_i$ intersects the wedge defined by $\rho_0$ and $\rho(w)$, but not $\rho(w)$ itself; or
      \label{cond:1}
\item $o_i$ intersects $\vbot w$; or
      \label{cond:2}
\item $o_i$ intersects $\rho^*(w)$ but not $\vbot w$, and the tangent line~$\ell_w(o_i)$ intersects
      the half-line containing~$vw$ and ending at~$w$.
      \label{cond:3}
\end{enumerate} 
The next lemma shows that we can use the sets $\cO(v,w)$ to define our subproblems.
\begin{lemma}
\label{lem:fptas-correctness}
Let $C$ be any convex polygon that is an intersecting polygon for~$\cO$ and that has
$\vbot$ as lexicographically smallest vertex and $vw$ as one of its edges.
Let $\Gamma_w$ be the part of~$\bd C$ from $\vbot$ to $w$ in clockwise direction. 
Then all objects in $\cO(v,w)$ intersect~$\sol(\Gamma_w)$ and all
objects in $\cO\setminus \cO(v,w)$ intersect~$C\setminus \sol(\Gamma_w)$.
\end{lemma}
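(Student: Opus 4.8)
The plan is to prove the two halves separately, in each case using the convexity of $C$ together with the fact that $\vbot$ is the lexicographically smallest vertex (so all of $C$ lies in the half-plane $h^+$ above the horizontal line through $\vbot$, with the edge incident to $\vbot$ going into $h^+\setminus\rho_0$). The key geometric picture is that $\rho(w)$ splits the plane, and $\sol(\Gamma_w)$ lies on the ``$\rho_0$-side'' of the cone at $\vbot$ while $C\setminus\sol(\Gamma_w)$ lies on the other side; the edge $vw$ lies on the line $\ell$ containing $vw$, and $C$ is split by $\ell$ into $\sol(\Gamma_w)$ (on the side of $\ell$ towards $\vbot$) and the remaining part. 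So I would first record these two separating facts cleanly: (a) $\sol(\Gamma_w)$ is contained in the closed wedge bounded by $\rho_0$ and $\rho(w)$ (or the appropriate region when that wedge is reflex — this is exactly the four-case split advertised in Figure~\ref{fig:dp-fptas-new}(ii)), and $C\setminus\sol(\Gamma_w)$ is contained in the complementary region; (b) $\sol(\Gamma_w)$ lies in the closed half-plane bounded by $\ell$ on the side containing $\vbot$, and $C\setminus\sol(\Gamma_w)$ lies in the other closed half-plane. Both follow from convexity of $C$ and the clockwise traversal.

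Then I would handle the three defining conditions of $\cO(v,w)$ one at a time to show $o_i\in\cO(v,w) \implies o_i$ meets $\sol(\Gamma_w)$. For condition~\eqref{cond:2} it is immediate, since $\vbot w$ is an edge (or a chord) of $\sol(\Gamma_w)$, hence contained in it. For condition~\eqref{cond:1}, if $o_i$ meets the wedge between $\rho_0$ and $\rho(w)$ but not $\rho(w)$ itself, then $o_i$ meets $C$ (given) and lies entirely on the $\rho_0$-side; combined with fact (a) and convexity, the intersection $o_i\cap C$ must already lie in $\sol(\Gamma_w)$ — I'd argue that a point of $C$ in that open wedge region cannot be in $C\setminus\sol(\Gamma_w)$. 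For condition~\eqref{cond:3}, $o_i$ meets $\rho^*(w)$ but not $\vbot w$; the tangent line $\ell_w(o_i)$ at the first crossing point $p$ of $\rho^*(w)$ with $o_i$ has $o_i$ on one side of it. The hypothesis that $\ell_w(o_i)$ crosses the half-line containing $vw$ and ending at $w$ is what forces the first crossing point $p$ (and hence a point of $o_i\cap C$, since $C$ must meet $o_i$ and the only way to reach $o_i$ past $w$ along that ray is through $\sol(\Gamma_w)$'s side) to lie on the $\sol(\Gamma_w)$-side of $\ell$; here I use fact (b). Symmetrically, for the converse direction — $o_i\notin\cO(v,w)\implies o_i$ meets $C\setminus\sol(\Gamma_w)$ — I'd check that failing all of \eqref{cond:1}–\eqref{cond:3} means $o_i$ either lies entirely on the far side of $\rho(w)$, or meets $\rho^*(w)$ with its tangent line placing the relevant intersection on the far side of $\ell$; either way, using $o_i\cap C\neq\emptyset$ and facts (a),(b), the witness point lies in $C\setminus\sol(\Gamma_w)$.

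The main obstacle I anticipate is the careful case analysis for the tangent-line condition~\eqref{cond:3} and its negation, together with the degenerate geometry when the wedge between $\rho_0$ and $\rho(w)$ is non-convex (reflex) or when the line through $vw$ is parallel to or fails to cross $\rho_0$ — precisely the four configurations in Figure~\ref{fig:dp-fptas-new}(ii). In those cases ``the side of $\ell$ containing $\vbot$'' and ``the side of $\ell$ containing the part of $C$ after $w$'' need to be pinned down via the orientation of the clockwise traversal rather than via $\vbot$ directly, since $\vbot$ may lie on the ``wrong'' side. I would dispatch this by always phrasing the half-plane in terms of the local orientation of the chain at the edge $vw$ (the polygon interior lies locally on a fixed side), which is consistent across all four cases. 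A secondary subtlety is that $o_i$ may be non-strictly tangent to $\rho^*(w)$ or touch $\ell$ only at $w$; I would treat all boundaries as closed (consistent with the paper's convention that intersecting polygons include their interior and ``intersect'' means nonempty intersection), so these boundary cases fall on the safe side by definition.
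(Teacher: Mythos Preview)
Your overall approach---set up the geometric confinement of $\sol(\Gamma_w)$ and $C\setminus\sol(\Gamma_w)$, then check conditions~(\ref{cond:1})--(\ref{cond:3}) and their negations one by one---is exactly what the paper does. Fact~(a) is correct and is the paper's key observation.

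However, fact~(b) is wrong as stated, and this is not a degeneracy issue but a basic misreading of the picture. Since $vw$ is an \emph{edge} of the convex polygon $C$, the line $\ell=\ell(v,w)$ is a supporting line of $C$: all of $C$---both $\sol(\Gamma_w)$ and $C\setminus\sol(\Gamma_w)$---lies in the same closed half-plane bounded by $\ell$. The line $\ell$ does not split $C$ at all; the only splitter is $\rho(w)$ (your fact~(a)). The correct role of $\ell$ in the argument is the opposite of what you wrote: it confines $C$ (hence $C\setminus\sol(\Gamma_w)$) from the outside. Combined with fact~(a), this pins $C\setminus\sol(\Gamma_w)$ into a wedge with apex $w$, bounded by the forward ray of $\ell(v,w)$ beyond $w$ and by the segment $w\vbot$. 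That wedge is the ``blue region'' of Figure~\ref{fig:dp-fptas-new}(ii).

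This matters precisely for condition~(\ref{cond:3}) and its negation, where you invoke fact~(b). The paper's argument there is: the tangent $\ell_w(o_i)$ passes through $p\in\rho^*(w)$ and through a point on the half-line from $w$ through $v$; since $o_i$ lies in the closed half-plane of $\ell_w(o_i)$ not containing $w$ (because $p$ is the \emph{first} hit along $\rho^*(w)$), and the entire blue wedge lies in the half-plane that \emph{does} contain $w$, the tangent separates $o_i$ from $C\setminus\sol(\Gamma_w)$. Your version (``forces $p$ to lie on the $\sol(\Gamma_w)$-side of $\ell$'') is not the right mechanism, because there is no $\sol(\Gamma_w)$-side of $\ell$ distinct from the $C\setminus\sol(\Gamma_w)$-side. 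Once you replace fact~(b) by ``$\ell$ supports $C$'' and argue via the wedge at $w$, the rest of your plan goes through and coincides with the paper's proof.
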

\begin{proof}
Because $C$ is convex, it must lie in $h^+$ (the closed half-plane above the horizontal
line though~$\vbot$) and in the region to the right 
of the line $\ell(v,w)$ through $v$ 
and $w$ and directed from $v$ to $w$. This region is split into two regions by~$\rho(w)$;
the region to the left of $\rho(w)$ contains $\sol(\Gamma_w)$ and the region
to the right of $\rho(w)$ contains $C\setminus \sol(\Gamma_w)$. 
See Figure~\ref{fig:dp-fptas-new}(ii) for various possible configurations. 

Consider an object~$o_i\in\cO(v,w)$. 
If $o_i$ is in $\cO(v,w)$ because of condition~(\ref{cond:1}), then it cannot intersect 
$C\setminus \sol(\Gamma_w)$ and, hence, it must intersect~$\sol(\Gamma_w)$.
If $o_i$ is in $\cO(v,w)$ because of condition~(\ref{cond:2}), then trivially
$o_i$ intersects $\sol(\Gamma_w)$. Finally, if $o_i$ is in $\cO(v,w)$ because of
condition~(\ref{cond:3}) then again it cannot intersect $C\setminus \sol(\Gamma_w)$, 
and so it must intersect~$\sol(\Gamma_w)$.

On the other hand, suppose that $o_i\not\in \cO(v,w)$. If $o_i$
does not intersect $\rho(w)$ then this means that $o_i$ does not
intersect the wedge defined by $\rho_0$ and $\rho(w)$. Hence, 
it cannot intersect $\sol(\Gamma_w)$ and so it must 
intersect~$C\setminus \sol(\Gamma_w)$. If $o_i$ intersects $\rho(w)$
then the reason it is not in $\cO(v,w)$ is that the tangent line~$\ell_w(o_i)$
does not intersect the half-line containing~$vw$ and ending at~$w$, 
which means the tangent separates $o_i$ from $\sol(\Gamma_w)$. 
Hence, $o_i$ must intersect $C\setminus \sol(\Gamma_w)$.
\end{proof}
We can now state our dynamic program. To this end we define, for two points $v,w\in V^+$ 
with $v\prec w$, a table entry $A[v,w]$ as follows.
\\[2mm]
\begin{tabular}{ll}
$A[v,w]$ \ := 
& \begin{minipage}[t]{11.5cm}
   the minimum length of a convex chain~$\Gamma$ from $\vbot$ to $w$ 
   whose last edge is $vw$ and such that all objects in $\cO(v,w)$ intersect~$C(\Gamma)$,
   \end{minipage}
\end{tabular}\\[2mm]
where the minimum is~$\infty$ if no such chain exists.
Lemma~\ref{lem:fptas-correctness} implies the following.
\begin{observation}\label{obs:global_opt_from_dp}
Let $\Gamma^*$ be a shortest convex chain from $\vbot$ to $\vbotc$
such that $\sol(\Gamma^*)$ is an intersecting polygon for $\cO$. Then 
$\mylength(\Gamma^*) = \min \{ A[v,\vbotc] : v\in V \mbox{ and } \cO(v,\vbotc) = \cO \}$.
\end{observation}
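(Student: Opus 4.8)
Here is a plan for proving Observation~\ref{obs:global_opt_from_dp}.

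The plan is to prove the two inequalities separately; both are short consequences of the definition of the entries $A[\cdot,\cdot]$ together with Lemma~\ref{lem:fptas-correctness}. For the inequality ``$\leq$'', I would fix any $v\in V$ with $\cO(v,\vbotc)=\cO$ and $A[v,\vbotc]<\infty$ and let $\Gamma$ be a convex chain realizing this entry: $\Gamma$ runs from $\vbot$ to $\vbotc$, its last edge is $v\vbotc$, and every object of $\cO(v,\vbotc)=\cO$ meets $\sol(\Gamma)$. Since $\vbotc$ is merely a copy of $\vbot$, the closing segment $\vbotc\vbot$ is a single point, so $\sol(\Gamma)$ is exactly the convex polygon bounded by $\Gamma$; it is therefore an intersecting polygon for $\cO$, and $\Gamma$ is admissible in the problem defining $\Gamma^*$. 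Minimality of $\Gamma^*$ then gives $\mylength(\Gamma^*)\leq\mylength(\Gamma)=A[v,\vbotc]$, and taking the minimum over all admissible $v$ proves this direction. (If there is no admissible $v$, or all the corresponding entries are $\infty$, both sides are $\infty$ and there is nothing to prove.)

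For the inequality ``$\geq$'', I would start from $\Gamma^*$ itself and let $v^*$ be its second-to-last vertex, so that $v^*\vbotc$ is the last edge of $\Gamma^*$. Note $v^*\in V$: it is distinct from $\vbotc$ since it precedes it on the chain, and it cannot equal $\vbot$, for then $\Gamma^*$ would consist of a single point, which cannot be an intersecting polygon as $\cO$ is not stabbable by a point. Every vertex of $\Gamma^*$ other than $\vbot$ lies in $h^+\setminus\rho_0$ and is thus lexicographically larger than $\vbot$, so $\vbot$ is the lexicographically smallest vertex of $C:=\sol(\Gamma^*)$. Now I would apply Lemma~\ref{lem:fptas-correctness} to $C$, with $w=\vbotc$ and the edge $v^*\vbotc$: the clockwise portion $\Gamma_{\vbotc}$ of $\bd C$ from $\vbot$ to $\vbotc$ is the entire boundary of $C$, so $C\setminus\sol(\Gamma_{\vbotc})=\emptyset$, and the lemma asserts that every object of $\cO\setminus\cO(v^*,\vbotc)$ meets this empty set --- which forces $\cO(v^*,\vbotc)=\cO$. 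Hence $\Gamma^*$ is admissible for the entry $A[v^*,\vbotc]$, so $A[v^*,\vbotc]\leq\mylength(\Gamma^*)$, and a fortiori $\min\{A[v,\vbotc]:v\in V,\ \cO(v,\vbotc)=\cO\}\leq\mylength(\Gamma^*)$. Combining the two inequalities yields the claimed equality.

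The only point requiring a little care --- and the only place I expect any friction --- is the bookkeeping around the identification $\vbotc=\vbot$: one must observe that walking clockwise along $\bd C$ from $\vbot$ to its own copy $\vbotc$ traverses all of $\bd C$, which is precisely why the ``leftover'' region $C\setminus\sol(\Gamma_{\vbotc})$ in Lemma~\ref{lem:fptas-correctness} is empty and hence why $\cO(v^*,\vbotc)$ is forced to equal $\cO$. Everything else is a direct unwinding of the definitions, including the degenerate doubled-segment case, where $C$ still has $\vbot$ as its lexicographically smallest endpoint and the argument goes through unchanged; so I do not anticipate a genuine obstacle.
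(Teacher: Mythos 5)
Your proposal is correct and follows essentially the same route as the paper: the ``$\leq$'' direction by unwinding the definition of $A[v,\vbotc]$ with $\cO(v,\vbotc)=\cO$, and the ``$\geq$'' direction by applying Lemma~\ref{lem:fptas-correctness} to the last edge $v^*\vbotc$ of $\Gamma^*$ and observing that $\sol(\Gamma^*)\setminus\sol(\Gamma^*_{\vbotc})=\emptyset$ forces $\cO(v^*,\vbotc)=\cO$. Your extra care about $v^*\in V$ and the degenerate doubled-segment case is fine but not needed beyond what the paper already assumes.
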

\begin{proof}
Consider any $v\in V$ such that $\cO(v,\vbotc) = \cO$. By definition of~$A[v,\vbotc]$, 
there is a convex chain $\Gamma$ from $\vbot$ to $\vbotc$ such that $\sol(\Gamma)$ 
is an intersecting polygon for~$\cO$. 
Hence, $\mylength(\Gamma^*) \leq \min \{ A[v,\vbotc] : v\in V \mbox{ and } \cO(v,\vbotc) = \cO \}$.

Conversely, let $\Gamma^*$ be a minimum-length convex chain from $\vbot$ to $\vbotc$
such that $\sol(\Gamma^*)$ is an intersecting polygon for $\cO$. Let $v$ be the
vertex preceding $\vbotc$ on $\Gamma^*$. By Lemma~\ref{lem:fptas-correctness}
all objects in $\cO\setminus \cO(v,\vbotc)$ intersect $\sol(\Gamma^*) \setminus \sol(\Gamma^*_{\vbotc})$,
where $\Gamma^*_{\vbotc}$ is the part of $\Gamma^*$ from $\vbot$ to $\vbotc$.
But trivially  $\Gamma^* = \Gamma^*_{\vbotc}$ and so 
$\sol(\Gamma^*) \setminus \sol(\Gamma^*_{\vbotc})=\emptyset$. 
Since $\sol(\Gamma^*)$ is an intersecting polygon for $\cO$, this implies that
$\cO\setminus \cO(v,\vbotc)=\emptyset$. Hence, $\cO = \cO(v,\vbotc)$,
and so $\mylength(\Gamma^*) \geq \min \{ A[v,\vbotc] : v\in V \mbox{ and } \cO(v,\vbotc) = \cO \}$.
\end{proof}
Hence, if we can compute all table entries $A[v,w]$ then we have indeed solved our problem. 
(The lemma only tells us something about the value of an optimal solution, but given the 
table entries $A[v,w]$ we can compute the solution itself in a standard way.)

The entries $A[v,w]$ can be computed using the following lemma. Define $\Delta(\vbot,v,w)$ 
to be the triangle with vertices $\vbot,v,w$.
\begin{lemma} \label{lem:dp-fptas}
Let $v,w\in V^+$ with $v\prec w$.
Let $V(v,w)$ be the set of all points $u\in V\cup \{\vbot\}$ with $u\prec v$ 
such that $u$ lies below the line $\ell(v,w)$ through $v$ and $w$ and such that
all objects in $\cO(v,w) \setminus \cO(u,v)$ intersect $\Delta(\vbot,v,w)$. Then
\[
A[v,w] = \! \left\{ \begin{array}{ll}
                    |\vbot w| & \mbox{if $v\!=\!\vbot$ and all objects in $\cO(v,w)$ intersect $\vbot w$} \\
                    \infty & \mbox{if $v\!=\!\vbot$ and not all objects in $\cO(v,w)$ intersect $\vbot w$} \\
                    |vw|\!+\!\!\min\limits_{u\in V(v,w)} \!\! A[u,v] & \mbox{otherwise} 
                    \end{array}
             \right.
\]
\end{lemma}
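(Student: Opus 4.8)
The plan is to prove the recurrence by a case analysis on whether $v = \vbot$ and, in the main case, by showing the two inequalities $A[v,w] \leq |vw| + \min_{u \in V(v,w)} A[u,v]$ and $A[v,w] \geq |vw| + \min_{u \in V(v,w)} A[u,v]$ separately. First I would dispose of the base cases: if $v = \vbot$, then the last (and only) edge of the chain $\Gamma$ is $\vbot w$, so $\sol(\Gamma)$ is the degenerate polygon consisting of the segment $\vbot w$, and the chain is feasible iff every object in $\cO(v,w)$ meets $\vbot w$; the length is then $|\vbot w|$, and otherwise $A[v,w] = \infty$. This matches the first two branches.

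For the main case ($v \neq \vbot$), consider an optimal chain $\Gamma$ realizing $A[v,w]$: it ends with edge $vw$, so removing $w$ leaves a convex chain $\Gamma'$ from $\vbot$ to $v$ whose last edge is $uv$ for some $u$ with $u \prec v$. Convexity of $\sol(\Gamma)$ forces $u$ to lie below the line $\ell(v,w)$ (otherwise the turn at $v$ would be reflex), so $u$ satisfies the first defining condition of $V(v,w)$. Next I would verify the covering condition: $\sol(\Gamma) = \sol(\Gamma') \cup \Delta(\vbot,v,w)$, so any object of $\cO(v,w)$ that is \emph{not} already forced to meet $\sol(\Gamma')$ — i.e.\ not in $\cO(u,v)$, using Lemma~\ref{lem:fptas-correctness} applied to $\sol(\Gamma')$ inside $\sol(\Gamma)$ — must be intersected by the triangle $\Delta(\vbot,v,w)$. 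Hence $u \in V(v,w)$. Moreover the sub-chain $\Gamma'$ must cover all of $\cO(u,v)$ (again by Lemma~\ref{lem:fptas-correctness}, since $\sol(\Gamma')$ is the clockwise part of $\bd\,\sol(\Gamma)$ up to $v$), so $\mylength(\Gamma') \geq A[u,v]$, giving $A[v,w] = |vw| + \mylength(\Gamma') \geq |vw| + A[u,v] \geq |vw| + \min_{u \in V(v,w)} A[u,v]$.

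For the reverse inequality, take any $u \in V(v,w)$ and an optimal chain $\Gamma'$ realizing $A[u,v]$ (if $A[u,v] = \infty$ there is nothing to prove), and append the edge $vw$ to form $\Gamma$. I need to check that $\Gamma$ is a genuine convex chain: the vertices appear in $\prec$-order because $u \prec v \prec w$, and the chain stays convex since $u$ lies below $\ell(v,w)$ — this is exactly what the membership condition in $V(v,w)$ was designed to guarantee, though I should double-check that no \emph{earlier} vertex of $\Gamma'$ causes a convexity violation with the new edge; this follows because $\Gamma'$ is already convex and $u$ being below $\ell(v,w)$ means the whole chain $\Gamma'$ lies (weakly) on the correct side. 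Then $\sol(\Gamma) = \sol(\Gamma') \cup \Delta(\vbot,v,w)$, and every object in $\cO(v,w)$ is covered: those in $\cO(u,v)$ meet $\sol(\Gamma')$ by definition of $A[u,v]$, and those in $\cO(v,w) \setminus \cO(u,v)$ meet $\Delta(\vbot,v,w)$ by the $V(v,w)$ condition. So $\Gamma$ is feasible for the subproblem $A[v,w]$ with length $|vw| + A[u,v]$, and minimizing over $u$ gives $A[v,w] \leq |vw| + \min_{u \in V(v,w)} A[u,v]$.

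The main obstacle I anticipate is the geometric bookkeeping in the convexity argument: carefully justifying that "$u$ below $\ell(v,w)$" together with convexity of $\Gamma'$ is both necessary and sufficient for $\Gamma$ to be convex, and making sure the various ray/tangent configurations from Figure~\ref{fig:dp-fptas-new}(ii) are all handled when invoking Lemma~\ref{lem:fptas-correctness} on the sub-chain. In particular, one must confirm that the set $\cO(u,v)$ of objects "assigned" to the sub-chain $\Gamma'$ as a standalone optimal chain coincides with the set of objects that Lemma~\ref{lem:fptas-correctness} forces onto $\sol(\Gamma')$ when $\Gamma'$ sits inside the larger polygon $\sol(\Gamma)$ — this consistency is the crux that makes the dynamic program correct, and it is where I would spend the most care.
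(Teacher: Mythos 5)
Your proposal is correct and follows essentially the same route as the paper's proof: the base cases are read off from the definition of $A[\vbot,w]$, and the recursive case is established by the two standard inequalities, taking the predecessor $u^*$ of $v$ on an optimal chain, using convexity to place $u^*$ below $\ell(v,w)$ and Lemma~\ref{lem:fptas-correctness} (via $\Delta(\vbot,v,w)=\sol(\Gamma^*)\setminus\sol(\Gamma^*\setminus vw)$) to show $u^*\in V(v,w)$, and conversely appending $vw$ to an optimal sub-chain for any $u\in V(v,w)$. The consistency issue you flag at the end (that the objects Lemma~\ref{lem:fptas-correctness} forces onto the sub-chain match $\cO(u,v)$) is exactly the point the paper compresses into ``$A[u^*,v]=\mylength(\Gamma^*\setminus vw)$ by induction,'' so your treatment is, if anything, slightly more explicit.
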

\begin{proof}
The first two cases immediately follow from the definition of~$A[v,w]$, since for
$v=\vbot$ we have $\Gamma = \vbot w$ and so $\sol(\Gamma)=\vbot w$. 

To prove the third case, let $\Gamma^*$ be a minimum-length convex chain from $\vbot$ to $w$ 
whose last edge is $vw$ and such that all objects in $\cO(v,w)$ intersect~$C(\Gamma^*)$. 
Let $A[v,w]$ be the value computed by the recursive formula given in the lemma.
We must prove that $A[v,w]=\mylength(\Gamma^*)$.

Let $u^*$ be the vertex preceding~$v$ on~$\Gamma^*$. 
Then $u^*\prec v$ and (because of convexity) $u^*$ must lie below $\ell(v,w)$. 
Now consider an object $o_i\in \cO(v,w) \setminus \cO(u^*,v)$. 
By Lemma~\ref{lem:fptas-correctness} it must be intersected by $\Delta(\vbot,v,w)$,
since $\Delta(\vbot,v,w) = \sol(\Gamma^*) \setminus \sol(\Gamma^*\setminus vw)$.
Hence, $u^*\in V(v,w)$. Moreover, $A[u^*,v]=\mylength(\Gamma^* \setminus vw)$ by induction.
Hence,
\[
A[v,w] := |vw|+\min\limits_{u\in V(v,w)} A[u,v] \leq  |vw|+ A[u^*,v] = \mylength(\Gamma^*).
\]

On the other hand, for any $u\prec v$ with $A[u,v]\neq\infty$
there is, by induction, a convex chain of length $A[u,v]$ starting at $\vbot$ 
whose last edge is $uv$ that intersects all objects in $\cO(u,v)$. 
If $u\in  V(v,w)$ then we can extend this chain to a convex chain~$\Gamma$ 
starting at $\vbot$ whose last edge is $vw$ and such that all objects 
in $\cO(v,w)$ intersect~$C(\Gamma)$. Hence, for any $u\in V(v,w)$
there is a convex chain $\Gamma$ from $\vbot$ to $w$ whose least edge is $vw$
and such that $\sol(\Gamma)$ intersects all objects in~$\cO(v,w)$.
Thus $A[v,w] := |vw|+\min\limits_{u\in V(v,w)} A[u,v]$ is at least
the minimum length of such a chain.

This finishes the correctness proof of the third case.
\end{proof}
Putting everything together, we can finish the proof of Theorem~\ref{thm:2d-fptas-perimeter}.
\begin{proof}[Proof of Theorem~\ref{thm:2d-fptas-perimeter}]
We first use algorithm $A1$ from~\cite{DumitrescuJ12} to compute the rectangle $R$ and the square $\sigma$, which as discussed can be done in $O(n^{2.373}/\eps)$ time.
For a square~$\sigma$ we guess the vertex $\vbot$ in $O(1/\eps^2)$ different ways. 

For each guess we run the dynamic-programming algorithm described above.
There are $O(1/\eps^4)$ entries $A[u,v]$ in the dynamic-programming table. The most time-consuming 
computation of a table entry is in the third case of Lemma~\ref{lem:dp-fptas}. Here we need to compute
the set $\cO(v,w)$, which can be done in $O(n)$ time by checking every $o_i\in \cO$.
For each of the $O(1/\eps^2)$ points with $u\prec v$ such that $u$ lies below the line~$\ell(v,w)$
we then check in $O(n)$ time if all objects in $\cO(v,w) \setminus \cO(u,v)$ 
intersect $\Delta(\vbot,v,w)$, so that we can compute $A[v,w]$. Hence, computing $A[v,w]$
takes $O(n/\eps^2)$ time, which implies that the whole dynamic program needs $O(n/\eps^6)$ time.

Thus the algorithm takes $O(n^{2.373}/\eps) + O(1/\eps^2) \cdot O(n/\eps^6) = O(n^{2.373}/\eps +n/\eps^{8})$ time.
\end{proof}

\begin{remark}
Although Theorem~\ref{thm:2d-fptas-perimeter} is stated only for the case where $\cO$ is a set of convex polygons, it is not too hard to extend it to other convex objects, for example disks: one just needs to replace the approximate rectangle-finding linear program of Dumitrescu and Jiang~\cite{DumitrescuJ12} with some other polynomial-time algorithm to find an (approximate) minimum perimeter intersecting rectangle in each of the $O(1/\eps)$ orientations.
\end{remark}

\section{An FPTAS for the minimum-area convex intersecting polygon}\label{sec:area-fptas}

Due to Lemma~\ref{lem:constantsize_orflat}, either there exists an approximate solution with at most $8$ vertices, or it is sufficient to compute the minimum-area convex intersecting polygon whose vertices are in a grid $G\in \cG$. Since we have no way to distinguish between these outcomes, we will compute a minimum feasible solution of at most $8$ vertices, as well as a minimum feasible polygon (if it exists) in each grid $G\in \cG$, and simply return the smallest area polygon that we have found.

In Section~\ref{sec:contantsizearea} we show how to compute an (approximate) minimum area polygon with at most $8$ vertices with known algebraic methods in $O(n^{17}\log(1/\eps))$ time; here we concentrate on adapting our dynamic programming for the minimum-perimeter problem. Let us fix a grid $G\in \cG$. Keeping the notations as before, we see that everything up to (and including) Lemma~\ref{lem:fptas-correctness} still holds. We define the subproblems as follows.\\[2mm]
\begin{tabular}{ll}
$A[v,w]$ \ := 
& \begin{minipage}[t]{11.5cm}
   the minimum area in the convex hull $C(\Gamma)$ of a convex chain~$\Gamma$ from $\vbot$ to $w$ 
   whose last edge is $vw$ and such that all objects in $\cO(v,w)$ intersect~$C(\Gamma)$,
   \end{minipage}
\end{tabular}\\[2mm]
where the minimum is $\infty$ if no such chain exists. The proof of the following observation is the same as the proof of Observation~\ref{obs:global_opt_from_dp}, one only needs to change all mentions of length or perimeter of a convex chain to the area of the convex hull of the chain.

\begin{observation}
Let $\Gamma^*$ be a convex chain from $\vbot$ to $\vbotc$ where $C(\Gamma^*)$ has minimum area and it is an intersecting polygon for $\cO$. Then $\Area(C(\Gamma^*))=\min \{A[v,\vbotc]:v\in V\text{ and }\cO(v,\vbotc)=\cO\}$.
\end{observation}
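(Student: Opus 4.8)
The plan is to mirror the proof of Observation~\ref{obs:global_opt_from_dp} verbatim, substituting ``area of the convex hull of the chain'' for ``length of the chain'' throughout; the only thing that needs checking is that none of the steps used a property specific to perimeter. First I would invoke the definition of $A[v,\vbotc]$: for any $v\in V$ with $\cO(v,\vbotc)=\cO$ there is, by definition, a convex chain $\Gamma$ from $\vbot$ to $\vbotc$ with last edge $v\vbotc$ such that $\sol(\Gamma)$ intersects all objects of $\cO$, i.e.\ $\sol(\Gamma)$ is an intersecting polygon. Taking minima over all such $v$ gives $\Area(\sol(\Gamma^*)) \leq \min\{A[v,\vbotc] : v\in V \text{ and } \cO(v,\vbotc)=\cO\}$, since $\Gamma^*$ is an area-minimizer over all feasible convex chains from $\vbot$ to $\vbotc$.

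For the reverse inequality I would take the optimal chain $\Gamma^*$ and let $v$ be the vertex preceding $\vbotc$ on it. By Lemma~\ref{lem:fptas-correctness}, applied to $C=\sol(\Gamma^*)$ with edge $v\vbotc$, every object in $\cO\setminus\cO(v,\vbotc)$ intersects $\sol(\Gamma^*)\setminus\sol(\Gamma^*_{\vbotc})$; but $\Gamma^*_{\vbotc}=\Gamma^*$ (the part of $\Gamma^*$ from $\vbot$ to $\vbotc$ is all of $\Gamma^*$), so this set difference is empty. Since $\sol(\Gamma^*)$ is an intersecting polygon for $\cO$, no object can fail to be hit, hence $\cO\setminus\cO(v,\vbotc)=\emptyset$, i.e.\ $\cO(v,\vbotc)=\cO$. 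Then $A[v,\vbotc]\leq\Area(\sol(\Gamma^*))$ because $\Gamma^*$ is itself a convex chain from $\vbot$ to $\vbotc$ with last edge $v\vbotc$ intersecting every object of $\cO(v,\vbotc)=\cO$, so it is a candidate in the minimum defining $A[v,\vbotc]$. This yields $\min\{A[v,\vbotc] : v\in V \text{ and } \cO(v,\vbotc)=\cO\} \leq \Area(\sol(\Gamma^*))$, and combining the two inequalities completes the proof.

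I do not expect any genuine obstacle here: the argument is purely combinatorial and uses only that (a) Lemma~\ref{lem:fptas-correctness} partitions the objects by which portion of $\bd C$ must intersect them, and (b) $A[v,w]$ is defined as a minimum of the same objective ($\Area\circ\sol$) over exactly the feasible chains the observation refers to. Neither of these depends on the objective being perimeter rather than area — the only place where the concrete objective entered the perimeter version was the recursive Lemma~\ref{lem:dp-fptas} (additivity $|vw|+A[u,v]$), which is \emph{not} needed for this observation. The one point worth a sentence of care is that $\Gamma^*$ exists at all, i.e.\ that some intersecting grid polygon exists for the fixed grid $G$; but this is guaranteed by Lemma~\ref{lem:constantsize_orflat} for the correct grid $G\in\cG$, and if no feasible chain exists then both sides of the claimed equality are $\infty$ and the statement still holds vacuously.
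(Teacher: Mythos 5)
Your proposal is correct and follows exactly the route the paper takes: the paper explicitly states that the proof of this observation is obtained from the proof of Observation~\ref{obs:global_opt_from_dp} by replacing length/perimeter with the area of the convex hull of the chain, which is precisely your two-inequality argument via the definition of $A[v,\vbotc]$ and Lemma~\ref{lem:fptas-correctness}. The extra remark about the vacuous $\infty$ case is a harmless (and reasonable) addition.
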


The recursion also works analogously: we only need to change the starting value, and in the recursive step we add the area of the triangle corresponding to the new segment instead of its length. The proof is again analogous to the minimum-area variant (to Lemma~\ref{lem:dp-fptas}).

\begin{lemma} \label{lem:dp-fptas-area}
Let $v,w\in V^+$ with $v\prec w$.
Let $V(v,w)$ be the set of all points $u\in V\cup \{\vbot\}$ with $u\prec v$ 
such that $u$ lies below the line $\ell(v,w)$ through $v$ and $w$ and such that
all objects in $\cO(v,w) \setminus \cO(u,v)$ intersect $\Delta(\vbot,v,w)$. Then
\begin{multline*}
    A[v,w] =
    \begin{cases}
                0 & \parbox{6cm}{if $v=\vbot$ and all objects\\ in $\cO(v,w)$ intersect $\vbot w$} \\[4mm]
                \infty & \parbox{6cm}{if $v=\vbot$ and not all objects\\ in $\cO(v,w)$ intersect $\vbot w$} \\[3mm]
                \Area(\Delta(\vbot,v,w)) + \min\limits_{u\in V(v,w)} A[u,v] & \mbox{otherwise.} 
   \end{cases}
\end{multline*}
\end{lemma}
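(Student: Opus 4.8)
The plan is to follow the proof of Lemma~\ref{lem:dp-fptas} essentially verbatim, with two bookkeeping substitutions: the length $\mylength(\Gamma)$ of a convex chain is everywhere replaced by the area $\Area(C(\Gamma))$ of its convex hull, and the length $|vw|$ contributed by the last edge is replaced by $\Area(\Delta(\vbot,v,w))$. The only ingredient that does not carry over mechanically, and the step I expect to be the main point to check, is the \emph{additivity of area under chain extension}. I would prove it as a preliminary observation: if $\Gamma$ is a convex chain from $\vbot$ to $v$ whose last edge is $uv$, with $u\prec v\prec w$ and $u$ lying below the line $\ell(v,w)$, then (as already argued in the proof of Lemma~\ref{lem:dp-fptas}) appending the edge $vw$ gives a convex chain $\Gamma'$ from $\vbot$ to $w$, and the segment $\vbot v$ decomposes the convex polygon $C(\Gamma')$ into $C(\Gamma)$ and the triangle $\Delta(\vbot,v,w)$, which have disjoint interiors. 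Hence $\Area(C(\Gamma'))=\Area(C(\Gamma))+\Area(\Delta(\vbot,v,w))$, the exact analogue of the trivial identity $\mylength(\Gamma')=\mylength(\Gamma)+|vw|$ used in the perimeter proof; iterated, it just says that $\Area(C(\Gamma))$ is the sum over the triangles $\Delta(\vbot,\cdot,\cdot)$ of the fan triangulation of $C(\Gamma)$ from $\vbot$, so the recursion indeed accumulates the hull area.

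Given this, the two base cases ($v=\vbot$) are immediate: the only convex chain from $\vbot$ to $w$ with last edge $\vbot w$ is the segment $\vbot w$, whose hull is the segment itself, of area $0$, and it is an intersecting polygon for $\cO(v,w)$ precisely when every object of $\cO(v,w)$ meets $\vbot w$; this yields the values $0$ and $\infty$, respectively.

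For the recursive case I would establish ``$\le$'' and ``$\ge$'' separately, copying the argument of Lemma~\ref{lem:dp-fptas}. For ``$\le$'': assuming $A[v,w]<\infty$, let $\Gamma^*$ attain it and let $u^*$ be its vertex preceding $v$; convexity gives $u^*\prec v$ and $u^*$ below $\ell(v,w)$, and Lemma~\ref{lem:fptas-correctness}, applied exactly as in the proof of Lemma~\ref{lem:dp-fptas}, shows that $u^*\in V(v,w)$ and that every object of $\cO(u^*,v)$ meets $C(\Gamma^*\setminus vw)$; by the inductive hypothesis the latter gives $A[u^*,v]\le\Area(C(\Gamma^*\setminus vw))$. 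Then, using area additivity,
\[
\Area(\Delta(\vbot,v,w))+\min_{u\in V(v,w)}A[u,v]\;\le\;\Area(\Delta(\vbot,v,w))+A[u^*,v]\;\le\;\Area(C(\Gamma^*))=A[v,w].
\]
For ``$\ge$'': assuming the right-hand side is finite, let $u\in V(v,w)$ attain the minimum of $A[u,v]$; by induction there is a convex chain $\Gamma$ from $\vbot$ to $v$ with last edge $uv$, hull area $A[u,v]$, intersecting every object of $\cO(u,v)$. Since $u\in V(v,w)$ we may append $vw$ to get a convex chain $\Gamma'$, and every object of $\cO(v,w)$ meets $C(\Gamma')$: those in $\cO(u,v)$ via $C(\Gamma)\subseteq C(\Gamma')$, the rest via $\Delta(\vbot,v,w)\subseteq C(\Gamma')$ by the definition of $V(v,w)$. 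Thus $\Gamma'$ is admissible for $A[v,w]$ and, by area additivity, $A[v,w]\le\Area(C(\Gamma'))=A[u,v]+\Area(\Delta(\vbot,v,w))$, matching the claimed value. If the right-hand side is $\infty$ (in particular when $V(v,w)=\emptyset$), the ``$\le$'' part already forces $A[v,w]=\infty$, since any feasible $\Gamma^*$ would yield a $u^*\in V(v,w)$ with $A[u^*,v]<\infty$. Together the two inequalities give the recursion, and apart from the area-additivity observation everything is line-by-line the proof of Lemma~\ref{lem:dp-fptas}.
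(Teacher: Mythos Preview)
Your proposal is correct and matches the paper's approach exactly: the paper itself omits the proof, stating only that it is analogous to that of Lemma~\ref{lem:dp-fptas}, and your write-up is precisely that analogy carried out with the area-additivity observation made explicit. The additional check that $\Area(C(\Gamma'))=\Area(C(\Gamma))+\Area(\Delta(\vbot,v,w))$ via the fan triangulation from $\vbot$ is the one substantive point not present verbatim in the perimeter proof, and you handle it correctly.
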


Putting everything together, we can prove Theorem~\ref{thm:2d-fptas-area}.

\begin{proof}
First, we compute the approximately optimal intersecting polygon with at most $8$ vertices in $O(n^{17}\log(1/\eps))$ time by Theorem~\ref{thm:areaconstant}, see below.
The set $\cG$ of $O(n^{10})$ candidate grids can be computed in $O(n^{10})$ time. Each grid $G\in \cG$ has $O(1/\eps^6)$ vertices, among which we guess the vertex $\vbot$ in $O(1/\eps^6)$ time.

For each guess we run the dynamic programming, whose table has $O(1/\eps^{12})$ entries $A[u,v]$. Again
the most time-consuming part of computing a table entry is in the third case of
Lemma~\ref{lem:dp-fptas-area}: for each of the $O(1/\eps^6)$ points $u\in V(v,w)$ we check if all
objects in $\cO(v,w)\setminus \cO(u,v)$ intersect $\Delta(\vbot,v,w)$. Thus computing $A[v,w]$ takes
$O(n/\eps^6)$ time, and the dynamic program takes $O(n/\eps^{18})$ time. This part of the algorithm
therefore takes $O(n^{10})\cdot O(1/\eps^6)\cdot O(n/\eps^{18})=O(n^{11}/\eps^{24})$ time.
\end{proof}

\subsection{Approximating the minimum area polygon of constantly many vertices}\label{sec:contantsizearea}

Here we will show how that one can compute an approximate minimum-area intersecting polygon with $k\leq 8$ vertices; we will run this algorithm for each $k=2,3,\dots,8$. If $k=2$, then the optimum is a degenerate polygon (a segment) of area $0$. Since the length of the segment is irrelevant, it is in fact enough to check if there is a line that intersects all input objects. If there is such a line, then it can be rotated until it passes through two vertices of input objects without affecting feasibility. Thus, in $O(n^2)$ time, we can enumerate all pairs of vertices, and for each pair we can check in $O(n)$ time whether the line through them intersects all objects. Consequently, in $O(n^3)$ time we can decide if there is a feasible solution with $k=2$ (and $\opt=0$).

Assume now that $\opt>0$ and $3\leq k \leq 8$.
Notice that any intersecting polygon that has a vertex outside the union of the boundaries of $\cO$ cannot be optimal, as such a vertex can be moved to a boundary while maintaining feasibility and decreasing the area of the intersecting polygon. We begin our algorithm by guessing the $k$ input polygon sides on which these vertices lie. These sides can be written as $p_i+\lambda_i s_i$, where $p_i$ is an arbitrary point on the polygon side, $s_i$ is a unit vector parallel to the side, and $\lambda_i$ is a real number from some interval (where the interval may be unbounded on either side). We are then looking for a polygon with vertices of the form $v_i=p_i+\lambda_i s_i$ for each $i\in [k]$.

Next, we can express the feasibility of the solution using constant-degree polynomial inequalities over the variables $\lambda_i$. Indeed, for a fixed cyclic ordering of the vertices along the boundary of the optimum polygon, the convexity can be expressed by the fact that consecutive triplets have the same orientation. Applied for all cyclic orderings $\xi$, we get
\[\bigvee_{\substack{\xi:[k]\rightarrow [k]\\ \text{ cyclic order }}} \bigwedge_{j\in [k]} \quad \Big(\mathrm{det}(v_{\xi(j)}-v_{\xi(j+1)},\; v_{\xi(j)}-v_{\xi(j+2)})>0\Big),\]
where indices are modulo $k$ and $\mathrm{det}(a,b)$ is the determinant of the $2\times 2$ matrix with columns $a$ and $b$.

In order to check feasibility for each convex polygon $\cO$, we can write a Boolean expression of $O(n)$ constraints expressing that for each object $o\in \cO$ there is a side of $o$ and a side of the intersecting polygon that intersect. These constraints are again expressible as the signs of $2\times 2$ determinants involving the vertices of some $o\in \cO$ and of the intersecting polygon.

Finally, for a target real value $\mu>0$, we can express that the area of the intersecting polygon is at most $\mu$: the area of a polygon can be expressed as the sum of signed areas of the triangles formed by its sides and a designated vertex. The area is therefore a degree-2 polynomial of the variables $\lambda_i$.

Thus, we can express that there is a size-$k$ solution with a Boolean formula of $O(n)$ polynomial inequalities of degree $2$, where the variables $\lambda_1,\dots,\lambda_k$ are existentially quantified. We get a Boolean expression $\Phi_\mu$ of $O(n)$ polynomial inequalities of degree $2$ that has $k\leq 8$ existentially quantified variables (as we can reuse the variables for different values of $k$). Formula $\Phi$ is true if and only if there is a solution on $k$ vertices of area at most $\mu$.

Basu~\etal~\cite{BasuPR96} provide an algorithm to decide the truth of a formula with $t$ existentially quantified variables that has $s$ polynomial inequalities of degree at most $d$ with $s^{t+1}d^{O(t)}$ time, which is $O(n^9)$ time in case of $\Phi_\mu$. We can therefore use a binary search strategy to find a value $\mu$ such that $\Phi_{\mu/(1+\eps)}$ is false and $\Phi_{\mu}$ is true. One can observe that it is sufficient to search among the values $\mu=(1+\eps)^z$ where $z$ is an integer. In order to ensure that the strategy can begin and it terminates, it is necessary that we have an upper bound and a positive lower bound for $\opt$.

Recall that our input is a set of points with $c\log n$-bit coordinates for some constant $c$. We can imagine that these points are grid points from a grid of size $n^c\times n^c$, and we fix the unit to be the side length of a grid cell. Observe that $\opt\leq n^{2c}$ as the circumscribed square of the grid is a valid solution. We will now work towards a lower bound.


\begin{lemma}\label{lem:arealowerbound}
Let $\Popt$ be a minimum area convex overlapping polygon for a set $\cO$ of convex polygons.
If $\Area(\Popt)>0$ then $\Area(\Popt)\geq\Omega(\frac{1}{n^{8c}})$.
\end{lemma}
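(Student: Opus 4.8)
The plan is to use the structural description of an optimal solution from Lemma~\ref{lem:minvolumevertexlocation} to express $\Area(\Popt)$ as the value of a bounded-degree algebraic expression in a bounded number of variables whose ``coefficients'' are integers of magnitude $\poly(n^c)$, and then invoke a root-separation (gap) bound. Concretely, take $\Popt$ to be a minimum-area intersecting polygon with the minimum number of vertices and, subject to that, the maximum number of points of $X$ on its boundary, so Lemma~\ref{lem:minvolumevertexlocation} applies. First I would bound the number of vertices of $\Popt$: either $\Popt$ has at most a constant number of vertices, or by Lemma~\ref{lem:minvolumevertexlocation} every vertex not in $X$ is ``pinned'' by a point of $X$ on an adjacent side; in the latter case one can still argue (as in the proofs above, using that $\Popt$ has positive area and is convex) that only a bounded number of vertices are relevant for a lower bound — more precisely, one may instead directly work with the at-most-$8$-vertex approximation guaranteed by Lemma~\ref{lem:constantsize_orflat} with, say, $\eps = 1/2$, which has area at most $\frac{3}{2}\Area(\Popt)$, so a lower bound on its area loses only a constant factor. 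That reduces the problem to lower-bounding the area of a convex polygon with $k \leq 8$ vertices, each vertex lying on a specified input edge.

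Next I would set up the algebraic system exactly as in Section~\ref{sec:contantsizearea}: each vertex is $v_i = p_i + \lambda_i s_i$, where $p_i, s_i$ have coordinates that are rationals with numerator and denominator bounded by $O(n^{c})$ (the input is on an $n^c \times n^c$ grid), and feasibility plus the cyclic-orientation constraints are polynomial inequalities of degree at most $2$ in the $\lambda_i$. The area $\mu = \Area(\Popt)$ is a degree-$2$ polynomial in $\lambda_1,\dots,\lambda_k$ with coefficients of size $\poly(n^c)$. Thus the statement ``there is a feasible size-$k$ solution of area exactly $\mu$'' is a sentence in the existential theory of the reals with a constant number of variables ($k \leq 8$ of the $\lambda_i$, plus $\mu$), $O(n)$ polynomials, each of degree $O(1)$, and integer coefficients of bit-length $O(\log n)$ after clearing denominators.

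The final step is the quantitative one: apply a standard gap/root-separation theorem for the existential theory of the reals (for instance the bounds of Basu--Pollack--Roy, or the classical results on the minimum of a nonzero algebraic number defined by a system of given degree and coefficient size) to conclude that the smallest positive value of $\mu$ realizable by such a system is at least $2^{-\poly(\text{coefficient size})}$, and then track the exponents to see that this is $\Omega(1/n^{8c})$. The main obstacle I expect is precisely this bookkeeping: one must verify that the relevant gap bound, when specialized to $O(1)$ variables, $O(n)$ constraints of degree $O(1)$, and coefficients of bit-size $\Theta(\log n)$, yields an exponent on $n$ that is a specific constant times $c$, and that this constant is at most $8$. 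This requires being careful about how the number of constraints enters the bound (it typically enters only polynomially, which is why $O(n)$ constraints still give a bound of the form $n^{O(c)}$ rather than $2^{n}$), and about the contribution of clearing the $\poly(n^c)$ denominators in the $p_i$ and $s_i$. Everything else — the reduction to constantly many vertices, writing down the polynomial system, and identifying $\Area$ as a degree-$2$ polynomial — is routine given the machinery already developed in the excerpt.
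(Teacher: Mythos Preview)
Your reduction to a constant number of vertices does not go through. Lemma~\ref{lem:constantsize_orflat} is a disjunction: for a given $\eps$ it produces \emph{either} an intersecting polygon with at most $8$ vertices \emph{or} one whose vertices lie on some grid in $\cG$; it does not guarantee the first outcome. In the grid case you have no bound on the number of vertices of the approximating polygon, so you cannot feed it into a root-separation bound with $O(1)$ variables. Without this step your plan collapses, because the number of vertices of $\Popt$ itself can be $\Theta(n)$, and then the algebraic system has $\Theta(n)$ existentially quantified variables, for which the standard gap bounds are doubly exponential rather than $n^{-O(c)}$. Even in the favourable branch where you do get $\le 8$ vertices, generic root-separation theorems (Basu--Pollack--Roy and relatives) come with large implicit constants in the exponent; there is no realistic way to ``track the exponents'' down to exactly~$8c$, so at best you would prove $\Area(\Popt)\ge n^{-Kc}$ for some unspecified absolute constant~$K$.

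The paper's proof avoids all of this with an elementary geometric argument that never bounds the number of vertices of~$\Popt$. From Lemma~\ref{lem:minvolumevertexlocation} (applied to the extremal $\Popt$) one extracts two arrangement vertices $u,v\in X\cap\bd\Popt$ and a third boundary point $w$ lying on an input edge~$e$ whose supporting line avoids $u$ and~$v$; the triangle $uvw$ is contained in~$\Popt$. Because $u$ and $v$ are intersections of lines through grid points on an $n^c\times n^c$ grid, their coordinates are rationals with denominators $O(n^{2c})$, so $|uv|\ge\Omega(n^{-4c})$; likewise the distance from $u$ (and from~$v$) to the line through~$e$ is a nonzero rational with denominator $O(n^{4c})$, hence $\ge\Omega(n^{-4c})$. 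These two explicit $\Omega(n^{-4c})$ bounds combine to give $\Area(uvw)\ge\Omega(n^{-8c})$, and since $uvw\subseteq\Popt$ the lemma follows. No quantifier elimination or root-separation machinery is needed, and the constant~$8$ in the exponent falls out of a concrete denominator count.
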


\begin{proof}
If $\Area(\Popt)>0$, then $\Popt$ has at least $3$ edges and thus contains at least two vertices from $X$ by Lemma~\ref{lem:minvolumevertexlocation}. Let $u,v\in X\cap \bd \Popt$ be such points. Notice that $\bd \Popt$ also has a vertex $w$ on some edge $e$ of an input polygon that is not incident to $a$ or $b$. We claim that the triangle $uvw$ has sides of length at least $1/n^{O(n)}$. Notice that this is sufficient to show the lemma, as it implies that the triangle has area $1/n^{O(n)}$, and since the triangle is contained in $\Popt$, the same area lower bound holds also for $\Popt$.

Consider first the point $u$. It is either an input point (and thus it has integer coordinates) or it is the intersection of two polygon edges, each of which are defined as the line through two points form the integer $n^c\times n^c$ grid. Therefore, $u$ is the intersection of the lines $a_1x+b_1y=c_1$ and $a_2x+b_2y=c_1$, where $a_i,b_i,c_i\; (i=1,2)$ are integers between $0$ and $n^c$. Elementary geometry yields that the first coordinate of the intersection can be written in the form $p_u/q_u$, where $p_u$ and $q_u$ are quadratic expressions of $a_i,b_i,c_i\; (i=1,2)$ with coefficients $\pm 1$. We also write the first coordinate of $v$ as $p_v/q_v$. Thus if the first coordinate of $u$ and $v$ are different, then they differ by at least $1/q_uq_v\geq 1/O(n^{4c})$, so applying the same argument to the second coordinates, we get
$\dist(u,v)\geq \min(|u_x-v_x|,|u_y,v_y|) \geq \Omega(1/n^{4c})$.

Similarly, the minimum distance of $u$ to the line defined by the edge $e$ is an expression of the form $p_e/q_e$ where $q$ is a degree four polynomial with constant coefficients over the coordinates of $u$ and the coordinates of the points defining $e$. Thus we have that $\dist(u,w)\geq \dist(u,e)\geq \Omega(1/n^{4c})$. An analogous argument works for the distance of $v$ and $w$, hence the area of the triangle $uvw$ is at least $\Omega(1/n^{8c})$.
\end{proof}

\begin{theorem}\label{thm:areaconstant}
Let $\cO$ be a set of convex polygons of total complexity $n$ in $\Reals^2$ and let $\eps>0$. Suppose that the minimum area convex intersecting polygon of $\cO$ has area $\opt$. Then given $\eps$ and $\cO$, we can compute in $O(n^{17}\log(1/\eps))$ time a convex intersecting polygon $C$ of $\cO$ where $\Area(C)\leq (1+\eps)\opt$.
\end{theorem}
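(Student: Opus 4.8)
The plan is to assemble the pieces set up in this section. For each target vertex count $k\in\{2,\dots,8\}$ I will compute an approximation to the minimum-area convex intersecting polygon that uses at most $k$ vertices, and return the smallest-area one found; by Lemma~\ref{lem:minvolumevertexlocation} and Lemma~\ref{lem:constantsize_orflat} this is the quantity we need here (the genuinely global optimum is then reached by combining this routine with the grid-based dynamic program in the proof of Theorem~\ref{thm:2d-fptas-area}). The case $k=2$ is the degenerate ``doubled segment'': as explained above, one checks in $O(n^2)\cdot O(n)=O(n^3)$ time whether some line through two vertices of $\cO$ meets every object, which also decides whether $\opt=0$. So assume $\opt>0$, hence $3\leq k\leq 8$ and, by the vertex-pushing observation made just before Lemma~\ref{lem:arealowerbound}, every vertex of an optimal such polygon lies in the relative interior of some input-polygon side.

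For fixed $k$ I would enumerate the $O(n^8)$ choices of the (at most eight) input-polygon sides hosting the vertices. Having fixed the sides, write the $i$-th vertex as $v_i=p_i+\lambda_i s_i$, affine in a real variable $\lambda_i$. For a candidate area value $\mu$ I build the existential sentence $\Phi_\mu$ over the reals asserting that there exist $\lambda_1,\dots,\lambda_k$ and a cyclic order of the $v_i$ under which (a) consecutive triples are positively oriented (convexity), (b) for every $o\in\cO$ some side of $o$ crosses some side of the resulting polygon (feasibility), and (c) the signed-triangle-sum area of the polygon is at most $\mu$. Because the $v_i$ are affine in the $\lambda_i$, every atom is the sign of a $2\times2$ determinant, i.e.\ a polynomial inequality of degree at most $2$; there are $O(1)$ cyclic orders to disjoin over, $O(1)$ orientation atoms, and $\sum_{o\in\cO}|o|\cdot k=O(n)$ feasibility atoms, so $\Phi_\mu$ has $s=O(n)$ atoms, degree $d=2$, and $t\leq 8$ (reused) quantified variables. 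By the algorithm of Basu, Pollack and Roy~\cite{BasuPR96}, deciding $\Phi_\mu$ takes $s^{t+1}d^{O(t)}=O(n^9)$ time; summing over the $O(n^8)$ side choices and the $O(1)$ values of $k$, one evaluation of the predicate ``is there a $\leq 8$-vertex intersecting polygon of area $\leq\mu$?'' costs $O(n^{17})$.

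It remains to find the right $\mu$. This predicate is monotone in $\mu$, so I would binary-search over the values $\mu=(1+\eps)^z$, $z\in\mathbb{Z}$, for the unique $z^*$ with $\Phi_{(1+\eps)^{z^*-1}}$ false and $\Phi_{(1+\eps)^{z^*}}$ true; a polygon witnessing $\Phi_{(1+\eps)^{z^*}}$ then has area at most $(1+\eps)^{z^*}<(1+\eps)\opt$. To bound and initialize the search, view the input coordinates as points of the $n^c\times n^c$ integer grid: the circumscribed square gives $\opt\leq n^{2c}$, and Lemma~\ref{lem:arealowerbound} (applied to the global optimum, which is positive in this case) gives $\opt=\Omega(1/n^{8c})$. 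Hence the relevant exponents $z$ lie in an interval of length $O(\log n/\eps)$, so the binary search uses $O(\log(1/\eps))$ evaluations (the extra additive $O(\log\log n)$ term is absorbed), and the total time is $O(n^{17}\log(1/\eps))$. Finally, to output an actual polygon I would rerun the algorithm of~\cite{BasuPR96} on $\Phi_{(1+\eps)^{z^*}}$ --- which also returns a sample point of each nonempty cell --- extract the $\lambda_i$, form the $v_i$, and report (over all $k$, and compared with the $k=2$ outcome) the minimum-area feasible polygon obtained.

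I expect the main obstacle to be the algebraic encoding in the middle step: one must verify that ``the candidate polygon intersects the convex object $o$'' --- not merely ``contains a vertex of $o$'' --- is expressible, for a convex polygon $o$ with an arbitrary number of sides, by a Boolean combination of constant-degree polynomial inequalities in the $\lambda_i$, and that organizing the disjunctions (over cyclic vertex orders and over the pairs consisting of a side of $o$ and a side of the candidate polygon) keeps the atom count at $O(n)$ and the number of quantified variables at most $8$; this is precisely what holds the call to~\cite{BasuPR96} at $O(n^9)$ per side-guess rather than something far larger. The second place where earlier results are essential is the termination of the binary search, which depends on the strictly positive lower bound $\opt=\Omega(1/n^{8c})$ from Lemma~\ref{lem:arealowerbound}: without it the sweep over powers of $1+\eps$ has no lower endpoint.
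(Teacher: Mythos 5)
Your proposal follows the paper's own proof essentially step for step: the separate $O(n^3)$ treatment of $k=2$, the $O(n^8)$ enumeration of hosting sides, the degree-2 existential formula $\Phi_\mu$ with $O(n)$ atoms decided in $O(n^9)$ time via Basu~\etal, and the binary search over powers of $(1+\eps)$ bounded below by Lemma~\ref{lem:arealowerbound} and above by the circumscribed grid square. It is correct and requires no further comment.
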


\begin{proof}
Recall that we guess the $k\leq 8$ input polygon sides, giving $O(n^k)$ options to start with. For a fixed choice of $k$ sides, let $\Phi_\mu$ be the formula described above with target value $\mu$. The truth of the formula for any $\mu>0$ can be decided in time $n^{O(1)}$. By our upper bound and by Lemma~\ref{lem:arealowerbound} the range of possible values for $\opt$ is between $\Omega(1/n^{8c})$ and $n^{2c}$, thus the number of distinct values to consider is $O(\log_{1+\eps}\frac{n^{2c}}{1/n^{8c}})=O(\frac{\log n}{\eps})$. Doing binary search over this many values leads to $O(\log\frac{\log n}{\eps})=O(\log\log n +\log(1/\eps))$ evaluations of the formula $\Phi$, which by Basu~\etal~\cite{BasuPR96} needs $O(n^9)$ time, resulting in an overall running time of $O(n^8)\cdot O(n^9\log(1/\eps))=O(n^{17}\log(1/\eps))$.
\end{proof}




\section{An exact algorithm for the minimum-perimeter intersecting polygon of segments}\label{sec:exact}
We describe an exact algorithm to compute a minimum-perimeter intersecting object
for a set~$\cO$ of line segments\footnote{Although we describe our algorithm for non-degenerate segments, all our arguments also work if some (or all) of the segments are in fact lines, rays, or even points.} in the plane. Consider an optimal solution~$\Popt$,
and let $Y$ be the set of all  endpoints of the segments in~$\cO$.
 The exact algorithm is based on 
two subroutines for the following two
problems. 
As before, whenever we talk about intersecting polygons, we implicitly 
require them to be convex. \new{Figure~\ref{fig:subroutines} illustrates the polygons computed by these subroutines.}

\begin{figure}[t]
\begin{center}
\includegraphics{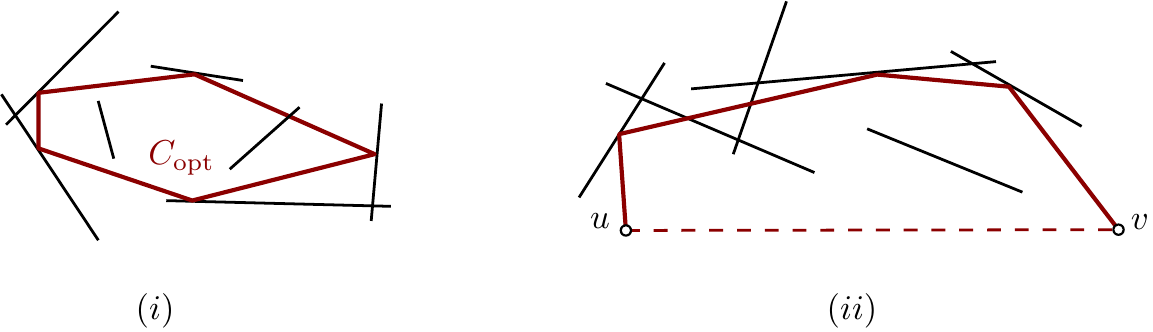}
\end{center}
\caption{\new{(i) Subroutine I computes a minimum perimeter intersecting polygon whose vertices are not segment endpoints.
             (ii) Subroutine II computes a polygon with fixed edge $uv$ for some (sub)set of segments. With the exception of $u$ and $v$, the polygon's vertices are not allowed to be segment endpoints.}}
\label{fig:subroutines}
\end{figure}

\begin{itemize}
\item \emph{Subroutine~I}: 
      If $\cO$ admits a minimum-perimeter intersecting polygon with none of its vertices being in $Y$, then compute such a polygon. Otherwise compute a feasible intersecting polygon, or report $+\infty$.
\item \emph{Subroutine~II}:
      Given two points $u,v\in \Reals^2$ and a subset $\cO'\subseteq \cO$,
      decide if $\cO'$ admits a minimum-perimeter intersecting polygon that has $uv$ as one of its edges
      and none of whose other vertices belongs to $Y$ and, if so, compute a 
      minimum-perimeter such intersecting polygon. If no such minimum-perimeter intersecting polygon exists, report $+\infty$. Note that we allow $u=v$, in which case
      the edge $uv$ degenerates to a point.
\end{itemize}

\subsection{Subroutines~I and~II}
\label{sec:exact-subroutine}

 The goal of this subsection is to show the following theorem.

\begin{theorem} \label{thm:subroutines}
There exist exact algorithms for \emph{Subroutine~I} and \emph{Subroutine~II} that run 
in time $O(n^6\log n)$ and $O(n^3\log n)$, respectively.
\end{theorem}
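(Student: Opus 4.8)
The engine for both subroutines is the algorithm of Dror~\etal~\cite{tour_seq_polyg} for the \emph{touring-polygons problem}: given a start region, a target region, and a \emph{sequence} $R_1,\dots,R_m$ of convex regions, it computes a shortest path from the start region to the target region that visits $R_1,\dots,R_m$ in this order. We feed it the segments of $\cO$, each viewed as a degenerate convex region; for $m\le n$ segments this runs in $O(n^2\log n)$ time. The whole reason the subroutines restrict attention to intersecting polygons whose vertices avoid the endpoint set $Y$ is that, for such a polygon, we can pin down the order in which its boundary crosses the segments, which is exactly the input Dror~\etal\ needs. So the plan is: (1) prove a structural ``ordering lemma'' fixing the visiting order; (2) enumerate polynomially many configurations that determine this order concretely; (3) run Dror~\etal\ for each and return the best; (4) do the time bookkeeping.

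\emph{The ordering lemma (the crux).} I would first establish: if $\sol$ is a minimum-perimeter intersecting polygon for a set of segments with no vertex of $\sol$ in $Y$, then (after an infinitesimal perimeter-non-increasing perturbation that removes any collinear overlaps) $\bd\sol$ meets each segment it is responsible for in transversal crossings through relative interiors of edges of $\sol$, and, walking once around $\bd\sol$, these crossings occur in the cyclic order of the \emph{directions of the supporting lines} of the segments. The reason is that a line crosses $\bd\sol$ in at most two points; directing each supporting line consistently and marking the crossing at which $\bd\sol$ passes from one side to the other, this marked point moves monotonically around $\bd\sol$ as the line direction rotates, so the marked crossings are slope-sorted, and a segment crossed twice simply appears at both of its positions. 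This step is precisely where we use that the objects are segments: for a general convex object $\bd\sol$ may ``touch'' it without the slope argument fixing a position, and then we cannot commit to a visiting order.

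\emph{Subroutine~II.} Here $u,v,\cO'$ are given, so the edge $\overline{uv}$ is fixed; discard from $\cO'$ the segments already met by $\overline{uv}$, leaving a set $\cO''$ that the convex chain from $u$ to $v$ must intersect. By the ordering lemma, sorting $\cO''$ by slope, with the cyclic offset anchored at the direction of $\overline{uv}$, determines the order in which the chain crosses them; the only residual ambiguity is, for the few segments that the optimal chain could cross twice, at which of their two slope-positions to insert them. We resolve this by trying the $O(n)$ ``cut'' configurations of the cyclic slope order, and for each run Dror~\etal\ with start region $\{u\}$, target region $\{v\}$, and the resulting sequence $\seq$. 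Since $\seq$ is slope-sorted, the returned path can be taken to be a convex chain, so gluing it to $\overline{uv}$ gives a convex intersecting polygon; conversely the optimal chain is among the paths considered, so it is found. The cost is $O(n)\cdot O(n^2\log n)=O(n^3\log n)$.

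\emph{Subroutine~I.} We want the minimum over all intersecting polygons with no vertex in $Y$. We guess an edge $uv$ of the optimum anchoring the cyclic order: the segment containing $u$ ($O(n)$ choices), the segment containing $v$ ($O(n)$ choices), and a combinatorial description of which segments $\overline{uv}$ meets together with the cut point of the cyclic slope order ($O(n^2)$ choices). As in Subroutine~II, each such guess yields a slope-sorted sequence $\seq$, and we invoke Dror~\etal\ with $u$ and $v$ now ranging over their segments, i.e.\ as endpoint regions. We return the shortest convex intersecting polygon produced: if the true minimum-perimeter solution avoids $Y$ it is found; otherwise every run still outputs a feasible (possibly suboptimal) convex intersecting polygon, which we return, or $+\infty$ if none is produced. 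The total time is $O(n^4)\cdot O(n^2\log n)=O(n^6\log n)$. The dominant difficulty throughout is the ordering lemma — proving slope-order crossing and, for correctness and for keeping the configuration count polynomial, handling the segments crossed twice; everything else is routine bookkeeping around the black-box use of Dror~\etal.
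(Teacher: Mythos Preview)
There is a genuine gap: you feed the \emph{segments} themselves to the touring-polygons algorithm, but the boundary chain $\Gamma$ of a minimum-perimeter intersecting polygon $C(\Gamma)$ need not visit every segment that misses $\overline{uv}$. A segment $o_i$ may lie entirely in the interior of $C(\Gamma)$; then $\Gamma$ never touches $o_i$, yet your TPP instance forces the path through it and returns something strictly longer than~$\Gamma$. Your ordering lemma, phrased in terms of transversal crossings of the boundary with supporting lines, does not account for such interior segments either. The paper's fix is to replace each $o_i$ by the \emph{half-plane} $H_i$ bounded by $\Line(o_i)$ and not containing $u$: any line through the interior of $C(\Gamma)$ meets $\Gamma$, so $\Gamma$ is guaranteed to visit every $H_i$, and the $H_i$ acquire a single canonical clockwise order by their outward normals. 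With this, Subroutine~II becomes a \emph{single} TPP call on $u,H_1,\dots,H_n,v$---no cut configurations are needed---and correctness (that the half-plane TPP optimum equals $\Gamma$ whenever $\Gamma$ avoids~$Y$) follows from uniqueness of TPP optima together with a convex-combination shortening argument.

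Two further corrections. Your TPP running time is off by a factor of~$n$: Theorem~2 of~\cite{tour_seq_polyg} gives $O(nk^2\log n)$ for $k$ regions of total complexity~$n$, hence $O(n^3\log n)$ per call on $n$ segments or half-planes, not $O(n^2\log n)$; with your $O(n)$ and $O(n^4)$ enumerations the bounds you claim would actually be $O(n^4\log n)$ and $O(n^7\log n)$. And for Subroutine~I the paper does not guess an anchoring edge at all: it invokes the watchman-route algorithm of Carlsson, Jonsson and Nilsson~\cite{DBLP:journals/dcg/CarlssonJN99} to produce $O(n^3)$ \emph{event points}, one of which is guaranteed to lie on the optimal tour, and runs one half-plane TPP call from each, for a total of $O(n^3)\cdot O(n^3\log n)=O(n^6\log n)$.
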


We want to establish that a portion of a minimum-perimeter polygon $\Gamma$ that does not contain any segment endpoints, is tractable in the sense that we can compute it with a black-box application of~\cite{tour_seq_polyg}. More specifically, we will prove Theorem~\ref{thm:subroutines} by invoking an algorithm for the Touring Polygons Problem~\cite{tour_seq_polyg} as a black box. For completeness we repeat the relevant definitions and results from~\cite{tour_seq_polyg}. We start by defining the Touring Polygons Problem (TPP). Note that in~\cite{tour_seq_polyg} a more general version is considered that requires the solution to remain within a fenced area, but we are only interested in the "unfenced" special case which slightly simplifies notation.

\begin{definition}[(unfenced) Touring Polygons Problem (TPP)\cite{tour_seq_polyg}]
 In the Touring Polygons Problem (TPP) we are given a sequence of convex polygons $P_1,P_2\dots P_k$, a starting point $s=P_0$, and an
 ending point $t=P_{k+1}$. We say that a path $\pi$ visits $P_i$, if $\pi\cap P_i\neq \varnothing$, and that $\pi$ visits the polygon sequence $P_1,\dots P_k$ if there exist points $q_i\in P_i$ such that the $q_i$'s appear in order of index along $\pi$. We seek to output the shortest path $\pi$ starting at $s$ and visiting the sequence of input polygons before finishing at $t$.
\end{definition}

Let $n$ be  the total complexity of the $k$ polygons input to TPP. Then:
\begin{theorem}[Theorem~$2$ in \cite{tour_seq_polyg}]
\label{thm:tpp} 
    The TPP for arbitrary convex polygons $P_i$ can be solved in time $O(nk^2\log n)$, using $O(kn)$ space.
\end{theorem}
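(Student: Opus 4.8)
The plan is to follow the ``continuous Dijkstra'' / shortest-path-map paradigm, processing the polygons $P_1,\dots,P_k$ one at a time. The first ingredient is a structural lemma: an optimal touring path $\pi^*$ is \emph{taut}, i.e.\ not locally shortenable. Between two consecutive polygons whose constraints are active, $\pi^*$ is a straight segment; at the boundary of an active polygon $P_j$ it either bends at a vertex of $P_j$, or touches the relative interior of an edge $e$ of $P_j$, and in that case the incoming and outgoing segments make equal angles with $e$ --- a mirror-reflection condition forced by convexity and the triangle inequality. The payoff is the standard \emph{unfolding} trick: reflecting the source $s$ successively through the edges $\pi^*$ bounces off straightens the visited prefix into a single segment, so that for a fixed combinatorial type of prefix path (which vertices/edges of $P_1,\dots,P_i$ are hit, and how) the length of a shortest path from $s$ to a point $x$ visiting $P_1,\dots,P_i$ has the form $|x-r|+c$ for a ``root'' point $r$ and a constant offset $c$ determined by the type.

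I would then define, for each $i$, the shortest-path map $\mathrm{SPM}_i$: a subdivision of $\partial P_{i+1}$ (plus the part of the exterior reachable by going straight through $P_{i+1}$) into cells, each labeled by a root $r$ and offset $c$, so that the optimal prefix distance $d_i(x)$ equals $|x-r|+c$ on that cell. Restricting attention to $\partial P_{i+1}$ is justified by convexity of $P_{i+1}$: the constraint of visiting $P_{i+1}$ is discharged at the path's first contact with $P_{i+1}$, so it suffices to know $d_i$ on $\partial P_{i+1}$ and propagate onward. The incremental step from $\mathrm{SPM}_i$ to $\mathrm{SPM}_{i+1}$ has two cases. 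In the \emph{inactive} case the path crosses $P_{i+1}$ without bending, and we carry each old root $r$ forward, restricting its validity to the portion of $\partial P_{i+2}$ ``illuminated'' from $r$ through the corresponding cell of $\partial P_{i+1}$. In the \emph{active} case, for every edge $e$ of $P_{i+1}$ and every root $r$ of $\mathrm{SPM}_i$ visible to a sub-arc of $e$ we form the image $r'$ of $r$ reflected across $\mathrm{line}(e)$ with an updated offset, and for every vertex $u$ of $P_{i+1}$ we add the ``root'' $u$ itself with offset $d_i(u)$. Taking, along the convex curve $\partial P_{i+2}$, the lower envelope of all the resulting additively weighted cone functions $x\mapsto|x-r'|+c'$ yields $\mathrm{SPM}_{i+1}$; this envelope has linear complexity in the number of candidate roots and can be computed by a plane sweep in $O(m\log m)$ time. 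Initialization is $\mathrm{SPM}_0$ on $\partial P_1$ with the single root $s$ and offset $0$; the answer is obtained by one final propagation of $\mathrm{SPM}_k$ to the point $t$, followed by tracing back root/offset labels to reconstruct $\pi^*$.

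The main obstacle, and the technical heart, is bounding the combinatorial complexity so the recursion does not blow up. I would prove by induction that $\mathrm{SPM}_i$ has $O(\sum_{j\le i+1}|P_j|)=O(n)$ cells, the crucial point being that two neighbouring cells of $\mathrm{SPM}_i$ differ only in the \emph{last} active feature (an edge or vertex of $P_i$), all earlier reflection history being inherited unchanged from one cell of $\mathrm{SPM}_{i-1}$; hence the set of distinct live roots stays linear-size rather than exponential, and the ``which constraints are active'' pattern varies monotonically along $\partial P_{i+1}$. Granting this, processing $P_{i+1}$ involves $O(n)$ old roots and $O(|P_{i+1}|)$ reflecting features, produces $O(n)$ candidate functions, and costs $O(n\log n)$ for the envelope plus the bookkeeping to propagate and re-locate the $O(n)$ live roots across the new polygon; summing over the $k$ steps, together with the $O(n)$ map-size bound, gives the stated $O(nk^2\log n)$ running time. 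For space it suffices to keep $\mathrm{SPM}_i$ (size $O(n)$) and, for path reconstruction, one back-pointer per cell per polygon, i.e.\ $O(kn)$ in total, which is the claimed space bound.
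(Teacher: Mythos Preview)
The present paper does not prove this theorem; it is quoted as Theorem~2 of Dror~\emph{et~al.}\ and used purely as a black box in the subroutines of Section~\ref{sec:exact-subroutine}. There is thus no ``paper's own proof'' to compare against. Your outline is, in fact, a reconstruction of the Dror~\emph{et~al.}\ last-step-shortest-path-map method (taut paths, reflection/unfolding, incremental propagation of additively weighted roots, lower envelopes on $\partial P_{i+1}$), so conceptually you are on the right track and not proposing an alternative route.

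There is, however, a concrete inconsistency in your complexity accounting. You assert that $\mathrm{SPM}_i$ has $O(n)$ cells and that processing step~$i$ costs $O(n\log n)$ for the envelope ``plus the bookkeeping''; summing this over the $k$ steps gives $O(nk\log n)$, not the $O(nk^2\log n)$ you then state. One factor of $k$ is unaccounted for. Either your per-step cost is understated---the ``propagate and re-locate the $O(n)$ live roots'' phase is exactly where an extra factor hides, since carrying a root forward through a run of inactive polygons and locating its visibility cone on the next boundary is not a unit-cost operation---or your $O(n)$ bound on $|\mathrm{SPM}_i|$ is too optimistic for arbitrary, possibly overlapping, convex polygons. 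The one-line justification that ``neighbouring cells differ only in the last active feature, all earlier reflection history being inherited unchanged'' is the crux of the entire argument and needs a genuine proof; without it the number of live roots could grow multiplicatively rather than additively through successive reflections, and nothing you have written rules that out. You should either pin down where the missing $k$ enters, or, if you believe your analysis really yields $O(nk\log n)$, explain why Dror~\emph{et~al.}\ did not obtain that stronger bound.
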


Note that for segments and half-planes $k=O(n)$, and therefore TPP can be solved in $O(n^3\log n)$ time and quadratic space.

An important property of an optimal solution to TPP is \emph{uniqueness}:

\begin{lemma}[Lemma~$8$ in~\cite{tour_seq_polyg} (unfenced)]
\label{lem:unique-tour-seq-pol}
For any points $s,t\in\Reals^2$ there is a unique shortest path from $s$ to $t$ that visits the polygons $P_1,\dots P_k$ in order.
\end{lemma}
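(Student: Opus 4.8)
\emph{Proof proposal.} The plan is to exploit convexity of the tour-length functional. If a tour is parametrized by contact points $q_i\in P_i$ with $q_0:=s$ and $q_{k+1}:=t$ held fixed, then its length $\sum_{i=0}^{k}|q_iq_{i+1}|$ is a sum of convex functions of the $q_i$, hence convex on the convex domain $P_1\times\cdots\times P_k$; so the optimal tours form a convex family, and the only obstruction to uniqueness is that the Euclidean norm is convex but not \emph{strictly} convex (it is affine along rays). The substance of the proof is to rule out that this degeneracy yields two genuinely distinct shortest curves. I would argue by induction on $k$, using two standard facts without proof: a shortest tour exists (the search may be restricted to a compact region even if some $P_i$ are unbounded), and a partial minimum of a jointly convex function over a convex set is convex.

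For $k=0$ the unique shortest path from $s$ to $t$ is the segment $\overline{st}$. For the inductive step, fix a shortest tour $\pi$ and let $q$ be the \emph{last} point of $\pi$ lying in $P_k$. Then the portion of $\pi$ from $s$ to $q$ visits $P_1,\dots,P_{k-1}$ in order (and touches $P_k$ at $q$), while the portion from $q$ to $t$ must be exactly $\overline{qt}$: otherwise replacing it by $\overline{qt}$ gives a strictly shorter tour that still visits all of $P_1,\dots,P_k$. Let $f(x)$ denote the length of a shortest path from $s$ to $x$ visiting $P_1,\dots,P_{k-1}$ in order; by the partial-minimum fact $f$ is convex, and by the induction hypothesis this shortest path is \emph{unique} for every $x$. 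It follows that $\pi$ is completely determined by $q$ as the concatenation of the unique shortest $s$-to-$q$ tour with $\overline{qt}$, that $q$ minimizes $g(x):=f(x)+|xt|$ over $P_k$, and conversely that $\myopt=\min_{P_k}g$. Hence uniqueness of $\pi$ reduces to showing that \emph{every} minimizer of $g$ over $P_k$ yields the same curve.

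This last step is where non-strict convexity bites, and I expect it to be the main obstacle. Suppose $q$ and $q'$ both minimize $g$. Then $g$ is constant on the segment $\overline{qq'}$; since $f$ and $x\mapsto|xt|$ are convex and sum to an affine function there, each is affine on $\overline{qq'}$, and $x\mapsto|xt|$ being affine on a segment forces $q,q',t$ to be collinear with $q$ and $q'$ on the same side of $t$ --- say $q$ lies between $q'$ and $t$. The equality $g(q)=g(q')$ then simplifies to $f(q)=f(q')+|q'q|$, so appending $\overline{q'q}$ to the (unique) shortest $s$-to-$q'$ tour produces a shortest $s$-to-$q$ tour, which by the induction hypothesis must be \emph{the} unique one. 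Therefore
\[
\pi=(\text{shortest }s\text{-to-}q\text{ tour})\cup\overline{qt}=(\text{shortest }s\text{-to-}q'\text{ tour})\cup\overline{q'q}\cup\overline{qt}=(\text{shortest }s\text{-to-}q'\text{ tour})\cup\overline{q't}=\pi',
\]
using that $q$ lies on $\overline{q't}$. Hence the shortest tour is unique, completing the induction. In short, the only nontrivial point is that the norm's failure to be strictly convex allows the contact point on $P_k$ to be non-unique; the collinearity forced on these contact points, together with inductive uniqueness for $k-1$ polygons, shows that they all trace the same path.
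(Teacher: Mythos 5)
Your argument is correct, but note that the paper does not prove this statement at all: it is imported verbatim as Lemma~8 of Dror \emph{et al.}~\cite{tour_seq_polyg}, so the only thing to compare against is the source's own proof and the related machinery the paper does develop (Lemma~\ref{lem:convexcombo}). The standard route, which Lemma~\ref{lem:convexcombo} mirrors, is a global averaging argument: given two distinct shortest tours with contact points $q_i$ and $q_i'$, the tour through the midpoints $(q_i+q_i')/2$ is feasible by convexity of the $P_i$ and no longer than the average length, so equality must hold edgewise in the triangle inequality, forcing corresponding edges to be positively parallel; one then argues that two such chains with common endpoints trace the same curve. Your proof instead peels off the last polygon: it reduces uniqueness for $k$ polygons to uniqueness for $k-1$ via the convex value function $f$, isolates the non-strict convexity of the norm in the single term $|xt|$, and resolves the resulting degeneracy by the collinearity of $q,q',t$. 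The steps all check out --- the decomposition at the last $P_k$-contact point, the fact that affineness of $f+|\cdot-t|$ on $\overline{qq'}$ forces each summand to be affine there, and the final concatenation identity --- modulo the two facts you explicitly assume (attainment of the minimum even for unbounded $P_i$, and convexity of partial minima), both of which are standard. What your approach buys is a self-contained proof that never has to argue that edgewise-parallel chains coincide as curves (the slightly delicate endgame of the averaging argument, since contact points on collinear portions need not be unique even when the curve is); what it costs is the extra scaffolding of the value function and the induction. Either way, the proposal is a valid substitute for the citation.
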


\paragraph*{Reduction}

We define a set of half-planes and an ordering on them, such that the optimum tour of the half-planes that respects the ordering gives a minimum intersecting polygon of the segments. 
In what follows, we will deal with a portion of the boundary of a minimum intersecting polygon (denoted by $\Gamma$), that is, a convex chain from some point $u$ to some point $v$. For the sake of simplicity, we rotate the polygon so that $u,v$ are on the $x$-axis, and the chain $\Gamma$ lies in the upper half-plane. Let $C(\Gamma)$ denote the polygon given by $\Gamma$ and the segment $uv$. The set of line segments to be intersected by $C(\Gamma)$ will be denoted by $\cO$.

Let us thus fix two points $u,v$ on the $x$-axis, and let $\cO$ be a set of $n$ segments. The \emph{half-plane of $o\in \cO$ with respect to $u,v$} is the (closed) half-plane defined
by $\Line(o)$ that (i) does not contain $u$ (see \Cref{fig:img1}), or (ii) if $u\in \Line(o)$, then it is the half-plane containing $v$,
or (iii) if $\Line(o)=\li{uv}$ then it is the upper half-plane defined by the $x$-axis. We sort all these half-planes of segments according to the direction of their
normal vectors in clockwise order, where the ordering of directions is set to start with $(0,-1)$. Let $\vec{n_i}$ be the normal vector of half-plane $H_i$, and let $\vec{n_0}=(0,-1)$ denote the normal of the half-plane $y\leq 0$. Let $o_i$ be the segment that was used to define~$H_i$. Finally, let $\mathcal{H}$ denote the set of half-planes $H_i$ for $i\geq 1$.
\begin{figure}[ht]
    \centering
    \includegraphics[scale=1]{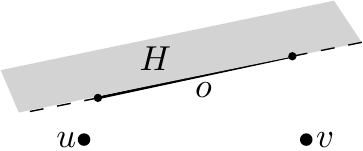}
    \caption{Illustration for case (i). }
    \label{fig:img1}
\end{figure}


We first show that there exists a convex chain $\Gamma$ from $u$ to $v$ that visits all $H_i$ in order, i.e., there exist $t_i\in H_i$ such that $\Gamma$ visits the $t_i$'s in order. (In fact, we show the stronger statement that this holds for any convex chain from $u$ to $v$ that visits all half-planes.)

\begin{lemma}\label{lem:ordering}
    Fix the points $u, v$ on the $x$-axis and let $o_1,\dots,o_n$ be $n$ line segments, with corresponding half-planes $H_i$ where indices follow
    the ordering of the half-planes according to their normals as above. Consider a clockwise-oriented convex chain $\Gamma$ from $u$ to $v$ in the half-plane $y \geq 0$
    that visits each half-plane $H_i (i=1,\dots,n)$. Then there exist $n$ (not necessarily distinct) points $t_i \in H_i$, such that $\Gamma$ visits
    the points $t_i$ in the order of their indices.
\end{lemma}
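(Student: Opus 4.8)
The plan is to exhibit the points $t_i$ explicitly: for each $i\geq 1$ we take $t_i$ to be a point of $\Gamma$ that is extreme in the direction $\vec{n_i}$. For a direction $\vec{n}$, let $F(\vec{n})$ denote the face (a vertex or an edge) of the convex polygon $C(\Gamma)$ on which the functional $\langle\,\cdot\,,\vec{n}\rangle$ is maximized over $C(\Gamma)$. Since $C(\Gamma)\subseteq\{y\geq 0\}$ and has $uv$ on the $x$-axis, the edge $uv$ is precisely the face of $C(\Gamma)$ with outward normal $\vec{n_0}=(0,-1)$, and traversing $\Gamma$ from $u$ to $v$ is the clockwise traversal of $\partial C(\Gamma)$ immediately following this edge. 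We will choose $t_i\in F(\vec{n_i})$ and prove (a) $t_i\in\Gamma$, (b) $t_i\in H_i$, and (c) the points $t_1,\dots,t_n$ occur in index order along $\Gamma$.

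Parts (a) and (b) are short. For (a), note that $\vec{n_i}\neq\vec{n_0}$ for every $i\geq 1$: a segment whose half-plane has inward normal $(0,-1)$ would have that half-plane contained in $\{y<0\}$, which $\Gamma\subseteq\{y\geq 0\}$ cannot visit, contradicting the hypothesis. Hence $F(\vec{n_i})\neq uv$, and since every face of $C(\Gamma)$ other than the edge $uv$ --- all its vertices (including $u$ and $v$) and all its other edges --- lies on $\Gamma$, every choice of $t_i\in F(\vec{n_i})$ is already a point of $\Gamma$. For (b), write $H_i=\{x:\langle x,\vec{n_i}\rangle\geq c_i\}$, consistently with $\vec{n_0}=(0,-1)$ being the inward normal of $H_0=\{y\leq 0\}$. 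The hypothesis that $\Gamma$ visits $H_i$ gives a point $q\in\Gamma\subseteq C(\Gamma)$ with $\langle q,\vec{n_i}\rangle\geq c_i$; therefore $\max_{p\in C(\Gamma)}\langle p,\vec{n_i}\rangle\geq c_i$, and since every point of $F(\vec{n_i})$ attains this maximum we get $F(\vec{n_i})\subseteq H_i$, so any choice of $t_i\in F(\vec{n_i})$ lies in $H_i$.

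Part (c) is the heart, and it rests on the fact that the extreme point of a convex body moves monotonically along its boundary as the optimizing direction rotates. Measure directions by the angle increasing clockwise from $\vec{n_0}$, so $\vec{n_0}$ sits at angle $0$; as the half-planes are sorted clockwise starting from $\vec{n_0}$, the angles of $\vec{n_1},\dots,\vec{n_n}$ satisfy $0<\alpha_1\leq\alpha_2\leq\cdots\leq\alpha_n<2\pi$. List the faces of $C(\Gamma)$ in clockwise order starting from $uv$: after $uv$ come the faces of $\Gamma$ in the order $\Gamma$ meets them when traversed from $u$ to $v$ (the first containing $u$, the last containing $v$). Their normal cones are consecutive arcs of the direction circle, beginning with the single direction $\vec{n_0}$ (angle $0$) for $uv$ and then sweeping clockwise through all remaining directions back to $\vec{n_0}$. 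Thus each $\alpha_i\in(0,2\pi)$ lies in the normal arc of some face of $\Gamma$, and $\alpha_i\leq\alpha_j$ forces $F(\vec{n_i})$ to occur no later than $F(\vec{n_j})$ in the $u$-to-$v$ order along $\Gamma$. Choosing $t_i$ to be $F(\vec{n_i})$ itself if this face is a vertex, and its $v$-side endpoint if it is an edge, we conclude from $\alpha_1\leq\cdots\leq\alpha_n$ that $t_1,\dots,t_n$ appear in index order along $\Gamma$; combined with (a) and (b) this proves the lemma.

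The only genuine obstacle is making (c) precise: one must verify that the correspondence between faces of $C(\Gamma)$ and arcs of the direction circle preserves cyclic order --- a standard property of the normal cones of a convex polygon --- and that, because $\alpha_1,\dots,\alpha_n$ all avoid the single "cut" direction $\vec{n_0}$ (the normal of the edge $uv$), this cyclic monotonicity becomes a linear order along the sub-chain $\Gamma$ from $u$ to $v$. Degenerate instances ($u=v$, or $\Gamma$ a single segment so that $C(\Gamma)$ has empty interior) are handled identically, reading $C(\Gamma)$ as a possibly lower-dimensional convex set and $F(\vec{n})$ as its face in direction $\vec{n}$.
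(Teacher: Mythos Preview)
Your proof is correct and takes a genuinely different route from the paper's. The paper constructs the $t_i$ by an iterative ``first exit'' procedure: walk along $\Gamma$ from $u$, and each time you leave some half-plane $H_k$ still in the queue, set all pending $t_j,\dots,t_k$ to that exit point and dequeue them; correctness is then argued by two short contradiction arguments exploiting the convexity of $\Gamma$ and the cone cut out by the current exit boundary. Your approach instead picks $t_i$ as an extremal point of $C(\Gamma)$ in the direction of the inward normal $\vec n_i$, and derives the ordering from the standard fact that the normal fan of a convex polygon assigns faces to directions in a cyclically monotone way; because all the $\vec n_i$ avoid the single direction $\vec n_0$ (the outward normal of the base edge $uv$), this cyclic monotonicity linearises along $\Gamma$. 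What your argument buys is conceptual clarity and brevity: feasibility ($t_i\in H_i$) is a one-line consequence of extremality, and the order is inherited wholesale from a textbook property of convex polygons, with no case analysis along the chain. What the paper's argument buys is self-containment (it does not invoke normal cones) and it also yields a slightly different, more ``operational'' description of the visiting points that matches the style of the TPP reduction that follows.
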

\begin{proof}
    For a fixed path $\Gamma$ with the previous properties, the points $t_i$ will be defined with an iterative procedure. Let $j$ be a variable index initially set
    to one and establish a queue that contains all the half-planes in the specified order. Then the definition of the $t_i$ points comes from the following
    procedure:
    \begin{figure}[ht]
        \centering
        \includegraphics[scale=1]{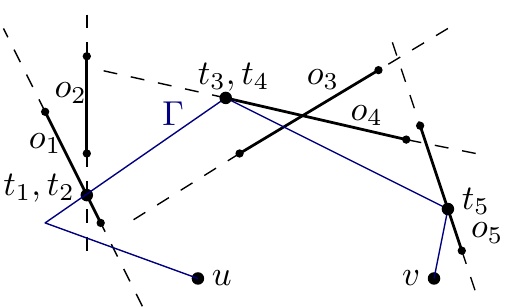}
        \caption{Illustration of the procedure.}
        \label{fig:img5}
    \end{figure}
    \begin{enumerate}
        \item Start from the point $u$ and follow the path $\Gamma$ until the first exit from a half-plane of the queue; suppose that this happens at a point $u'$. Let $k$ be the largest index of a half-plane that is being exited at $u'$.
        \item Set $t_j, \dots, t_k = u'$.
        \item Set $j$ equal to $k + 1$ and remove the half-planes with index less or equal to $k$ from the queue. As 
        long as the queue is not empty, repeat the first step with $u=u'$.
    \end{enumerate}
    If the point $v$ is part of a half-plane $H_k$, then the path $\Gamma$ would never exit $H_k$. Therefore, there will be an iteration of the
    previous procedure, such that the path $\Gamma$ in the first step will meet the point $v$, while the queue will not be empty. In this case,
    the procedure stops and every remaining $t_i$ is set equal to $v$.
    
    Assuming that the $t_i$ points exist, by the construction, the path $\Gamma$ visits them in the specified order. Therefore it is sufficient to show
    that the points $t_i$ exist and that $t_i\in H_i$ for all $i$. Consider an iteration and let $j$, $k$, $u$ and $u'$ be the corresponding
    variables of the first two steps of the iteration.
    
    First, we show that for every $i: j \leq i \leq k$, it holds that $t_i \in H_i$. Suppose to the contrary that $t_i =u' \notin H_i$. Since $\Gamma$ is convex 
    and has clockwise orientation, the rest of $\Gamma$
    remains in the interior of the cone defined by the segment $u'v$ and $\bd H_k$ (see \Cref{fig:img2}),
    which is disjoint from $H_i$ (since (i) $i\le k$, thus $\angle (n_0,n_i) \le \angle (n_0,n_k)$, and (ii) $u'\not\in H_i$). 
    Therefore $\Gamma$ cannot intersect $H_i$  after
    leaving $u'$. Hence from the feasibility of $\Gamma$, the path must exit $H_i$ before it exits $H_k$, but this contradicts the definition of $u'$. 
    \begin{figure}[ht]
        \centering
        \includegraphics[scale=1]{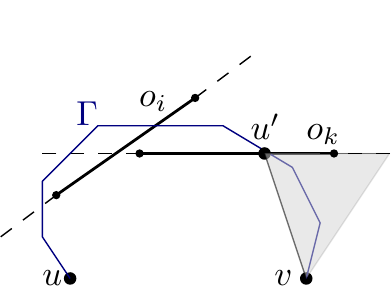}
        \caption{The first exit from a half-plane is not at point $u'$. This is a contradiction.}
        \label{fig:img2}
    \end{figure}

    Next, we show that every half-plane of index larger than $k$ will be visited by $\Gamma$ after the point $u'$. Suppose
    to the contrary that there exists a half-plane $H_i$ with $i > k$ which is not visited by $\Gamma$ after the point $u'$. In particular, we have that $v \not\in H_i$. From the feasibility
    of $\Gamma$, we can conclude that the path $\Gamma$ has exited from $H_i$ before or exactly at $u'$, but this contradicts
    the definition of $u'$ and  $k$. 
    This concludes the proof.%
\end{proof}

If we run the TPP-algorithm from~\cite{tour_seq_polyg} on $u,\cH,v$ (where $\mathcal{H}$ is ordered by the order specified above), then we obtain as output the shortest path $P$ from $u$ to $v$ that visits all half-planes in the given order. It is still possible though for the polygon defined by $P$ and $uv$ to not overlap $o_i$ (although $P$ visits $H_i$). However we claim that if the optimal solution $\Gamma$ 
does not go through any segment-endpoint then $P$ is not only feasible but also optimal for \emph{Subroutine~I} on this input. To show this, we need the following lemma, which shows that the convex combination of two different-length paths with the same ordering must be strictly shorter than the longest of the two paths.

\begin{lemma}\label{lem:convexcombo}
    Let $\Gamma_0 = \{p=a_0, a_1, \dots, a_n=q\}, \; \Gamma_1 = \{p=b_0, b_1, \dots, b_n=q\}$ be two paths such that $|\Gamma_0| < |\Gamma_1|$. 
    For a fixed $\lambda$, construct a path $\Gamma_\lambda = \{p=c_0, c_1, \dots, c_n=q\}$, where $c_i = \lambda \cdot b_i + (1 - \lambda) \cdot a_i$. Then 
    for every $\lambda: 0 \leq \lambda < 1$, it holds that $|\Gamma_\lambda| < |\Gamma_1|$.
\end{lemma}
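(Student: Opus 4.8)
The plan is to argue edge by edge, using nothing more than convexity of the Euclidean norm. Write the length of each path as the sum of its edge lengths, so that $|\Gamma_\lambda| = \sum_{i=1}^n \|c_i - c_{i-1}\|$, and note that by definition of $\Gamma_\lambda$ we have
\[
c_i - c_{i-1} = \lambda\,(b_i - b_{i-1}) + (1-\lambda)\,(a_i - a_{i-1}),
\]
i.e.\ each edge vector of $\Gamma_\lambda$ is the \emph{same} convex combination of the corresponding edge vectors of $\Gamma_1$ and $\Gamma_0$. This is the only structural observation needed; in particular the lemma does not use that the paths are convex chains or that they share an ordering of visited objects.

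First I would apply the triangle inequality to each edge separately:
\[
\|c_i - c_{i-1}\| = \big\| \lambda\,(b_i - b_{i-1}) + (1-\lambda)\,(a_i - a_{i-1}) \big\| \leq \lambda\,\|b_i - b_{i-1}\| + (1-\lambda)\,\|a_i - a_{i-1}\|.
\]
Summing these inequalities over $i = 1,\dots,n$ and regrouping the two sums gives $|\Gamma_\lambda| \leq \lambda\,|\Gamma_1| + (1-\lambda)\,|\Gamma_0|$.

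To finish, I would use the hypotheses $|\Gamma_0| < |\Gamma_1|$ and $0 \leq \lambda < 1$: since $1-\lambda > 0$, we get $(1-\lambda)\,|\Gamma_0| < (1-\lambda)\,|\Gamma_1|$, and therefore $\lambda\,|\Gamma_1| + (1-\lambda)\,|\Gamma_0| < \lambda\,|\Gamma_1| + (1-\lambda)\,|\Gamma_1| = |\Gamma_1|$. Combining with the displayed bound yields $|\Gamma_\lambda| < |\Gamma_1|$. There is no genuine obstacle in this proof; the only point deserving a little care is the source of strictness — it comes entirely from $|\Gamma_0| < |\Gamma_1|$ together with $\lambda \neq 1$, so one should resist the temptation to try to make the per-edge triangle inequalities strict (they need not be, e.g.\ when $\Gamma_0$ and $\Gamma_1$ are collinear).
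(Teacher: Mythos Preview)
Your proof is correct and follows essentially the same route as the paper: apply the triangle inequality edge by edge to obtain $|\Gamma_\lambda| \le \lambda|\Gamma_1| + (1-\lambda)|\Gamma_0|$, then use $|\Gamma_0| < |\Gamma_1|$ and $\lambda < 1$ for the strict inequality. Your explicit remark that strictness comes from the global length hypothesis rather than from the per-edge triangle inequalities is a nice clarification, but the argument is otherwise identical to the paper's.
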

\begin{proof}
%
%
    We can bound $|\Gamma_\lambda|$ in the following way:
    \begin{align*}
        |\Gamma_\lambda|&=\sum_{i=0}^{n-1} |c_{i+1} - c_i| \\
        &=\sum_{i=0}^{n-1} |\lambda  b_{i+1} + (1 - \lambda)  a_{i+1} - \lambda  b_i - (1 - \lambda)  a_i| \\
        &=\sum_{i=0}^{n-1} |\lambda  (b_{i+1} - b_i) + (1 - \lambda)  (a_{i+1} - a_i)| \\
        &\leq \sum_{i=0}^{n-1} \Big(|\lambda  (b_{i+1} - b_i)| + |(1 - \lambda)  (a_{i+1} - a_i)|\Big) \\
        &=\lambda  \sum_{i=0}^{n-1} |b_{i+1} - b_i| + (1 - \lambda)   \sum_{i=1}^{n-1} |a_{i+1} - a_i|.\\
        &=\lambda  |\Gamma_1| + (1 - \lambda)  |\Gamma_0|\\
        &< |\Gamma_1|.\qedhere
    \end{align*}
\end{proof}

The next lemma allows us to invoke the algorithm of~\cite{tour_seq_polyg} to compute a polygon portion, assuming that the portion does not contain segment endpoints.

\begin{lemma} \label{lem:endpointless}
    Let $C(\Gamma)$ be a minimum intersecting polygon for a set $\cO$ of segments. Let $\cH$ be the set of half-planes corresponding to $\cO$, and let $\Gamma_\cH$ be the shortest path from $u$ to $v$ that visits all  half-planes in $\cH$ in the order defined by their indices, i.e., there are points $t_i\in H_i$
    so that $\Gamma_\cH$ goes through the points $t_i$ in the order of their indices. Suppose that $\Gamma$ does not contain any of the endpoints
    of the segments (with the possible exception of $u$ and $v$). Then $\Gamma = \Gamma_\cH$.
\end{lemma}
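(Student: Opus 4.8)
The plan is to show that $\Gamma$ itself visits all half-planes $H_i$ in the prescribed order, which would mean $\Gamma$ is a competitor for the TPP on $u,\cH,v$; combined with the uniqueness of the shortest TPP tour (Lemma~\ref{lem:unique-tour-seq-pol}) and a length comparison, this forces $\Gamma=\Gamma_\cH$. The subtlety is that $\Gamma_\cH$ is the TPP-optimum, so it is \emph{at most} as long as $\Gamma$; to get equality we must rule out $|\Gamma_\cH|<|\Gamma|$. This is exactly where the hypothesis that $\Gamma$ avoids all segment endpoints (except possibly $u,v$) must be used, and it is the step I expect to be the main obstacle.

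First I would verify that $\Gamma$ is feasible for the TPP instance: since $C(\Gamma)$ intersects every segment $o_i$, and $o_i\subseteq H_i$ would be false in general — but $C(\Gamma)$ is contained in the half-plane $y\ge 0$ and in the region cut off by each relevant line, so $C(\Gamma)\cap o_i\neq\emptyset$ forces $\Gamma$ (together with $uv$, which lies on the $x$-axis) to meet $H_i$. Actually the cleaner route is: $C(\Gamma)$ meets $o_i$, hence meets $\Line(o_i)$, hence meets $H_i$; and since $C(\Gamma)$ lies on the correct side of things, in fact $\Gamma$ itself meets $H_i$ (the part of $C(\Gamma)$ on $uv$ lies in the lower boundary and by the definition of $H_i$ it is $\Gamma$, not the base edge, that is responsible). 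So $\Gamma$ visits every $H_i$, and then Lemma~\ref{lem:ordering} applies verbatim to the convex chain $\Gamma$: there exist points $t_i\in H_i$ through which $\Gamma$ passes in index order. Thus $\Gamma$ is a valid tour for the TPP on $u,\cH,v$, giving $|\Gamma_\cH|\le|\Gamma|$.

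For the reverse inequality, suppose $|\Gamma_\cH|<|\Gamma|$. The idea is to take a convex combination $\Gamma_\lambda$ of $\Gamma$ and $\Gamma_\cH$ and derive a contradiction with the minimality of $C(\Gamma)$ among intersecting polygons. For small $\lambda>0$ the path $\Gamma_\lambda$ is still a convex chain from $u$ to $v$ (convexity is preserved under convex combination of convex chains with matching endpoints and consistent orientation — I would spell out that the $c_i$'s stay in convex position since both $a_i$'s and $b_i$'s do and the turn directions agree), and by Lemma~\ref{lem:convexcombo} we have $|\Gamma_\lambda|<|\Gamma|$. The key point: because $\Gamma_\cH$ visits the half-planes in the \emph{same} order as $\Gamma$, the convex polygon $C(\Gamma_\lambda)$ still meets each $H_i$ at the corresponding convex combination $c$ of $t_i$ and the matching point of $\Gamma_\cH$ — but that only gives intersection with $H_i$, not with the segment $o_i$ itself. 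This is precisely where "no endpoints on $\Gamma$" enters: if $\Gamma$ touches $o_i$ in the relative interior of $o_i$ (never at an endpoint), then $C(\Gamma)$ actually crosses $\Line(o_i)$ transversally near that contact, so a sufficiently small perturbation $C(\Gamma_\lambda)$ still crosses $\Line(o_i)$ on the segment — hence still meets $o_i$. I would make this quantitative: let $\delta>0$ be the minimum over all $i$ of the distance from $\Gamma\cap o_i$'s contact region to the endpoints of $o_i$ (positive by hypothesis and compactness), then choose $\lambda$ small enough that the Hausdorff distance between $\Gamma$ and $\Gamma_\lambda$ is below $\delta$. For the finitely many segments that $C(\Gamma)$ meets through its base edge $uv$ or at $u$ or $v$ themselves, intersection is maintained trivially since $\Gamma_\lambda$ shares the endpoints $u,v$ and the base $uv$. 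Therefore $C(\Gamma_\lambda)$ is a feasible intersecting polygon of strictly smaller perimeter than $C(\Gamma)$, contradicting minimality. Hence $|\Gamma_\cH|=|\Gamma|$, and since $\Gamma$ and $\Gamma_\cH$ are both shortest tours of $\cH$ in the same order, Lemma~\ref{lem:unique-tour-seq-pol} gives $\Gamma=\Gamma_\cH$.

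**The hard part** is the perturbation argument: one must carefully argue that (a) $\Gamma_\lambda$ remains convex and clockwise-oriented (so that "$C(\Gamma_\lambda)$" even makes sense and Lemma~\ref{lem:ordering}-type reasoning applies), and (b) shrinking toward $\Gamma_\cH$ does not break feasibility for any segment, using endpoint-avoidance to get a uniform safety margin $\delta$. A secondary subtlety is handling segments that $C(\Gamma)$ intersects but $\Gamma$ does not — i.e.\ those met only by the base edge $uv$ or the interior — but by construction of the half-planes $H_i$ (case (iii) and the choice of side) these are exactly the ones whose feasibility is inherited for free from $C(\Gamma_\lambda)\supseteq$ a neighborhood of $uv$, so they cause no trouble.
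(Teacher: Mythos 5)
Your proposal follows essentially the same route as the paper's proof: Lemma~\ref{lem:ordering} to show that $\Gamma$ itself visits the half-planes in index order, Lemma~\ref{lem:unique-tour-seq-pol} together with Lemma~\ref{lem:convexcombo} to turn the hypothesis $|\Gamma_\cH|<|\Gamma|$ into a strictly shorter path, and endpoint-avoidance to argue that the perturbed path is still a feasible intersecting polygon, contradicting optimality. The one step the paper makes explicit and you gloss over is that before Lemma~\ref{lem:convexcombo} can be applied, both $\Gamma$ and $\Gamma_\cH$ must be written as polylines on \emph{corresponding} vertex sequences of equal length; this requires showing that every vertex of $\Gamma$ is one of the visit points $a_i$ (and likewise that $\Gamma_\cH$ is exactly the chain $(u,b_1,\dots,b_n,v)$), which the paper establishes with a short shortcutting argument using the minimality of $\Gamma$ and the fact that its internal vertices lie in the interiors of tangent segments. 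Without that correspondence, ``the convex combination of $\Gamma$ and $\Gamma_\cH$'' is not well defined as stated, and your claim that $\Gamma_\lambda$ stays Hausdorff-close to $\Gamma$ would fail near any vertex of $\Gamma$ that is not a visit point; you should add this step, after which your argument matches the paper's.
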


\begin{proof}
    By Lemma~\ref{lem:ordering} and the convexity of $\Gamma$, there are $n$ points $a_i \in H_i$ for $\Gamma$ such that the path $\Gamma$ visits them in the specified order. We claim that there are no other vertices on $\Gamma$, that is, it can be represented as the convex chain
    \[
        \Gamma = (u, a_1, \dots, a_n, v), \text{where } a_i \in H_i. \\
    \]
    Indeed, if $v$ 
    is a vertex of $\Gamma$ that is not a visit point, then $v$ cannot be the only point where some $o_i$ is visited, so we can shorten $\Gamma$ by shortcutting from the last point of $\bigcup_i o_i$ before $v$ to the first point of $\bigcup_i o_i$ after $v$ (in case $\Gamma$ and some $o_i$ have a shared segment $s$ starting/ending at $v$, then we use the starting/ending point of $s$ as the starting/ending point of the shortcut).
    
    Since $\Gamma$ does not contain endpoints apart from $u$ and $v$, all of its internal vertices are in the interior of some segment in $\cO$ that is tangent to $C(\Gamma)$ by a similar shortcutting argument. Notice that if $\Gamma$ visits some segment $o_i$, that is tangent to $C(\Gamma)$ then $H_i$ intersects $\Gamma$ only at this point; consequently, this is the only location where the point $a_i$ could be located. To summarize, at each vertex of $\Gamma$ there must be some segment $o_i$ whose half-plane $H_i$ can be visited only there, and therefore all vertices of $\Gamma$ are contained in the sequence $(u, a_1, \dots, a_n, v)$.
    
    By the definition of $\Gamma_\cH$, it contains points $u,b_1,\dots,b_n,v\; (b_i \in H_i)$ in this order; we claim that it can be represented as the chain
    \[\Gamma_\cH = (u, b_1, \dots, b_n, v), \text{where } b_i \in H_i.\] 
    This is easy to see: the chain $(u, b_1, \dots, b_n, v)$ visits all the half-planes in the correct order and any tour containing these points in this sequence cannot be shorter than this chain. 
    
    Continuing the proof, suppose that $\Gamma \neq \Gamma_\cH$. The path $\Gamma_\cH$ is the shortest path visiting the polygons $u,H_1,\dots,H_n,v$ in this order, so by Lemma~\ref{lem:unique-tour-seq-pol}, 
    $\Gamma_\cH$ is the unique such shortest path. The path $\Gamma$ is a different chain visiting each of these polygons in the same order,
    therefore $|\Gamma_\cH| < |\Gamma|$. Since $\Gamma$ does not visit any endpoint 
    of the segments in $\cO$, any $u\rightarrow v$ curve where each point $a_i$ is moved by a small amount towards some other point in $H_i$ is also feasible. In particular, it follows that there exists a small $0<\lambda<1$ such that the path
    \[\Gamma' = \{u, c_1, \dots, c_n, v\}\text{, where }c_i = \lambda  a_i + (1 - \lambda) b_i\]
    gives a feasible polygon $C(\Gamma')$ for $\cO$.  By \Cref{lem:convexcombo}, it holds that $|\Gamma'| < |\Gamma|$,  
    which contradicts the optimality of $\Gamma$.
\end{proof}

We are now ready to prove Theorem~\ref{thm:subroutines}.
\begin{proof}[Proof of Theorem~\ref{thm:subroutines}]
First we provide an algorithm for \emph{Subroutine~I}. If the optimum intersecting polygon's boundary does not go through any segment endpoints, then we use Lemma~\ref{lem:endpointless} for an arbitrary vertex $u$ of the optimum as a designated start- and endpoint. It follows that the boundary of the minimum perimeter intersecting polygon for $\cO$ and the optimum tour of the half-planes corresponding to $\cO$ in the order defined by their normals is identical.

To find this point $u$ we use the algorithm from~\cite{DBLP:journals/dcg/CarlssonJN99}. The algorithm~\cite{DBLP:journals/dcg/CarlssonJN99} is designed for the so-called \emph{watchman tour} problem, but this is equivalent to our setting, see~\cite{Jonsson02} and the relevant discussion in~\cite{tour_seq_polyg}. The algorithm of~\cite{DBLP:journals/dcg/CarlssonJN99} computes $O(n^3)$ candidate points (called \emph{event points} in the terminology of \cite{DBLP:journals/dcg/CarlssonJN99}) in $O(n^3)$ time, 
while providing the guarantee that the optimal tour will go through at least one of them. 
By Theorem~\ref{thm:tpp} for each of these $O(n^3)$ points a TPP solution can be computed in $O(n^3\log n)$ time. Then for each TPP solution we can check if the corresponding polygon is convex and intersects all the segments in $O(n^2)$ time. In the end, we return the minimum-perimeter feasible polygon found, or $+\infty$ if none of the $O(n^3)$ polygons are feasible. If the optimum polygon has no segment endpoints as vertices, then the returned polygon is the optimum. The algorithm takes $O(n^3)\cdot (O(n^3\log n)+O(n^2))=O(n^6\log n)$, time, which concludes the proof for \emph{Subroutine~I}.

With respect to \emph{Subroutine~II} first note that if $u\neq v$, then all that is required is a single call to the TPP algorithm with a running time of $O(n^3\log n)$. Furthermore, if $u=v$, then Lemmas~\ref{lem:ordering}, \ref{lem:convexcombo} and~\ref{lem:endpointless} all still work. Consequently, if the optimum tour of the complete set $\cO$ of segments contains only one segment endpoint $u$, then the minimum intersecting polygon for the polygons $u,H_1,\dots H_n,u$ (which is a single call to the TPP algorithm as well) is also optimal for $\cO$. Here Lemmas~\ref{lem:unique-tour-seq-pol}, \ref{lem:ordering}, and~\ref{lem:endpointless} give a way to detect whether the optimal polygon for a given set of segments and points $u,v$ is equal to the TPP solution. Indeed, if the polygon corresponding to the TPP solution is feasible (the corresponding polygon intersects all segments in $\cO$) and its vertices apart from $u,v$ are not from $Y$, then it must be the optimum. Consequently, if the polygon given by the TPP solution is non-convex, or if it fails to intersect some polygon, or if it contains a segment endpoint, then we can return $+\infty$ as the cost of the respective tour. The feasibility checks can be done in $O(n^2)$ time, thus the dominant part of the running time is $O(n^3\log n)$.
\end{proof}

\paragraph*{Setting the stage for the dynamic program.}
With Subroutine~I available, it remains to find the minimum-perimeter 
intersecting polygon at least one of whose vertices is a point from~$Y$. 
To this end we will develop an algorithm that, for a given point $\pbot\in Y$, 
finds a minimum-perimeter intersecting polygon that has $\pbot$ as a vertex (if it exists). 
We will run this algorithm for each choice of $\pbot\in Y$. Note that when
we do so, we may ignore all segments that contain~$\pbot$, as they
will intersect any intersecting polygon with $\pbot$ as a vertex.
\medskip

Let $\pbot\in Y$ be given. We first check if $\cO$ admits an intersecting polygon
that has $\pbot$ as a vertex. This is the case if and only if there exists
a half-plane with $\pbot$ on its boundary that overlaps all segments in $\cO$.
The existence of such a half-plane can easily be tested in $O(n\log n)$
using a rotational sweep around $\pbot$. So from now on we assume that 
an intersecting polygon exists that has $\pbot$ as a vertex. 
Let $\Popt$ be a minimum-perimeter such intersecting polygon. 
The boundary of $\Popt$ consists of chains that connect points from $Y$
and such that the interior vertices of these chains are disjoint from $Y$.
(If $\pbot$ is the only point from $Y$ that is a vertex of $\Popt$, then there is
a single chain connecting $\pbot$ to itself.) We will develop a dynamic-programming
algorithm similar to the algorithm from Section~\ref{sec:FPTAS}. The dynamic program
will be based on the points in $Y$ (instead of on grid points) and we will
use Subroutine~II to find the chains connecting the points from $Y$ on~$\partial \Popt$.
As we will see, however, there are several challenges that we
need to overcome to adapt the dynamic program.

Observe that if we take three rays emanating from $\pbot$ that are at a 120-degree
angle from each other, then at least one of them lies fully outside $\Popt$ (except for
the starting point~$\pbot$). Consider such a ray~$\rho_0$, and assume without loss of 
generality that $\rho_0$ is a horizontal ray pointing leftwards. Our goal is now to
find a minimum-perimeter intersecting polygon with $\pbot$ as a vertex and that is
not intersected by~$\rho_0$. As mentioned, we will do this with a dynamic program
similar to the one from Section~\ref{sec:FPTAS}.
The point~$\pbot$ will play the role of $\vbot$---but note that $\pbot$ need not be the lexicographically
smallest vertex of $\Popt$---and $Y\setminus \{\pbot\}$ will play the role of~$V$. 
Hence,
it is convenient to redefine $Y$ as $Y\setminus \{\pbot\}$ and $Y^+ := Y \cup \{\pbot,\pbotc\}$,
where $\pbotc$ is a copy of~$\pbot$. We define a (partial) angular order~$\prec$ on $Y^+$, as before.

In Section~\ref{sec:FPTAS}, we knew that $\vbot$ was the lexicographically smallest vertex
of~$\Ps$, and so we were looking for solution that lies above the horizontal line through $\vbot$. This was important to be able to
decide which objects should be intersected by a partial solution; see Figure~\ref{fig:dp-fptas-new}. In the current setting,
however, the point~$\pbot$ need not be the lexicographically smallest point. 
Hence, we also need to guess the orientation of the line tangent to $\Popt$ at $\pbot$.
To this end, let $\psi(o_i)$ be the angle over which we have to
rotate~$\rho_0$ in clockwise direction to make it parallel 
to a given segment~$o_i\in \cO$, and sort the set of (distinct) angles $\psi(o_i)$ in increasing order.
Let $\Psi := \{ \psi_1,\psi_2,\ldots \}$, where $\psi_j < \psi_{j+1}$
for all $1\leq j< |\Psi|$, denote this sorted set. Define $\psi_0 := 0$ and $\psi_{|\Psi|+1} = \pi$.
The problem we now wish to solve is
\begin{quotation}
\noindent Given a point $\pbot$ and value $j$ with $0\leq j\leq |\Psi|$, find a minimum-perimeter
          intersecting polygon $\Popt$ for $\cO$ such that
          \begin{itemize}
           \item $\pbot$ is a vertex of $\Popt$,
           \item the horizontal ray~$\rho_0$ going from $\pbot$ to the left 
                 does not intersect $\Popt$,
          \item $\Popt$ has a tangent line $\ell$ at $\pbot$ such that the clockwise angle from
                 $\rho_0$ to $\ell$ lies in the range $[\psi_{j},\psi_{j+1}]$.
          \end{itemize}
\end{quotation}

\paragraph*{The dynamic program.}
We will now develop our dynamic program for the problem that we just stated, for a given
point $\pbot \in Y$, a ray $\rho_0$, and range $[\psi_{j},\psi_{j+1}]$; see Figure~\ref{fig:orientation-new}(i).
\begin{figure}
\begin{center}
\includegraphics{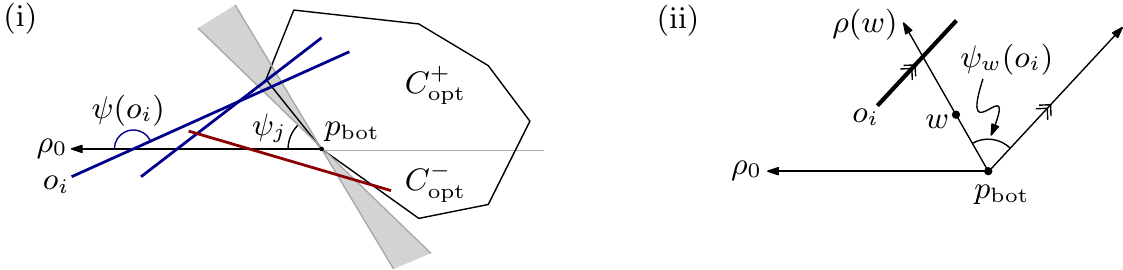}
\end{center}
\caption{(i) The grey double wedge indicates the region containing the tangent
at $\pbot$. (Note that then there must be more segments than the blue and red segments that are shown. In particular, there must be segments parallel to the lines delimiting the double wedge.)
The blue segments are in $\cO(\rho_0)^+$ so they must be intersected by $\Popt^+$, while the
             red segment is in $\cO(\rho_0)^-$ and must be intersected by $\Popt^-$.
             (ii) The definition of $\psi_w(o_i)$.}
\label{fig:orientation-new}
\end{figure}

Let $\cO(\rho_0)\subseteq \cO$ denote the set of segments that intersect the ray~$\rho_0$, and 
let $\ell_0$ be the line containing~$\rho_0$. The line~$\ell_0$ may split the optimal solution~$\Popt$ 
into two parts: a part $\Popt^+$ above~$\ell_0$ and a part $\Popt^-$ below~$\ell_0$. 
Let $\psi(o_i)$ denote the angle over which we have to rotate 
$\rho_0$ in clockwise direction until it becomes parallel to $o_i$.
Since we have fixed the range of the tangent at $\pbot$ to lie in the range~$[\psi_j,\psi_{j+1}]$, 
we can split $\cO(\rho_0)$ into two subsets, 
$\cO(\rho_0)^+ := \{ o_i \in \cO(\rho_0) :\psi(o_i) \geq \psi_{j+1} \}$ and 
$\cO(\rho_0)^- := \{ o_i \in \cO(\rho_0) :\psi(o_i) \leq \psi_j \}$.

Note that $\cO(\rho_0)= \cO(\rho_0)^+\cup \cO(\rho_0)^-$,
because $\psi_j$ and $\psi_{j+1}$ are consecutive angles in~$\Psi$. 
Because the orientation of the tangent at $\pbot$ lies in the range $[\psi_j,\psi_{j+1}]$,
we know that the segments in $\cO(\rho_0)^+$ must be intersected by $\Popt^+$, while the
segments in $\cO(\rho_0)^-$ must be intersected by $\Popt^-$; see Figure~\ref{fig:orientation-new}(i).
Intuitively, the segments in $\cO(\rho_0)^+$ must be intersected by ``the initial part'' of $\Popt$, while
the segments in $\cO(\rho_0)^-$ are intersected by ``the later part''. We will use this
when we define the subproblems in our dynamic program.

In Section~\ref{sec:FPTAS} we defined subproblems
for pairs of grid points~$v,w$. The goal of such a subproblem was to find the
minimum-length convex chain $\Gamma$ such that $\sol(\Gamma)$ intersects a certain subset
$\cO(v,w)$ and whose last edge is~$vw$. The fact that we knew the last edge~$vw$
was crucial to define the set $\cO(v,w)$, since the slope of $vw$ determined
which objects should be intersected by $\sol(\Gamma)$. 
In the current setting this does not work: we could define a subproblem for pairs
$v,w\in Y$, but ``consecutive'' vertices $v,w$ from $Y$ along $\Gamma$ are now
connected by a polyline $Z_{vw}$ whose inner vertices are disjoint from $Y$. 
The difficulty is that the polyline $Z_{vw}$ depends on the segments that need to be 
intersected by $\sol(\Gamma)$. Hence, there is a cyclic dependency between the set of
segments to be intersected by $\sol(\Gamma)$ (which depends on the slope of $zw$,
where $z$ is the vertex of $Z_{vw}$ preceding~$w$) and the vertex~$z$ 
(which depends on the segments that need to be 
intersected by $\sol(\Gamma)$). We overcome this problem as follows.

Similarly to the previous section, we call  a polyline $\Gamma$ from $\pbot$ to some point $v\in Y$ 
a \emph{convex chain} if, together with the line segment $v\pbot$, it forms a convex polygon.
We denote this polygon by $\sol(\Gamma)$.
Consider a convex chain $\Gamma_w$ ending in a point $w\in Y$. 
Let $\rho(w)$ denote the ray from $\pbot$ through $w$ and let $\rho^*(w)$ be the part of this ray 
starting at~$w$. Let $\cO(w)$ be the set of input segments that intersect~$\rho^*(w)$.
Of those segments, $\sol(\Gamma_w)$ must intersect the ones
such that the line~$\ell(o_i)$ containing\footnote{Since the input objects are now segments, the tangent
line~$\ell_w(o_i)$ is just the line containing~$o_i$.} the segment~$o_i$
intersects the half-line containing~$zw$ and ending at~$w$, where $z$ is the (unknown) vertex preceding~$w$.
For a segment $o_i\in\cO(w)$, let $\psi_w(o_i)$ be the angle over which we have to
rotate~$\rho(w)$ in clockwise direction to make it parallel to~$o_i$; see Figure~\ref{fig:orientation-new}(ii).
Let $\Psi(w) = \{ \psi_1,\psi_2,\ldots \}$, 
for all $1\leq j< |\Psi(w)|$,
be the sorted set of (distinct) angles $\psi_w(o_i)$ defined by the segments in $\cO(w)$. 
For an index~$j$ with $1\leq j\leq |\Psi(w)|$, 
define $\cO(w,j) := \{ o_i \in \cO(w) : \psi_w(o_i) \leq \psi_j\}$,
and define $\cO(w,0)=\varnothing$. 
We call $\cO(w,j)$ a \emph{prefix set}.
The key observation is that $\sol(\Gamma_w)$ must intersect the segments from 
some prefix set $\cO(w,j)$,
where $j$ depends on the unknown vertex~$z$ preceding~$w$.
So our dynamic program will try all possible prefix sets~$\cO(w,j)$,
and make sure that subproblems are combined in a consistent manner.
\medskip

We now have everything in place to describe our dynamic-programming table. It consists of
entries $A[w,j]$, where $w$ ranges over all points in~$Y$, and $j$ ranges over all values
for which
~$\cO(w,j)$ is defined.  
For convenience add two special entries, $A[\pbot,0]$ and $A[\pbotc,0]$;
the former will serve as the base case, 
and the latter will 
contain (the value of) the final solution. Note that these are the only ones
for $\pbot$ and $\pbotc$, and that we have at most $|Y|\cdot n=O(n^2)$~entries. 
We define the set $\cO^*(w,j)$ of segments to be covered in a subproblem.
\begin{quote}
For a point $w\in Y$, the set $\cO^*(w,j)$ consists of the segments~$o_i\in \cO$ 
that satisfy one of the following conditions: 
\begin{quote}
\begin{enumerate}[(i)]
\item $o_i$ intersects the clockwise wedge from $\rho_0$ to $\rho(w)$---note that this 
      wedge need not be convex---but not $\rho(w)$ itself, and $o_i\not\in \cO^-(\rho_0)$; or \label{cond:exact-1}
\item $o_i$ intersects $\pbot w$; or
      \label{cond:exact-2}
\item $o_i\in \cO(w,j)$.
      \label{cond:exact-3}
\end{enumerate}
\end{quote}
Furthermore, $\cO^*(\pbot,0) := \varnothing$ and $\cO^*(\pbotc,0) := \cO$.
\end{quote}
We would like now to define $A[w,j]$ to be the minimum length of a convex chain~$\Gamma$ 
from $\pbot$ to $w$ such that all objects in $\cO^*(w,j)$ intersect~$C(\Gamma)$.
There is, however, a technicality to address: the minimum-perimeter polygon that intersects
all segments from $\cO$ need not be convex when we require it to have $\pbot$ as a vertex.
Such a non-convex polygon cannot be the final solution---if the optimum for
a given choice of $\pbot$ is non-convex, then $\pbot$ was not the correct choice---but
it makes a clean definition of our subproblems awkward. Therefore, instead of
first defining the subproblems and then giving the recursive formula, we will immediately
give the recursive formula and then prove that it computes what we want.

For two points $v,w\in Y^+$ (where $Y^+=Y\cup \{\pbotc\}$) with $v\prec w$ and a set $\cO'\subseteq \cO$, let
$L(v,w,\cO')$ be the minimum length of a convex chain~$\Gamma$  from $v$ to $w$ such that
the convex polygon defined by $\Gamma$ and $vw$ is an intersecting set for $\cO'$
and all inner vertices of $\Gamma$ are disjoint from $Y$. Recall that
we can compute $L(v,w,\cO')$ using subroutine~II.
As before, let $\Delta(\pbot,v,w)$ denote the triangle with vertices $\pbot,v,w$. 
\begin{definition} \label{def:dp-exact}
Let $w\in Y^+$ and $j$ be a value for which $\cO[w,j]$ is defined.
Thus $0\leq j\leq |\Psi(w)|$, where we set $|\Psi(w)|:=0$ for $w\in\{\pbot,\pbotc\}$.
For $v\prec w$ and $0\leq j'\leq |\Psi(v)|$, let
\[
\cO^*(w,j,v,j') := \cO^*(w,j) \ \setminus \ \Big( \cO^*(v,j') \cup \{ o_i\in \cO : o_i \mbox{ intersects } \Delta(\pbot,v,w) \} \Big)
\]
and define
\[
A[w,j] \ := \ \left\{ \begin{array}{ll}
                    \hspace*{7mm} 0 & \mbox{if $w=\pbot$}  \\
                    \min\limits_{\substack{v\prec w \\ 0\leq j'\leq |\Psi(v)|}} L(v,w,\cO^*(w,j,v,j')) + A[v,j'] & \mbox{otherwise.} 
                    \end{array}
             \right.
\]
\end{definition}
The next lemma implies that the table entry $A[\pbotc,0]$ defined by this recursive formula 
gives us what we want. Part~(\ref{lem-part-a}) implies that 
$A[\pbotc,0]$ will never return a value that is too small, while part~(\ref{lem-part-b})
implies that for the correct choice of $\pbot$ and range of orientations for
the tangent to $\Popt$ at $\pbot$, the entry $A[\pbotc,0]$ gives us (the value of)
the optimal solution.
\begin{lemma} \label{lem:exact-correctness}
Consider the table entry $A[\pbotc,0]$ defined by Definition~\ref{def:dp-exact} for 
a given point $\pbot$ and range $[\psi_i,\psi_{i+1}]$.
\begin{enumerate}[(a)]
\item \label{lem-part-a} There exists a convex intersecting polygon for $\cO$ of perimeter at most $A[\pbotc,0]$.
\item \label{lem-part-b} If $\pbot$ is a vertex of the minimum-perimeter convex intersecting polygon $\Popt$ 
      for $\cO$, and $\rho_0$ does not intersect $\Popt$, and there is a tangent 
      line~$\ell$ at $\pbot$ whose orientation is in the range $[\psi_i,\psi_{i+1}]$, 
      then $\peri(\Popt)=A[\pbotc,0]$.
\end{enumerate}
\end{lemma}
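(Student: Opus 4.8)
The plan is to prove part~(\ref{lem-part-a}) and part~(\ref{lem-part-b}) separately, both by induction on the partial order $\prec$, using the table entries $A[w,j]$ and an invariant that identifies exactly which segments are ``supposed'' to be intersected by a partial chain ending at $w$ with prefix index $j$.

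For part~(\ref{lem-part-a}), I would prove the following claim by induction along $\prec$: for every $w\in Y^+$ and every valid $j$, if $A[w,j]<\infty$ then there is a convex chain $\Gamma$ from $\pbot$ to $w$ of length at most $A[w,j]$ such that $\sol(\Gamma)$ intersects every segment in $\cO^*(w,j)$, and moreover $\sol(\Gamma)$ lies in the appropriate region (to the correct side of $\rho(w)$, not crossing $\rho_0$). The base case $w=\pbot$ is the degenerate chain of length $0$ with $\cO^*(\pbot,0)=\varnothing$. For the inductive step, the recursion picks some $v\prec w$ and $j'$; by induction there is a chain $\Gamma_v$ to $v$ intersecting $\cO^*(v,j')$, and by the definition of $L(\cdot)$ via Subroutine~II there is a convex chain from $v$ to $w$ whose polygon intersects $\cO^*(w,j,v,j')$, with inner vertices avoiding $Y$. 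The key geometric point is that concatenating these yields a \emph{convex} chain: this needs that $v$ lies below $\ell(v,w)$ and below the tangent directions are consistent — here one uses the angular ordering $\prec$ and the fact that $\cO^*(w,j)\setminus\cO^*(v,j')$ is covered either by $\Delta(\pbot,v,w)$ or by the new sub-chain, so the combined polygon $\sol(\Gamma)$ is contained in $\sol(\Gamma_v)\cup\Delta(\pbot,v,w)\cup(\text{new piece})$ and intersects all of $\cO^*(w,j)$. Finally, taking $w=\pbotc$, $j=0$ gives $\cO^*(\pbotc,0)=\cO$, so the resulting convex chain $\Gamma$ from $\pbot$ to $\pbotc$ has $\sol(\Gamma)$ intersecting all of $\cO$ and perimeter $\mathrm{length}(\Gamma)=|\pbot\pbotc|+\mathrm{length}(\Gamma)$... more precisely the perimeter of $\sol(\Gamma)$ equals $\mathrm{length}(\Gamma)$ since $\pbot=\pbotc$, giving a convex intersecting polygon of perimeter at most $A[\pbotc,0]$.

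For part~(\ref{lem-part-b}), assume $\pbot$, $\rho_0$, and the orientation range $[\psi_i,\psi_{i+1}]$ are the ``correct'' ones for the optimum $\Popt$. One inequality, $\peri(\Popt)\le A[\pbotc,0]$, follows from part~(\ref{lem-part-a}) together with optimality of $\Popt$. For the reverse inequality $A[\pbotc,0]\le\peri(\Popt)$, I would walk along $\bd\Popt$ clockwise from $\pbot$, identifying the sequence of points $\pbot=w_0\prec w_1\prec\cdots\prec w_m\prec\pbotc$ of $Y$-vertices on $\bd\Popt$ (with the convention that if none exist, there is a single chain from $\pbot$ to itself). For each consecutive pair $w_t,w_{t+1}$, the chain of $\bd\Popt$ between them has inner vertices avoiding $Y$, so Subroutine~II's value $L(w_t,w_{t+1},\cdot)$ is at most the length of that chain, \emph{provided we feed it the right segment set}. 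This is where the prefix-set machinery enters: I would show that there is a choice of indices $j_t$ such that $\cO^*(w_t,j_t)$ is exactly the set of segments that $\Popt$'s boundary up to $w_t$ is responsible for intersecting, using Lemma~\ref{lem:fptas-correctness}-style reasoning adapted to the segment/half-plane ordering — conditions~(\ref{cond:exact-1})--(\ref{cond:exact-3}) are designed precisely so that a segment is in $\cO^*(w_t,j_t)$ iff it must be hit by the sub-polygon $\sol(\Gamma_{w_t})$ rather than by the remaining part. Then $\cO^*(w_{t+1},j_{t+1},w_t,j_t)$ is exactly the ``new'' segments, all intersected by the chain from $w_t$ to $w_{t+1}$ (or by $\Delta(\pbot,w_t,w_{t+1})$), so the recursion's term $L(w_t,w_{t+1},\cO^*(\cdots))+A[w_t,j_t]$ is at most the length of $\bd\Popt$ up to $w_{t+1}$, and induction closes the argument at $\pbotc$.

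The main obstacle I expect is part~(\ref{lem-part-b}): establishing that the correct prefix index $j_{t+1}$ at $w_{t+1}$ is \emph{consistent} with the index $j_t$ chosen at $w_t$ — i.e.\ that the segments $\Popt$ assigns to be hit ``after $w_t$ but at or before $w_{t+1}$'' are exactly $\cO(w_{t+1},j_{t+1})\setminus(\text{stuff already handled})$. This requires carefully tracking, for each segment $o_i$ intersecting $\rho^*(w_{t+1})$, whether the supporting line $\ell(o_i)$ meets the half-line through the last edge $z\,w_{t+1}$ of the optimal chain, and relating the relevant angle $\psi_{w_{t+1}}(o_i)$ to the slope of that last edge; the monotonicity of these angles along the convex chain is what makes a single prefix cutoff correct. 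A secondary subtlety is the non-convexity caveat: for a wrong guess of $\pbot$ the recursion may produce a non-convex ``chain'', so in part~(\ref{lem-part-a}) one must be careful that whenever $L$ returns a finite value the combined object really is a convex polygon — this is handled by the convexity check built into Subroutine~II (it returns $+\infty$ on non-convex output) and by the below-$\ell(v,w)$ restriction in the minimization.
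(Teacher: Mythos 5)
Your plan for part~(\ref{lem-part-b}) matches the paper's proof: walk along $\bd\Popt$ through its $Y$-vertices $v_0,\ldots,v_k$, pick the prefix index $j_i$ at $v_i$ from the angle of a tangent line to $\Popt$ at $v_i$, show by a case analysis (partitioning $\Popt$ into the piece $\Popt^{(i)}$, the triangle $\Delta(\pbot,v_{i-1},v_i)$, and the two remaining pieces) that $\Popt^{(i)}$ intersects all of $\cO^*(v_i,j_i,v_{i-1},j_{i-1})$, and then induct to get $A[v_i,j_i]\leq\mylength(\Gamma_{\leq i})$. The difficulty you flag --- consistency of consecutive prefix indices via the slope of the last edge --- is exactly what the tangent-angle definition of $j_i$ and that case analysis resolve.

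Part~(\ref{lem-part-a}) is where your argument has a genuine gap. You claim that concatenating the inductively obtained chain $\Gamma_v$ with the Subroutine~II piece from $v$ to $w$ ``yields a \emph{convex} chain,'' and you defend this by appealing to Subroutine~II's convexity check and to a ``below-$\ell(v,w)$ restriction in the minimization.'' Neither defense works: Subroutine~II only certifies convexity of the \emph{individual} piece $P_i$ (together with its chord $v_{i-1}v_i$), not of the concatenation, and Definition~\ref{def:dp-exact} --- unlike Lemma~\ref{lem:dp-fptas} in the FPTAS --- imposes \emph{no} geometric compatibility constraint between $(v,j')$ and $(w,j)$; the minimum ranges over all $v\prec w$ and all $j'$. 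The paper explicitly warns that the object assembled by this recursion need not be convex (this is precisely why it gives the recursion first and only then proves what it computes). The paper's actual argument for part~(\ref{lem-part-a}) sidesteps convexity entirely: it shows by induction that there are convex polygons $P_1,\ldots,P_k$ with $v_{i-1}v_i$ an edge of $P_i$ and $\sum_i(\peri(P_i)-|v_{i-1}v_i|)=A[w,j]$, such that the (possibly non-convex, possibly overlapping) union $\bigcup_i\bigl(P_i\cup\Delta(\pbot,v_{i-1},v_i)\bigr)$ intersects all of $\cO^*(w,j)$; an edge-sharing accounting bounds the boundary length of this union by $A[w,j]+|\pbot w|$, and only at the very end does one take the \emph{convex hull} of the union, whose perimeter is at most that boundary length. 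Without this (or some equivalent device), your inductive step does not go through, because the quantity $A[\pbotc,0]$ may not be realized by any convex chain at all --- it is only an upper bound on the perimeter of the convex hull of what the recursion assembles.
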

\begin{proof}
$\mbox{}$
\begin{enumerate}[(a)]
\item We will prove by induction on $w$ (with respect to the order~$\prec$) that
      for any pair $(w,j)$ for which $A[w,j]$ is defined, there is a sequence 
      $\seq(w,j) = v_0, v_1,\ldots,v_k$ of points from~$Y$, where
      $\pbot = v_0\prec v_1\prec\cdots\prec v_k = w$, with the following property: 
      there are convex polygons $P_1,\ldots,P_k$ such that 
      \begin{itemize}
      \item $v_{i-1}v_i$ is an edge of $P_i$,
      \item $\sum_{i=1}^k \left( \peri(P_i) - |v_{i-1}v_i| \right) = A[w,j]$
      \item $\sol(w,j) := \bigcup_{i=1}^k \left( P_i \cup \Delta(\pbot,v_{i-1},v_i) \right)$ 
            is an intersecting set for $\cO^*(w,j)$. \\[-2mm]
      \end{itemize}
      
      Observe that the boundary length of $\sol(w,j)$ is at most $A[w,j]+|w\pbotc|$.
      Indeed, each edge $v_{i-1}v_i$ is shared between the polygon~$P_i$
      and the triangle~$\Delta(\pbot,v_{i-1},v_i)$, and so the contribution of the
      polygons $P_i$ to the union boundary is at most 
      $\sum_{i=1}^k \left( \peri(P_i) - |v_{i-1}v_i| \right) = A[w,j]$. 
      (At most, because the $P_i$'s are not required to be
      disjoint.) Since the edges $\pbot v_i$ are shared between consecutive triangles 
      $\Delta(\pbot,v_{i-1},v_i)$ and $\Delta(\pbot,v_{i},v_{i+1})$, the contribution of 
      triangles $\Delta(\pbot,v_{i-1},v_i)$ to the union boundary is at most
      $|\pbot v_0| + |\pbot v_k| = |\pbot w|$.
      Thus the boundary length of $\sol(w,j)$ is at most $A[w,j]+|w\pbotc|$, as claimed.
      It follows that the existence of the sequence $\seq(\pbotc,0)$
      implies part~(a) 
      of the lemma, because $\cO^*(\pbotc,0)=\cO$ and so 
      the convex hull of $\sol(\pbotc,0)$ is a convex intersecting set for $\cO$
      of perimeter at most $A[\pbotc,0]$.
      
      The base case is $(w,j)=(\pbot,0)$. Since $\cO^*(\pbot,0)=\varnothing$,
      the sequence consisting of the single point~$\pbot$ trivially
      has the desired properties in this case, with $\sol(\pbot,0)=\{\pbot\}$.
      
      Now consider a pair $(w,j)$ with $\pbot\prec w$. Let $j'$ be such that
      \[
      A[w,j] = L(v,w,\cO^*(w,j,v,j')) + A[v,j'].
      \]
      By induction, there is a sequence $\seq(v,j')$ with
      the properties stated above. We claim that the sequence $\seq(w,j)$
      we obtain by appending $w$ to $\seq(v,j')$ has the desired properties.
      Let $\seq(w,j) = v_0, v_1,\ldots,v_k$ and note that $v_{k-1} = v$ and $v_k=w$.
      Take $P_k:= \sol(\Gamma)$, where~$\Gamma$ is a 
      minimum-length convex chain~$\Gamma$  from $v$ to $w$ such that
      $\sol(\Gamma)$ is an intersecting set for $\cO^*(w,j,v,j')$.
      The induction hypothesis and the fact that $A[w,j] = L(v,w,\cO^*(w,j,v,j')) + A[v,j']$
      now implies that $\seq(w,j)$ has the first two properties.
      The third property follows from the induction hypothesis and
      the definition of~$\cO^*(w,j,v,j')$.
\item Since $A[\pbotc,0]\geq \peri(\Popt)$ by part~(a), 
      it suffices to prove that $A[\pbotc,0]\leq \peri(\Popt)$.
      To this end, let $\pbot=v_0,v_1,\ldots,v_k=\pbotc$ be the points from $Y$
      on $\bd \Popt$, in clockwise order. Let $\Gamma_i$ denote the part
      of $\bd\Popt$ from $v_{i-1}$ to $v_i$ (in clockwise direction),
      and let $\Gamma_{\leq i}$ be the part of $\bd\Popt$ from $\pbot$ to~$v_i$.
      Let $\Popt^{(i)}$ denote the part of $\Popt$ bounded by the segment~$v_{i-1}v_i$
      and $\Gamma_i$; see Figure~\ref{fig:exact-correctness} for an illustration.
      \begin{figure}
      \begin{center}
      \includegraphics{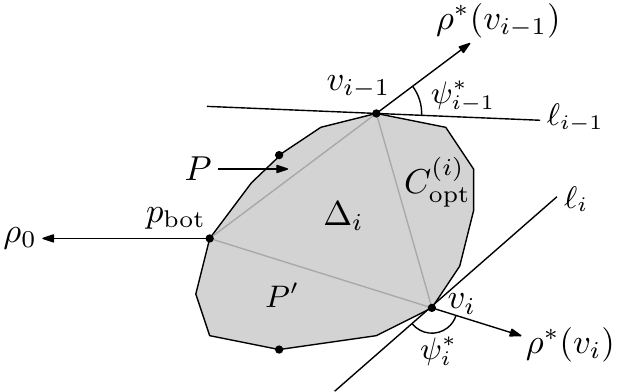}
      \end{center}
      \caption{Illustration for the proof of Lemma~\protect\ref{lem:exact-correctness}.}
      \label{fig:exact-correctness}
      \end{figure}
      
      For $1\leq i<k$, let $\ell_i$ be a line
      tangent to $\Popt$ at~$v_i$, and assume without loss of generality that
      $\ell_i$ is not parallel to any segment in~$\cO$. 
      Recall that $\Psi(v_i)$ denotes the sorted sequence of angles~$\psi_{v_i}(o)$
      defined by the segments in $\cO(v_i)$, that is, by the segments intersecting the
      ray~$\rho^*(v_i)$. Let $\psi_i^*$ be the angle over which we have to rotate
      $\rho(v_i)$ to make it parallel to~$\ell_i$. Finally, let 
      $\Psi(v_i):= \langle \psi^{(i)}_1,\psi^{(i)}_2,\ldots \rangle$ and let
      $j_i$ be the largest index such that $\psi^{(i)}_{j_i} < \psi_i^*$,
      where we define $j_0=j_k=0$.
      \begin{claiminproof}
      $\Popt^{(i)}$ intersects all segments from~$\cO^*(v_i,j_i,v_{i-1},j_{i-1})$.
      \end{claiminproof}
      \begin{proofinproof}
      Partition $\Popt$ into four pieces: $\Popt^{(i)}$, 
      the triangle $\Delta_i := \Delta(\pbot,v_{i-1},v_i)$, and two pieces denoted by~$P$ and $P'$,
      as depicted in Figure~\ref{fig:exact-correctness}. Any segment $o\in \cO$
      intersects at least one of these pieces. We will show that if
      $o$ intersects $\Delta_i$ or~$P$ or~$P'$, then
      $o\not\in  \cO^*(v_i,j_i,v_{i-1},j_{i-1})$, which implies the claim.
      
      Note that for $i=1$ we have $v_{i-1}=\pbot$, in which case $\Delta_i$
      degenerates to a line segment and $P=\varnothing$. Similarly, for $i=k$
      we have $v_{i}=\pbotc$, in which case $\Delta_i$
      degenerates to a line segment and $P'=\varnothing$.
      Finally, when $i=k=1$ we have $\Popt^{(i)}=\Popt$ and the
      claim trivially holds.
      \begin{itemize}
      \item If $o$ intersects $\Delta_i$ then 
            $o\not\in  \cO^*(v_i,j_i,v_{i-1},j_{i-1})$ by definition of 
            $\cO^*(v_i,j_i,v_{i-1},j_{i-1})$.
      \item Now suppose $o$ intersects~$P$ but not $\Delta_i$.
      
            If $o$ intersects $\rho(v_{i-1})$ then it must intersect $\rho^*(v_{i-1})$, 
            since we assumed $o$ does not intersect $\Delta_i$.
            But if $o$ intersects $P$ and $\rho^*(v_{i-1})$, then $\psi_{v_{i-1}}(o) < \psi^*_{i-1}$
            and so $\psi_{v_{i-1}}(o) \leq \psi_{j_{i-1}}^{(i-1)}$. This means that
            $o\in \cO(v_{i-1},j_{i-1}) \subseteq \cO^*(v_{i-1},j_{i-1})$.
            Hence, $o\not\in  \cO^*(v_i,j_i,v_{i-1},j_{i-1})$.
            
            Now assume $o$ does not intersect $\rho(v_{i-1})$.
            Note that $o$ intersects the wedge from $\rho_0$ to $\rho(v_{i-1})$,
            since $o$ intersects $P$. Then the only reason for $o$ to
            not be in $o\in \cO^*(v_{i-1},j_{i-1})$ is if $o\in \cO^-(\rho_0)$.
            But then $o$ cannot intersect $P$.
            Hence, $o\in \cO^*(v_{i-1},j_{i-1})$ and so $o\not\in  \cO^*(v_i,j_i,v_{i-1},j_{i-1})$.
      \item Finally, suppose $o$ intersects~$P'$ but not $\Delta_i$.
            Note that then $v_i\neq \pbotc$, otherwise $P'=\varnothing$.
            
            If $o$ intersects $\rho(v_{i})$ then it must intersect $\rho^*(v_{i})$, 
            since we assumed $o$ does not intersect $\Delta_i$.
            But if $o$ intersects $P'$ and $\rho^*(v_{i})$, then $\psi_{v_i}(o) > \psi^*_i$
            and so $\psi_{v_i}(o) > \psi_{j_i}^{(i)}$. This means that
            $o\not\in \cO(v_{i},j_{i})$. Since $o$ intersects $\rho^*(v_{i})$,
            this means that $o\not\in \cO^*(v_{i},j_{i})$ and
            hence, that $o\not\in  \cO^*(v_i,j_i,v_{i-1},j_{i-1})$.
            
            If $o$ does not intersect $\rho(v_{i})$, then either $o$
            lies entirely in the wedge from $\rho(v_i)$ to $\rho_0$
            or it intersects~$\rho_0$. In the former case $o$ clearly cannot
            satisfy any of the conditions to be in $\cO^*(v_{i},j_{i})$.
            In the latter case the fact that $o$ intersects~$P'$ implies
            that $o\in \cO^-(\rho_0)$, and so
            $o\not\in \cO^*(v_{i},j_{i})$. Hence, $o\not\in  \cO^*(v_i,j_i,v_{i-1},j_{i-1})$.
      \end{itemize}
      \end{proofinproof}
      We now prove by induction on~$i$
      that $A[v_i,j_i]\leq \mylength(\Gamma_{\leq i})$ for all $0\leq i\leq k$. Since 
      $(v_k,j_k)=(\pbotc,0)$ and $\mylength(\Gamma_{\leq k})=\peri(\Popt)$, this will 
      prove part~(b). 
      
      The base case for the induction is 
      when $i=0$. Then we have $A[v_0,j_0] = A[\pbot,0] = 0 = \mylength(\Gamma_0) = \mylength(\Gamma_{\leq 0})$,
      since~$\Gamma_0=\Gamma_{\leq 0}=\{\pbot\}$. Now suppose $i>0$. Then 
      \[
      \begin{array}{lll}
      A[v_i,j_i] & \leq & L(v_{i-1},v_i,\cO^*(v_i,j_i,v_{i-1},j_{i-1})) +  A[v_{i-1},j_{i-1}] \mbox{ (by definition of $A[v_i,j_i]$)}\\
                 & \leq & \mylength(\Gamma_i)+  A[v_{i-1},j_{i-1}] \hfill \mbox{(by the claim above)} \\
                 & \leq & \mylength(\Gamma_i)+  \mylength(\Gamma_{\leq i-1}) \hfill \mbox{(by the induction hypothesis)} \\
                 & = & \mylength(\Gamma_{\leq i}),
      \end{array}
      \]
      which finishes the proof.\qedhere
\end{enumerate}
\end{proof}

\paragraph*{Putting everything together.}
Lemma~\ref{lem:exact-correctness} implies that after solving the dynamic programs
for all choices of $\pbot$ and the range $[\psi_i,\psi_{i+1}]$, we have found
the minimum perimeter intersecting set for $\cO$. (Computing the intersecting
set itself, using the relevant dynamic-program table, is then routine.)
This leads to the proof of Theorem~\ref{thm:2d-exact}.
\begin{proof}[Proof of Theorem~\ref{thm:2d-exact}.]
The number of dynamic programs solved is $O(|Y|\cdot n)=O(n^2)$.
The dynamic-programming tables have $O(n^2)$ entries. 
Computing an entry takes
$O(|Y|\cdot n)=O(n^2)$ calls to Subroutine~II, at $O(n^3\log n)$ time each.
The dynamic programs thus take $O(n^2) \cdot O(n^2) \cdot O(n^2) \cdot O(n^3\log n) = O(n^{9}\log n)$ time. If the optimal solution does not go through any point of $Y$, then by Theorem~\ref{thm:subroutines} it will be found in $O(n^6\log n)$ time. The optimum of these two algorithms is the global optimum.
\end{proof}


\section{Conclusion}

We gave fully polynomial time approximation schemes for the minimum perimeter and minimum area convex intersecting polygon problems for convex polygons. Additionally, we developed a polynomial-time algorithm for the minimum perimeter problem of segments.

It is likely that the running times of our algorithms can be improved further. One could also try to generalize the set of objects, for example, adapting the minimum area algorithm to arbitrary convex objects. We propose the following open questions for further study.
\begin{itemize}
    \item Is there a polynomial-time exact algorithm for the minimum area convex intersecting polygon of segments?
    \item Is there a polynomial-time exact algorithm for minimum perimeter or minimum area convex intersecting polygon of convex polygons, or are these problems \NP-hard?
    \item Is there a polynomial-time approximation scheme for the minimum volume or minimum surface area convex intersecting polytope of convex polytopes in $\Reals^3$? Can we at least approximate the diameter of the optimum solution to these problems?
\end{itemize}
It would be especially interesting to see an \NP-hardness proof for minimum volume or surface area convex intersecting set of convex objects in higher dimensions.

\bibliographystyle{plainurl}
\bibliography{bibliography}

\newpage
\appendix
\section{Visiting rays: on an algorithm of Tan}
\label{app:counterexample}
In this section we give a counterexample to a lemma of Tan~\cite{tan_tour_rays} that is used to establish his results on rays and segments. The paper uses the TSPN framework: given a set of rays in the plane, we want to find the shortest closed curve (a tour) intersecting all the rays.

In~\cite{tan_tour_rays} the concept of a \emph{pseudo-touring rays route} is introduced, which is a tour starting and ending at some fixed point $s$ that is allowed to not visit some rays as long as their supporting line is crossed. The shortest pseudo-tour for a starting point $s$ and ray set $R$ is denoted by POPT$_s(R)$, and the optimum tour visiting the rays and $s$ is denoted by OPT$_s(R)$. Let $\mathcal{H}$ be the convex hull of POPT$_s(R)$ and the starting points of the rays that are not visited by POPT$_s(R)$. We note that \cite{tan_tour_rays}  shows that the  optimal tour and pseudo-tour are unique. We prove the following lemma, which is a contradiction to the statement of Lemma~$6$ in~\cite{tan_tour_rays}:

\begin{lemma} There exists an input ray set $R$ such that the following hold: some ray $r$ is not visited by POPT$_s(R)$, the starting point of $r$ is on $\mathcal{H}$, and  OPT$_s(R)$ does not make a crossing contact with $r$.
\end{lemma}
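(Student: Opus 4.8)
As the claim is purely existential, the plan is to exhibit one explicit instance $(s,R)$ and verify the three requirements by hand. I would place $s$ at the origin and choose, for a tiny $\delta>0$, three rays lying above the $x$-axis but whose supporting lines pass within distance $O(\delta)$ of $s$, for instance
\[
r_1=\{(\delta,y):y\geq 2\},\qquad r_2=\{(x,\delta):x\geq 1\},\qquad r_3=\{(x,2\delta):x\leq -1\},
\]
with supporting lines $x=\delta$, $y=\delta$, $y=2\delta$ and starting points $p_{r_1}=(\delta,2)$, $p_{r_2}=(1,\delta)$, $p_{r_3}=(-1,2\delta)$. The designated ray will be $r:=r_1$.

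\textbf{Determining $\mathrm{POPT}_s(R)$ and $\mathcal H$.} Each ray lies at distance roughly $1$ from $s$, so visiting any ray costs about $2$, whereas a pseudo-tour need only pass through $s$ and cross the three supporting lines, which a loop around $s$ of perimeter $\Theta(\delta)$ achieves. Hence for $\delta$ small enough the optimal pseudo-tour is such a tiny loop, it visits none of the rays, and by the uniqueness of the optimal pseudo-tour shown in~\cite{tan_tour_rays} this loop is $\mathrm{POPT}_s(R)$. Therefore $\mathcal H=\conv\bigl(\mathrm{POPT}_s(R)\cup\{p_{r_1},p_{r_2},p_{r_3}\}\bigr)$, and since the loop sits in a tiny neighbourhood of $s$ while the three starting points are far away and in convex position, each $p_{r_i}$ is a vertex of $\mathcal H$. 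This already yields requirement (i), that $r_1$ is not visited by $\mathrm{POPT}_s(R)$, and requirement (ii), that $p_{r_1}\in\mathcal H$.

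\textbf{Determining $\mathrm{OPT}_s(R)$ and its contact with $r_1$.} I would argue that a minimum-length tour through $s$ meeting all three rays is obtained by touching each $r_i$ at its starting point $p_{r_i}$: each ray points away from $s$, so $p_{r_i}$ is the point of $r_i$ nearest $s$, and moving a touch point outward only enlarges the convex hull of $\{s\}$ together with the chosen touch points, whose perimeter lower-bounds the tour length. Since all $p_{r_i}$ have positive $y$-coordinate, $s$ lies on the boundary of $\conv\{s,p_{r_1},p_{r_2},p_{r_3}\}$, so this convex quadrilateral is itself a feasible tour through $s$ and hence equals $\mathrm{OPT}_s(R)$, unique by the uniqueness lemma of~\cite{tan_tour_rays}. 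Its only vertex on the line $y=2$ is $p_{r_1}=(\delta,2)$, while the rest of the quadrilateral lies in $\{y<2\}$, so $\mathrm{OPT}_s(R)$ meets $r_1\subseteq\{y\geq 2\}$ only at the single point $p_{r_1}$, which is the endpoint of $r_1$. A tour that merely passes through the tip of a ray and turns back does not cross it, so $\mathrm{OPT}_s(R)$ does not make a crossing contact with $r_1$, establishing requirement (iii), and the lemma follows.

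\textbf{Main obstacle.} The delicate part is the optimality certification: confirming that $\mathrm{POPT}_s(R)$ is indeed the tiny central loop and that $\mathrm{OPT}_s(R)$ is indeed the convex quadrilateral on $s$ and the three ray-tips. I would keep $\delta$ small so that ``visit the ray'' is far costlier than ``cross its supporting line'' for the pseudo-tour, use the elementary bound that a closed curve through a finite point set has length at least the perimeter of its convex hull, and verify by a short convexity argument that the minimum over the choice of touch points is attained at the ray endpoints; the uniqueness (and convexity) of optimal tours and pseudo-tours for rays, already available from~\cite{tan_tour_rays}, then leaves only a brief case check to rule out alternative tour shapes.
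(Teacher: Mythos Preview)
Your construction is correct and establishes the lemma, but it differs from the paper's proof in both the instance and the verification. The paper fixes a single concrete configuration with $s=(0,-\epsilon)$ and three explicitly given rays, then pins down $\mathrm{POPT}_s$ as the doubled segment $s\leftrightarrow(0,1)$ by a direct length lower bound, and certifies $\mathrm{OPT}_s$ by enumerating the three possible visitation orders and comparing numerical lengths. Your instance instead drives all three supporting lines within $O(\delta)$ of $s$, so any optimal pseudo-tour is automatically an $O(\delta)$ loop that misses every ray (no need to identify it exactly), and you certify $\mathrm{OPT}_s$ structurally: the four points $s,q_1,q_2,q_3$ are each extreme in a fixed coordinate direction, so the hull is always a quadrilateral with a fixed cyclic order, and the two edges incident to each $q_i$ have lengths strictly increasing as $q_i$ slides outward along its ray, whence the minimum-perimeter hull---hence the minimum tour---is attained at the ray tips. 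This monotonicity argument is cleaner than the paper's order-by-order distance comparison and makes clear that no numerics are essential; the paper's computation, on the other hand, leaves nothing implicit. Two small points worth tightening: you do not actually need uniqueness of $\mathrm{POPT}_s$ for step~(i)--(ii), only the length bound (every optimal pseudo-tour has perimeter $O(\delta)$, hence visits no ray and sits near $s$, so $p_{r_1}$ is a vertex of $\mathcal H$ regardless of which optimum is chosen); and for the hull-monotonicity step you should state explicitly that each of $s,q_1,q_2,q_3$ is extremal in one axis direction, since that is what fixes the cyclic order and makes the ``enlarges the hull'' claim literally true.
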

\begin{proof}
We present an  input instance consisting of a starting point $s=(0,-\epsilon)$, for some $\epsilon<0.1$, and three rays defined as follows (see also Figure~\ref{fig:counterexample}):
\begin{itemize}
    \item $r_1$ has $y=\frac45 x + \frac45$ as supporting line, $o_1 = (-1,0)$ as starting point and the ray is pointing downwards.
    \item $r_2$'s supporting line has the equation $y=1$, starting point $o_2 = (-1,1)$ and the ray is pointing towards the right.
    \item $r_3$'s supporting line has the equation $y=0$, its starting point is $o_3 = (2,0)$ and the ray is pointing towards the right.
\end{itemize}

Let $T_p$ be a pseudo-tour going from $s$ to point $p_1 = (0,1)$ and back to $s$. $T_p$ is a feasible pseudo-tour starting and ending at $s$, since it reflects on $r_2$ and makes crossing contact with the supporting lines of $r_1$ and $r_3$. Furthermore, let $T$ be a tour starting at $s$, visiting $r_1$ at $o_1$, $r_2$ at point $(\frac12,1)$ and $r_3$ at $o_3$ before returning to point $s$. This is a feasible tour starting and ending at $s$ since it visits all three rays of the input instance.

See Figure~\ref{fig:counterexample} for an illustration of the input instance as well as $T_p$ and $T$.

We claim that $T_p$ and $T$ are an optimal pseudo-tour and tour respectively for the input instance. The lemma would then directly follow for $r=r_1$, since $r_1$ is not visited  by $T_p$, its origin $o_1$ is on $\mathcal{H}$ which is defined by the vertices $o_1,p_1,o_3$ and $s$, and $T$ does not make a crossing contact with $r_1$ at $o_1$ (it does what Tan~\cite{tan_tour_rays} calls a bending contact). Since the optimal tour and pseudo tour are unique the lemma then follows.

\begin{figure}[t]
    \centering
    \includegraphics[scale=1.2]{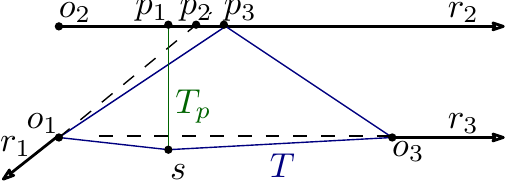}
    \caption{A counterexample to the statement of Lemma~$6$ from~\cite{tan_tour_rays}}
    \label{fig:counterexample}
\end{figure}

Regarding the optimality of $T_p$, note that any feasible pseudo-tour must either reflect on $r_2$ or make a crossing contact with the supporting line of $r_2$. Since the distance between $s$ and the supporting line of $r_2$ is $1+\epsilon$, any feasible pseudo-tour for the instance must therefore have a total length of at least $2+2\epsilon$. Since $T_p$ has a total length of $2+2\epsilon$ it must be optimal.

Regarding the optimality of $T$, we first observe that by the triangle inequality the optimal tour must consist of $4$ line segments with endpoints at $s$, $r_1$, $r_2$ and $r_3$. We first show that $T$ is shortest among all tours that start from $s$, visit $r_1$ followed by $r_2$ and then $r_3$ before returning to $s$. We will then show that any tour visiting the rays in a different order must have a strictly higher cost.

By construction, the shortest possible line segment connecting $s$ and $r_1$ is $so_1$ and similarly the shortest possible line segment connecting $s$ and $r_3$ is $so_3$. Furthermore, the shortest possible path from $r_1$ to $r_3$ through $r_2$ is the one from $o_1$ to $p_3$ to $o_3$. Therefore $T$ is the shortest possible tour that goes from $s$ to $r_1$ to $r_2$ and then $r_3$ before returning to $s$. Note that the length of $T$ is exactly $\sqrt{1+\epsilon^2} + 2\sqrt{1.5^2+1} + \sqrt{4+\epsilon^2}$ which is strictly less than $6.8$ for any $\epsilon<0.1$.

The only two other distinct possible orders of visitation are $s,r_2,r_1,r_3$ and then going back to $s$, and $s,r_1,r_3$ and $r_2$ before returning to $s$. The first one has a total length of at least $(1+\epsilon)+1 + 3 + \sqrt{4+\epsilon^2}$ by taking the respective minimum distances between the objects. This total length is strictly greater than $7$. In a similar way, the second one has a total length of at least $\sqrt{1+\epsilon^2} + 3 + 2\sqrt{2}$ which is strictly greater than $6.8$. This concludes the proof.
\end{proof}

\end{document}